\theoremstyle{plain}
\newtheorem{theorem}{Theorem}
\newtheorem{lemma}{Lemma}
\newtheorem{assumption}{Assumption}
\theoremstyle{definition}
\newtheorem{remark}{Remark}
\newcommand*{\QEDA}{\hfill\ensuremath{\square}}
\newcommand{\ind}{\mathds{1}}
\DeclareMathOperator*{\plim}{plim}
\renewcommand{\hat}{\widehat}
\renewcommand{\tilde}{\widetilde}
\begin{document}
\title{Inference for Rank-Rank Regressions\thanks{We thank Jin Hahn and Andres Santos for helpful discussions and Albert Belkov, Fariba Dorpoush, Owen Groves, Philipp Kropp, Pawel Morgen, and Nicholas Vieira for excellent research assistance.}}

\author{Denis Chetverikov\thanks{Department of Economics, University of California at Los Angeles, 315 Portola Plaza, Bunche Hall, Los Angeles, CA 90024, USA; E-Mail address: \href{mailto:chetverikov@econ.ucla.edu}{\texttt{chetverikov@econ.ucla.edu}}.} \and Daniel Wilhelm\thanks{Departments of Statistics and Economics, Ludwig-Maximilians-Universität München, Akademiestr. 1, 80799 Munich, Germany; E-Mail address: \href{mailto:d.wilhelm@lmu.de}{\texttt{d.wilhelm@lmu.de}}. The author gratefully acknowledges financial support from the European Research Council (Starting Grant No. 852332)}\vspace{1cm}}
\maketitle
\thispagestyle{empty}
\begin{abstract}
The slope coefficient in a rank-rank regression is a popular measure of intergenerational mobility. In this article, we first show that commonly used inference methods for this slope parameter are invalid. Second, when the underlying distribution is not continuous, the OLS estimator and its asymptotic distribution may be highly sensitive to how ties in the ranks are handled. Motivated by these findings we develop a new asymptotic theory for the OLS estimator in a general class of rank-rank regression specifications without imposing any assumptions about the continuity of the underlying distribution. We then extend the asymptotic theory to other regressions involving ranks that have been used in empirical work. Finally, we apply our new inference methods to two empirical studies on intergenerational mobility, highlighting the practical implications of our theoretical findings.
\end{abstract}

\newpage

\section{Introduction}
Regressions involving ranks are widely used in empirical work in economics. A notable example is a rank-rank regression for measuring the persistence in socioeconomic status across generations. \cite{Dahl:2008tg} and \cite{Chetty:2014tr} have been influential in promoting this approach and a vast and fast-growing empirical literature is using such and related regressions for the study of intergenerational mobility in different socioeconomic outcomes, countries, regions, and time periods.\footnote{Some examples are \cite{Olivetti:2015tt}, \cite{Black:2019oi}, \cite{Abramitzky:2021ii}, \cite{Fagereng:2021wb}, \cite{Nybom:2024aa}, and \cite{Ward:2023uu}. Recent surveys by \cite{Deutscher:2023oo} and \cite{Mogstad:2023uu} provide numerous further examples, references, and some discussions of reasons for choosing the rank-rank specification.} The resulting measures are crucial inputs to broader political and public debates about inequality, about the importance of the family and the neighborhood into which children are born, and about how to create opportunities for children to rise out of poverty (\cite{Mogstad:2023uu}). Beyond intergenerational mobility, regressions involving ranks are used in a range of other areas such as behavioral economics (e.g., \cite{Huffman:2022lk}), development (e.g., \cite{Badge:2016yt}), education (e.g., \cite{Murphy:2020ii}), health (e.g., \cite{Gronqvist:2020oi}), industrial organization (e.g., \cite{Cage:2019oi}), labor (e.g., \cite{Faia:2023oi}), migration (e.g., \cite{Ward:2022jh}), and urban economics (e.g., \cite{Lee:2017uu}).

In its simplest form, a rank-rank regression in the intergenerational mobility literature consists of performing two steps: first, rank a child's and their parent's socioeconomic status (say, income) in their respective distributions and, second, run a regression of the child's rank on a constant and the parent's rank. The slope coefficient in this regression is then interpreted as a relative measure of intergenerational mobility. A small value of the slope indicates low dependence of the child's position in their income distribution on the parent's position in their income distribution, and thus high mobility. Applied work often interprets the slope coefficient as the rank correlation of the two incomes (e.g., \citet[p. 1555]{Chetty:2014tr} or \citet[p. 991]{Deutscher:2023oo}, which facilitates the interpretation of its magnitude as a parameter taking values between minus one and one.

In this paper, we develop an asymptotic theory for the OLS estimator of such rank-rank regressions. First, we demonstrate that commonly used inference methods for rank-rank regressions are invalid. Specifically, we show that neither the homoskedastic nor the Eicker-White robust variance estimators consistently estimate the asymptotic variance of the OLS estimator. In fact, their probability limits may be too large or too small, depending on the shape of the copula of the two variables to be ranked. In consequence, inference based on these commonly used variance estimators may be conservative or fail to satisfy coverage criteria. The intuitive reason for this is that these variance estimators ignore the estimation error in the ranks. In addition, we show that, in the presence of pointmasses in the underlying distribution, the regression estimand, the OLS estimator and its asymptotic distribution may be highly sensitive to the way in which ties in the ranks are handled.

Motivated by these findings, we derive the asymptotic distribution of the OLS estimator in a general rank-rank regression with covariates and without assumptions on whether the underlying distribution is continuous or not. In the special case in which the underlying distribution is continuous and there are no covariates, the OLS estimator is equal to Spearman's rank correlation, and our limiting distribution coincides with that derived by \cite{Hoeffding:1948uu}. In general, however, the OLS estimator is not equal to Spearman's rank correlation and \cite{Hoeffding:1948uu}'s results do not apply. In particular, in the presence of pointmasses, our result shows how the regression estimand, the OLS estimator, and the asymptotic variance all depend on the definition of the ranks, i.e. how ties in the ranks are handled. Inference on the regression coefficients can be based either on the plugin estimator of the asymptotic variance\footnote{A software implementation is provided in the R package \texttt{csranks}, available on CRAN.}  or on the nonparametric bootstrap.

Finally, in two empirical applications, we illustrate the importance of employing valid inference methods in rank-rank regressions and the sensitivity of results to how ties in the ranks are handled.

Our paper contributes to the general literature on nonparametric rank statistics. Since the literature is large, we provide here only some key references, referring an interested reader, for example, to a recent review \cite{C23}. As mentioned above, \cite{Hoeffding:1948uu} derived the asymptotic distribution for Spearman's rank correlation in the case of continuous distributions. \cite{N07} defined rank correlation measures for noncontinuous distributions, using a specific way of handling ties, with the extension satisfying several intuitive axioms that any measure of concordance between random variables should satisfy. \cite{MQ10} extended the results of \cite{N07} to cover the case of more than two variables. \cite{GNR13} studied the problem of estimating rank correlations in the multivariate case with discontinuities in the distribution. \cite{OL16} derived the asymptotic normality result and the asymptotic variance formula for Spearman's rank correlation for variables with finite support. We note here that these extensions are, although relevant, quite different from our work. In particular, we study the OLS estimator in general regression specifications involving ranks, which coincide with Spearman's rank correlation only in the special case in which the marginal distributions are continuous and there are no covariates. Otherwise, our estimands and estimators differ from those in the references listed here. 

Our work is also related to \cite{Klein:2020oi}, \cite{Mogstad:2024aa}, and \cite{Mogstad:2024dd} but the key difference is that they focused on inference on ranks of population parameters whereas we consider inference on features of the distribution of ranked observations.

\subsection{Survey on the Use of Regressions Involving Ranks}
\label{sec: survey}

Empirical researchers often transform variables into ranks before using them in a regression and then employ standard methods for inference that do not account for the estimation error in the ranks. To document this practice, we used Google Scholar to search for articles (and their appendices) published between January 2013 and February 2024 containing the words ``rank'' and ``regression''. We restricted the search to the journals American Economic Review (excluding comments, P\&P), Journal of Political Economy (excluding JPE Micro and JPE Macro), Quarterly Journal of Economics, and Review of Economic Studies. We dropped all theoretical articles without an empirical application and those that used the word ``rank'' in a different context, leading to a sample of 62 articles.

Many of the articles contain a large number of different regression specifications involving ranks, but we only record which specifications (``rank-rank'', ``level-rank'', ``rank-level''), which types of estimators (``OLS'', ``TSLS'', ``nonparametric'', ``other''), and which types of standard errors (``homoskedastic/unknown'', ``robust'', ``clustered'', ``bootstrap'', ``other'', ``none'') occur in a paper. ``Rank-rank'' refers to a regression in which both the outcome and at least one regressor have been transformed into ranks before running the regression. Similarly, ``level-rank'' (``rank-level'') refers to a regression in which at least one regressor (only the outcome) but not the outcome (none of the regressors) has been transformed into ranks. We obtain a total of 153 regressions, i.e. (paper $\times$ regression specification $\times$ estimator type $\times$ standard error type) combinations, in the sample.

\begin{table*}\centering
\begin{tabular}{lrrr}
\toprule
 & level-rank & rank-level & rank-rank\\
\midrule
total & 83 & 30 & 40\\
\\  \multicolumn{4}{l}{{\it Panel A: by type of estimator}}\\
OLS & 33 & 19 & 20\\
TSLS & 4 & 3 & 1\\
nonparametric & 34 & 5 & 18\\
other & 12 & 3 & 1\\
\\ \multicolumn{4}{l}{{\it Panel B: by type of standard error}}\\
homoskedastic/unknown & 13 & 4 & 13\\
robust & 9 & 6 & 0\\
clustered & 23 & 9 & 5\\
bootstrap & 0 & 1 & 1\\
other & 0 & 1 & 0\\
none & 38 & 9 & 21\\
\bottomrule
\end{tabular}
\caption{Number of regressions by specification, type of estimator, and type of standard error. The regressions are from a sample of articles published between January 2013 and February 2024 in American Economic Review, Journal of Political Economy, Quarterly Journal of Economics, and Review of Economic Studies.}
\label{tab: lit summary stats}
\end{table*}

Table~\ref{tab: lit summary stats} categorizes the 153 regressions by specification (in the different columns), by type of estimator (Panel A), and by type of standard error (Panel B). First, consider the 40 regressions involving rank-rank specifications. Half of them (20) were estimated by OLS. For many regressions (11), it was not specified which method for the computation of standard errors was used. In most of these, we suspected the use of homoskedastic variance estimators and thus grouped them together with the one paper (2 regressions) that explicitly indicated ``homoskedastic'' standard errors. Many regression results were presented without standard errors (21). Not a single paper employed robust standard errors.

Level-rank specifications occurred in 83 of the regressions, most of them estimated by OLS (33) or nonparametric methods (34). For these regressions, clustering or not reporting standard errors was common. Rank-level specifications occurred somewhat less frequently in our sample (30).

To gain some insight into the topics of the articles we plotted the number of times each top-level JEL-code appears on the articles that contain regressions involving ranks, but do not report the results here to save space. Labor and Demographic Economics (JEL codes starting with ``J'') are the most frequent, but health, education, public economics, and microeconomics are also listed as classifications by a substantial share of the articles. Overall, the results indicate that regressions involving ranks are used in a variety of subfields of economics.

\section{Motivation}
\label{sec: motivation}

To motivate the subsequent theoretical developments, in this section we first show that commonly used inference methods for rank-rank regressions are not valid, even in the simplest case when the underlying variables are continuously distributed and there are no covariates. We then explain how the presence of pointmasses or covariates affects the estimand and the statistical properties of the OLS estimator.

\subsection{Rank-Rank Regressions}\label{sub: notations rank-rank}
For concreteness, consider a child's income $Y$ and their parent's income $X$. Suppose we are interested in measuring intergenerational mobility by running a regression of the child's income rank on a constant and the parent's income rank. In this section, we assume there are no further covariates. The slope coefficient from this regression reflects the persistence of the two generations' positions in their respective income distributions. A small value of the slope indicates low persistence and thus high mobility.

Let $F$ be the joint distribution of the two random variables $X$ and $Y$, and let $F_X$ and $F_Y$ be the corresponding marginals. Suppose we observe an i.i.d. sample $\{(X_i,Y_i)\}_{i=1}^n$ from $F$. The rank-rank regression involves ranks of $X_i$ and $Y_i$. To deal with potential ties, we consider a general definition of the rank: for a fixed, user-specified $\omega\in[0,1]$, let 
$$R_X(x) := \omega F_X(x) + (1-\omega)F_X^-(x),\quad x\in\mathbb R, $$
where $F_X^-(x):=P(X<x)$. We then define the rank of $X_i$ as $R_i^X:=\hat{R}_X(X_i)$, where
\begin{equation}\label{eq: def rank}
    \hat{R}_X(x) := \omega\hat{F}_X(x) + (1-\omega)\hat{F}_X^-(x)+\frac{1-\omega}{n},\quad x\in\mathbb R,
\end{equation}
is an estimator of $R_X(x)$ and $\hat F_X^-(x) = n^{-1}\sum_{i=1}^n \ind\{X_i < x\}$ is an estimator of $F_X^-(x)$. This definition of the rank is such that a large value of $X_i$ is assigned a large rank. If $F_X$ is continuous, then the probability of a tie among $X_1,\ldots,X_n$ is zero and the rank $R_i^X$ is the same for all values of $\omega$. If $F_X$ is not continuous, then different choices of $\omega$ lead to definitions of the rank that handle ties differently. For instance, if $\omega=1$ then $R_i^X=\hat{F}_X(X_i)$ and tied individuals are assigned the largest possible rank. Similarly, $\omega=0$ leads to a definition of the rank that assigns to tied individuals the smallest possible rank. If $\omega=1/2$, then $R_i^X$ is the mid-rank as defined in \cite{Hoeffding:1948uu}, which assigns to tied individuals the average of the smallest and largest possible ranks. Table~\ref{tab: example ranks} illustrates the different definitions of ranks in an example with ties. The quantities $F_Y$, $F_Y^-$, $R_Y$, and their estimators are defined analogously for $Y$. We assume that $\omega$ is the same value in $R_X$ and $R_Y$, so that ranks are defined consistently for both variables $X$ and $Y$. Throughout the paper, this value of $\omega$ is fixed and chosen by the researcher, so it does not appear as argument or index anywhere.

\begin{table*}\centering
    \begin{tabular}{l|cccccccccc}
    \toprule
    $i$                                    & 1   & 2   & {\bf 3  } & {\bf 4  } & 5   & 6   & {\bf 7  } & {\bf 8  } & {\bf 9  } & {\bf 10 }\\
    $X_i$                                  & 3   & 4   & {\bf 7  } & {\bf 7  } & 10  & 11  & {\bf 15 } & {\bf 15  }& {\bf 15 } & {\bf 15 }\\
    \midrule
    smallest rank: $R_i^X$ for $\omega=0$    & 0.1 & 0.2 & {\bf 0.3} & {\bf 0.3} & 0.5 & 0.6 & {\bf 0.7} & {\bf 0.7} & {\bf 0.7} & {\bf 0.7} \\
    mid-rank: $R_i^X$ for $\omega=0.5$     & 0.1 & 0.2 & {\bf 0.35} & {\bf 0.35} & 0.5 & 0.6 & {\bf 0.85} & {\bf 0.85} & {\bf 0.85} & {\bf 0.85} \\
    largest rank: $R_i^X$ for $\omega=1$     & 0.1 & 0.2 & {\bf 0.4} & {\bf 0.4} & 0.5 & 0.6 & {\bf 1  } & {\bf 1  } & {\bf 1  } & {\bf 1  }\\
    \bottomrule
    \end{tabular}
    \caption{Example of three different definitions of ranks}
    \label{tab: example ranks}
\end{table*}

In a rank-rank regression, we first compute the parent's income rank $R_i^X$ and the child's income rank $R_i^Y$. Then, we run a regression of $R_i^Y$ on a constant and $R_i^X$. The OLS estimator of the slope parameter can then be written as the sample covariance of $R_i^Y$ and $R_i^X$, $S_{YX}:=\frac{1}{n}\sum_{i=1}^n(R_i^Y-\bar{R}^Y)(R_i^X-\bar{R}^X)$ where $\bar{R}^X$ and $\bar{R}^Y$ are sample averages of $R_i^X$ and $R_i^Y$, divided by the sample variance of $R_i^X$, $S_X^2 := \frac{1}{n} \sum_{i=1}^n (R_i^X-\bar{R}^X)^2$:
\begin{equation*}
\hat{\rho} := \frac{S_{YX}}{S_X^2}.
\end{equation*}
Researchers then perform inference by computing standard errors and confidence intervals using the usual variance formulas for the OLS estimator. In particular, these could be the homoskedastic or the Eicker–White robust variance formulas:
$$
\hat{\sigma}^2_{hom} := \frac{1}{nS_X^2}\sum_{i=1}^n \hat{\varepsilon}_i^2 \quad\text{and}\quad \hat{\sigma}^2_{EW} := \frac{1}{nS_X^4}\sum_{i=1}^n \hat{\varepsilon}_i^2(R_i^X-\bar{R}^X)^2,
$$
where $\hat{\varepsilon}_i := R^Y_i - \bar{R}^Y -\hat{\rho}(R^X_i-\bar{R}^X)$, $i=1,\dots,n$, is the residual from the regression. The estimator $\hat\sigma_{hom}^2$ is suitable when regression errors are believed to be homoskedastic whereas $\hat\sigma_{EW}^2$ is robust to heteroskedasticity. However, as we now demonstrate, these formulas yield inconsistent estimators of the asymptotic variance of $\hat\rho$.

\subsection{Inconsistency of OLS Variance Formulas}
\label{sec: motivation continuous}

To simplify the discussion, suppose that $F_X$ and $F_Y$ are both continuous, so that with probability one there are no ties in the sample. In this case, the OLS estimator is equal to Spearman's rank correlation, which is defined as $\hat{\rho}_S := \frac{S_{YX}}{S_XS_Y}$ with $S_Y^2$ denoting the sample variance of $R_i^Y$:
\begin{equation}\label{eq: OLS estimator}
  \hat{\rho}  = \hat\rho_S \frac{S_Y}{S_X} = \hat{\rho}_S,
\end{equation}
with probability one.\footnote{To see this, note that by continuity of the distributions, with probability one there are no ties in the data and $R_1^Y,\dots,R_n^Y$ is just a reordering of $R_1^X,\dots,R_n^X$. Therefore, $S_X=S_Y$ with probability one.} Thus, the OLS estimator of the rank-rank regression and Spearman's rank correlation are numerically identical in finite samples. As a result, they must have the same asymptotic properties, which we now discuss. 

First, it is well-known that the probability limit of $\hat{\rho}_S$ is
\begin{equation*}
    \rho = 12 \int_0^1\int_0^1 C(u,v)dudv-3 = Corr(F_X(X),F_Y(Y)),
\end{equation*}
with $C$ denoting the copula of the pair $(X,Y)$, i.e. $C(x,y) := P(F_X(X)\leq x,F_Y(Y)\leq y)$ for all $x,y\in[0,1]$. Second, \cite{Hoeffding:1948uu} showed that
$$\sqrt{n}(\hat{\rho}_S-\rho) \to_D N(0,\sigma^2), $$
where
\begin{equation}\label{eq: avar spearman}
  \sigma^2 := 9 Var\Big((2F_X(X)-1)(2F_Y(Y)-1) + 4\psi_X(X) + 4\psi_Y(Y) \Big)
\end{equation}
and
\begin{align}
    \psi_X(x) &:= \int (F(x,y) - F_X(x)F_Y(y))dF_Y(y)\label{eq: psi x def},\quad x\in\mathbb R,\\
    \psi_Y(y) &:= \int (F(x,y) - F_X(x)F_Y(y))dF_X(x)\label{eq: psi y def},\quad y\in\mathbb R.
\end{align}

On the other hand, the probability limits of the variance estimators $\hat{\sigma}^2_{hom}$ and $\hat{\sigma}^2_{EW}$ are given by the following lemma.
\begin{lemma}\label{lem: plims}
    Let $\{(X_i,Y_i)\}_{i=1}^n$ be an i.i.d. sample from $F$ with continuous marginal distributions $F_X$ and $F_Y$. Then
    \begin{equation}\label{eq: hom plim}
       \sigma^2_{hom} := \plim_{n\to\infty}\hat{\sigma}^2_{hom} = 1-\rho^2
    \end{equation}
    and, letting $M_{kl} := E[(F_X(X)-1/2)^k(F_Y(Y)-1/2)^l]$, $k,l\in\{1,2,3\}$,
    \begin{equation}\label{eq: EW plim}
        \sigma_{EW}^2 := \plim_{n\to\infty}\hat{\sigma}_{EW}^2 = 144\left( M_{22} -2\rho M_{31} + \frac{\rho^2}{80}\right).
    \end{equation}
\end{lemma}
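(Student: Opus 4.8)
The plan is to exploit the fact that, by continuity of $F_X$ and $F_Y$, with probability one there are no ties, so that each $R_i^X$ equals the normalized rank of $X_i$, i.e.\ $R_i^X = \hat F_X(X_i)$ and likewise $R_i^Y = \hat F_Y(Y_i)$ (for every $\omega$), and the multiset $\{R_1^X,\dots,R_n^X\}$ is exactly $\{1/n,2/n,\dots,1\}$, and similarly for $Y$. Consequently $\bar R^X = \bar R^Y = (n+1)/(2n)$ and $S_X^2 = S_Y^2 = (n^2-1)/(12n^2)$, so that $S_X^2 \to 1/12$ and $S_X^4 \to 1/144$ deterministically. I will also use that $\hat\rho \to_p \rho$, which follows from Hoeffding's result quoted above together with $\hat\rho = \hat\rho_S$ w.p.\ one.

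For \eqref{eq: hom plim}, the argument is purely algebraic once we know $S_X = S_Y$. Writing $a_i := R_i^X - \bar R^X$ and $b_i := R_i^Y - \bar R^Y$, one has $\frac1n\sum_i \hat\varepsilon_i^2 = S_Y^2 - 2\hat\rho\, S_{YX} + \hat\rho^2 S_X^2$; substituting $\hat\rho = S_{YX}/S_X^2$ and $S_Y^2 = S_X^2$ collapses this to $S_X^2 - S_{YX}^2/S_X^2$, whence $\hat\sigma^2_{hom} = 1 - S_{YX}^2/S_X^4 = 1 - \hat\rho^2$. Taking probability limits and using $\hat\rho \to_p \rho$ gives $\sigma^2_{hom} = 1 - \rho^2$.

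For \eqref{eq: EW plim}, I expand $\hat\varepsilon_i^2 = b_i^2 - 2\hat\rho\, a_i b_i + \hat\rho^2 a_i^2$, so that
\[
\hat\sigma^2_{EW} = \frac{1}{n S_X^4}\sum_{i=1}^n \hat\varepsilon_i^2 a_i^2 = \frac{1}{S_X^4}\left( \frac1n\sum_{i=1}^n a_i^2 b_i^2 - 2\hat\rho\,\frac1n\sum_{i=1}^n a_i^3 b_i + \hat\rho^2\,\frac1n\sum_{i=1}^n a_i^4\right).
\]
It therefore suffices to show the three convergences $\frac1n\sum_i a_i^4 \to_p 1/80$, $\frac1n\sum_i a_i^3 b_i \to_p M_{31}$, and $\frac1n\sum_i a_i^2 b_i^2 \to_p M_{22}$; combined with $S_X^4 \to 1/144$, $\hat\rho \to_p \rho$, and the continuous mapping theorem, this yields $\sigma^2_{EW} = 144(M_{22} - 2\rho M_{31} + \rho^2/80)$, using $E[(F_X(X)-1/2)^4] = \int_0^1 (u-1/2)^4\,du = 1/80$.

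The bulk of the work, and the main obstacle, is establishing these three limits, for which the key step is to replace the empirical ranks by the population ranks uniformly over $i$. Since there are no ties, $a_i - (F_X(X_i) - 1/2) = (\hat F_X(X_i) - F_X(X_i)) - \tfrac{1}{2n}$, so $\max_i |a_i - (F_X(X_i) - 1/2)| \le \|\hat F_X - F_X\|_\infty + \tfrac{1}{2n} \to 0$ a.s.\ by the Glivenko--Cantelli theorem, and likewise for $b_i$ versus $F_Y(Y_i) - 1/2$. Because $a_i$, $b_i$, $F_X(X_i) - 1/2$, and $F_Y(Y_i) - 1/2$ all lie in $[-1,1]$, a term-by-term bound gives $\big|\frac1n\sum_i a_i^k b_i^l - \frac1n\sum_i (F_X(X_i) - 1/2)^k (F_Y(Y_i) - 1/2)^l\big| \le C_{k,l}\big(\max_i|a_i - (F_X(X_i)-1/2)| + \max_i|b_i - (F_Y(Y_i)-1/2)|\big) \to 0$. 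Finally, by continuity the pairs $(F_X(X_i), F_Y(Y_i))$ are i.i.d.\ with $\Uniform[0,1]$ marginals and joint distribution $C$, so the law of large numbers gives $\frac1n\sum_i (F_X(X_i)-1/2)^k (F_Y(Y_i)-1/2)^l \to E[(F_X(X)-1/2)^k (F_Y(Y)-1/2)^l] = M_{kl}$. The only subtlety to be careful about is that the replacement error must be controlled uniformly in $i$, not merely on average, which is precisely why the sup-norm Glivenko--Cantelli bound (rather than a pointwise rate) is the right tool.
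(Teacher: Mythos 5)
Your proposal is correct and follows essentially the same route as the paper: expand $\hat\varepsilon_i^2$ into centered empirical rank moments, prove their consistency for the $M_{kl}$ via the Glivenko--Cantelli sup-norm bound, and conclude with $M_{20}=1/12$, $M_{40}=1/80$ (the paper packages the moment consistency as a separate lemma, Lemma~\ref{lem: Mhat consistency}). The only cosmetic difference is that for \eqref{eq: hom plim} you use the exact finite-sample identity $\hat\sigma^2_{hom}=1-\hat\rho^2$ (from $S_X=S_Y$ under no ties), whereas the paper passes through the same moment-consistency step; both are fine.
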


When $X$ and $Y$ are independent, then $\rho=0$ and both probability limits are equal to the correct (asymptotic) variance, $\sigma^2_{hom}=\sigma_{EW}^2 = \sigma^2=1$. In general, however, the three variances are different.  In fact, $\sigma_{hom}^2$ and $\sigma_{EW}^2$ can each be larger or smaller than the correct variance $\sigma^2$. To see this note first that all three variances depend on the joint distribution $F$ only through its copula $C$. Therefore, whether and by how much the variances differ from each other depends only on the shape of the copula of child and parent incomes. As an illustration, we thus compare the three variances in three simple parametric families of copulas: the Gaussian, the Student-t with one degree of freedom, and a quadratic copula.\footnote{\label{foot:copulas}The Gaussian and Student-t families are each indexed by a correlation parameter $\theta\in[-1,1]$. The quadratic copula with parameter $\theta\in[0,1]$ is defined as the copula of $(Y,X)$ where $X\sim U[-1/2,1/2]$, $Y = 1/2 + \theta X + (1-\theta)X^2 + \varepsilon$, and $\varepsilon \sim N(0,10^{-6})$.}

\begin{figure}[!t]
  \centering
  \includegraphics[width=\textwidth]{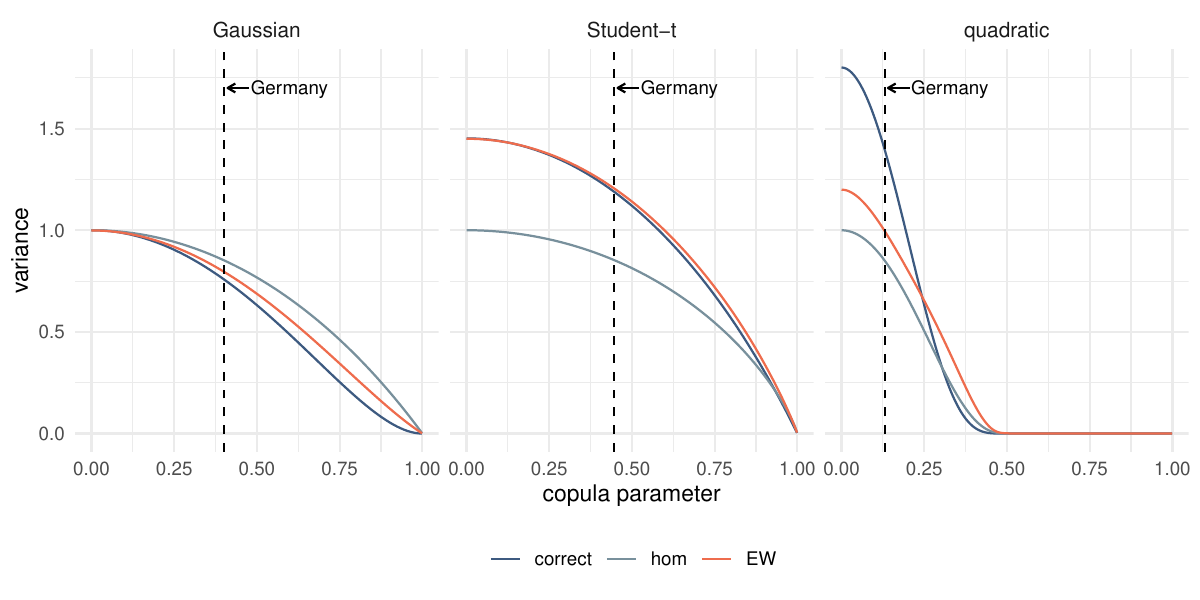}
  \caption{Variances within three different parametric families of copulas, plotted as functions of the copula parameter. ``correct'' refers to $\sigma^2$, ``hom'' to $\sigma_{hom}^2$, ``EW'' to $\sigma_{EW}^2$. The dashed line shows the value of the copula parameter that corresponds to the rank correlation of child and parent incomes in Germany ($0.384$), computed from the SOEP data used in Section~\ref{sec: SOEP}.}
  \label{fig: vars}
\end{figure}

Figure~\ref{fig: vars} plots the three variances within each copula family as we vary the copula parameter in $[0,1]$.\footnote{For each parametric family, the copula parameter refers to the parameter $\theta$ defined in Footnote~\ref{foot:copulas}.} The dashed line shows the value of the copula parameter value corresponding to the rank correlation between child and parent incomes in Germany ($0.384$), computed from the SOEP data used in Section~\ref{sec: SOEP}. For Germany, the quadratic copula produces hom and EW variances that are substantially smaller than the correct variance. The Student-t copula also yields a hom variance well below the correct variance, but the corresponding EW variance is essentially equal to the correct variance. The Gaussian copula generates hom and EW variances that are close to the correct variance, though both are slightly larger than the correct variance uniformly over the parameter value.

The variances $\sigma^2$, $\sigma_{hom}^2$, and $\sigma_{EW}^2$ are all bounded, and so mutual differences between them are bounded as well. This is not true for ratios, however. The following lemma shows that if we do not restrict the distributions $F$, the ratios $\sigma_{hom}^2/\sigma^2$ and $\sigma_{EW}^2/\sigma^2$ can both be arbitrarily large. To emphasize the dependence of the variances on the joint distribution $F$ of $(X,Y)$, we use the notation $\sigma_{hom}^2(F)$, $\sigma_{EW}^2(F)$, and $\sigma^2(F)$.

\begin{lemma}\label{lem: large ratio of variances}
There exists a sequence $\{F_k\}_{k\geq 1}$ of distributions on $\mathbb{R}^2$ with continuous marginals such that $\sigma_{hom}^2(F_k)/\sigma^2(F_k)\to\infty$ and $\sigma_{EW}^2(F_k)/\sigma^2(F_k)\to\infty$ as $k\to\infty$.
\end{lemma}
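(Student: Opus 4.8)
The plan is to construct the sequence $\{F_k\}$ explicitly so that the copula $C_k$ concentrates on a tiny, highly-correlated "blob" near the corner while leaving most of the unit square nearly at the independence copula. The heuristic is this: from Lemma~\ref{lem: plims}, $\sigma_{hom}^2 = 1 - \rho^2$ stays bounded away from zero as long as $\rho$ stays bounded away from $\pm 1$, and $\sigma_{EW}^2$ is also bounded below by a positive constant unless $M_{22}$, $M_{31}$ are driven to a very particular cancellation; meanwhile the correct variance $\sigma^2$ from \eqref{eq: avar spearman} is the variance of a single random variable, so it will be made to vanish if that random variable converges to a constant (in $L^2$). The key structural fact is that $\sigma^2 = 0$ would require $(2F_X(X)-1)(2F_Y(Y)-1) + 4\psi_X(X) + 4\psi_Y(Y)$ to be almost surely constant, which happens in the limit of independence (where each of the three terms is degenerate or mean-zero with vanishing variance once we also push $\rho\to 0$), but also — more usefully — along sequences where the dependence is confined to an asymptotically negligible event.

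First I would take $F_k$ to be a mixture: with probability $1 - \epsilon_k$ draw $(X,Y)$ independent uniform on, say, $[0,1]^2$ (or any fixed continuous independent pair), and with probability $\epsilon_k$ draw $(X,Y)$ from a perfectly (or nearly perfectly) dependent continuous distribution supported on a small cube $[a_k,b_k]^2$ placed near a corner, with $\epsilon_k \to 0$ and the cube shrinking appropriately. Because marginals must remain continuous, I would smooth everything slightly. Then I would: (i) compute, to leading order in $\epsilon_k$, the copula $C_k$ and verify $\rho(F_k) = 12\int\int C_k - 3 = O(\epsilon_k) \to 0$, so that by \eqref{eq: hom plim} $\sigma_{hom}^2(F_k) \to 1$; (ii) similarly expand $M_{22}, M_{31}$ and conclude from \eqref{eq: EW plim} that $\sigma_{EW}^2(F_k) \to 144/80 \cdot 0 + \dots$ — more carefully, that $\sigma_{EW}^2(F_k)$ converges to the value it takes at independence, which one checks is $1$; and (iii) show $\sigma^2(F_k) \to 0$ by bounding the variance in \eqref{eq: avar spearman}: under the mixture, $F(x,y) - F_X(x)F_Y(y)$ is $O(\epsilon_k)$ uniformly, so $\psi_X, \psi_Y$ are $O(\epsilon_k)$ uniformly, and the product term $(2F_X(X)-1)(2F_Y(Y)-1)$ — which is \emph{not} small — has variance that is, however, also controlled: under independence it contributes a fixed amount, so I actually need the blob to also kill the leading product term.

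The cleanest fix, which I would adopt, is to center the construction so that even the "independent part" is replaced by a part on which $F_X(X)$ and $F_Y(Y)$ are each \emph{deterministic functions that make the product term constant} — e.g. arrange the bulk of the mass so that $F_Y(Y) = 1 - F_X(X)$ or $F_Y(Y)=F_X(X)$ fails, but rather choose the bulk distribution to be the perfectly \emph{anti}-diagonal-plus-diagonal symmetric mixture for which $(2F_X(X)-1)(2F_Y(Y)-1)$ has small variance while $\rho$ is still near zero. Concretely: let the bulk ($1-\epsilon_k$ mass) be an equal mixture of the comonotone copula $M(u,v)=\min(u,v)$ and the countermonotone copula $W(u,v)=\max(u+v-1,0)$ — this mixture has $\rho = 0$ exactly, has $\psi_X=\psi_Y \equiv 0$ is false, but the relevant quantity $(2U-1)(2V-1)$ equals $(2U-1)^2$ on half the mass and $-(2U-1)^2$ on the other half, so in \eqref{eq: avar spearman} the term has mean $0$ and variance $\mathrm{E}[(2U-1)^4] = 1/5$, giving $\sigma^2 = 9 \cdot 1/5 = 9/5 \neq 0$. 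That still isn't zero, so the mixture-of-monotone bulk alone doesn't work either; I must instead let the bulk itself degenerate. The honest construction: take $F_k$ supported (up to continuous smoothing) so that $(2F_X(X)-1)(2F_Y(Y)-1)$ concentrates near a constant $c$ while keeping $\rho$ bounded away from $1$ — e.g. put most mass where $F_X(X)\approx 1/2$, which forces the product term near $0$ regardless of $Y$; a vanishingly small fraction of mass, with strong positive dependence, placed away from $F_X = 1/2$ then keeps $\sigma_{hom}^2 = 1-\rho^2$ and $\sigma_{EW}^2$ bounded below (indeed near their independence values, after checking the $M_{kl}$ expansions) while $\sigma^2 \to 0$.

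The main obstacle, and where I expect to spend the real effort, is step (ii): verifying that $\sigma_{EW}^2(F_k)$ stays bounded away from zero. Unlike $\sigma_{hom}^2 = 1-\rho^2$, which is transparent, the expression $144(M_{22} - 2\rho M_{31} + \rho^2/80)$ could in principle be made small by an adversarial copula even when $\rho \to 0$, since then it reduces to $144 M_{22}$ and $M_{22} = \mathrm{E}[(F_X(X)-1/2)^2(F_Y(Y)-1/2)^2] \geq 0$ can be tiny if $F_X(X)$ concentrates at $1/2$ — which is exactly what my construction does! So I would need to recheck: concentrating $F_X(X)$ near $1/2$ drives $M_{22}\to 0$ and hence $\sigma_{EW}^2 \to 0$ too, which would \emph{also} kill the ratio I want to blow up. The resolution is to be more surgical: concentrate mass so that $\psi_X, \psi_Y$ vanish and the product term's variance vanishes, but \emph{without} shrinking $M_{22}$ — e.g. let $F_X(X)$ take values near $\{1/4, 3/4\}$ (so $(F_X(X)-1/2)^2 \approx 1/16$ is bounded below, keeping $M_{22}$ bounded below) while arranging $Y$ so that $(2F_X(X)-1)(2F_Y(Y)-1)$ is still nearly constant — this needs $F_Y(Y)$ near $1/2$ conditional on each $X$-value, but then $M_{22}$ shrinks after all. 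Squaring this circle is the crux: I believe the correct construction makes $\rho(F_k) \to 1$ (not $0$), so that $1 - \rho^2 \to 0$ is avoided by instead having $\sigma^2 \to 0$ even faster — Spearman's rho near $1$ with the CLT variance vanishing at a faster rate than $1-\rho^2$, which is plausible because near comonotonicity $\sigma^2$ degenerates quadratically in $(1-\rho)$ while one can keep $M_{22} - 2\rho M_{31} + \rho^2/80$ of a larger order by perturbing the copula transversally on a small set. I would therefore devote the bulk of the write-up to exhibiting one such $C_k$ near the comonotone copula, computing $\rho(F_k)$, $\sigma^2(F_k)$, $M_{22}$, $M_{31}$ as explicit functions of the perturbation parameters, and optimizing the rates so both ratios diverge.
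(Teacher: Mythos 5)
Your proposal never actually proves the lemma. After two false starts that you yourself correctly diagnose as fatal (the independence-plus-blob mixture leaves $\mathrm{Var}\bigl((2F_X(X)-1)(2F_Y(Y)-1)\bigr)$ bounded away from zero so $\sigma^2\not\to 0$; concentrating $F_X(X)$ near $1/2$ kills $M_{22}$ and hence $\sigma_{EW}^2$ along with $\sigma^2$), you arrive at the right strategy — perturb the comonotone copula so that all three variances vanish but $\sigma^2$ vanishes strictly faster — which is exactly what the paper does. But at that point you stop: no explicit $F_k$ is exhibited, and the one quantitative claim on which everything rests, namely that $\sigma^2$ vanishes at a strictly faster rate than \emph{both} $\sigma_{hom}^2$ and $\sigma_{EW}^2$ along some such perturbation, is asserted as ``plausible'' rather than verified. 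That rate comparison \emph{is} the lemma; it genuinely depends on the choice of perturbation (a generic approach to comonotonicity need not separate the rates), so it cannot be waved through. Your supporting heuristic is also off: you say $\sigma^2$ degenerates ``quadratically in $(1-\rho)$,'' but in the construction that actually works the exponent is $5/3$, not $2$, and $\sigma_{EW}^2$ must separately be checked to stay of the same (slower) order as $\sigma_{hom}^2$ rather than tracking $\sigma^2$ — the very concern you raise about $M_{22}$ being driven down adversarially is left unresolved for your final construction.

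For comparison, the paper takes $X\sim U[0,1]$ and $Y=(a-X)\ind\{0\le X\le a\}+X\ind\{a<X\le 1\}$, i.e.\ comonotone except for a countermonotone fold on $[0,a]^2$, and computes everything in closed form: $\rho=1-2a^3$, $\sigma^2=36(a^5-a^6)$, $\sigma_{hom}^2=4(a^3-a^6)$, and $\sigma_{EW}^2=144\bigl(a^3/12-a^4/6+\cdots\bigr)\asymp 12a^3$, so both ratios blow up like $a^{-2}$ as $a\to 0$. To complete your argument you would need to supply a construction at this level of explicitness and carry the $M_{22}$, $M_{31}$, and $\mathrm{Var}(h(X,Y))$ computations through to the end.
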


An implication of this lemma is that the two variances $\sigma_{hom}^2$ and $\sigma_{EW}^2$ can be larger than the correct variance by arbitrarily large factors. As the proof of the lemma reveals, the divergence occurs under sequences of copulas approaching perfect dependence so that both variances tend to zero, but the correct variance converges to zero at a faster rate than the  variances $\sigma_{hom}^2$ and $\sigma_{EW}^2$.

Given the result in Lemma \ref{lem: large ratio of variances}, the next interesting question is how small the variances $\sigma_{hom}^2$ and $\sigma_{EW}^2$ can be relative to the true variance $\sigma^2$. The following lemma provides a partial answer to this question in the case of $\sigma_{hom}^2$.

\begin{lemma}\label{lem: small ratio of variances}
    There exists a constant $c>0$ such that $\sigma_{hom}^2(F)/\sigma^2(F)\geq c$ for all distributions $F$ on $\mathbb{R}^2$ with continuous marginals.
\end{lemma}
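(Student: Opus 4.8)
The plan is to establish the pointwise inequality $\sigma^2(F)\le 9\,\sigma_{hom}^2(F)$ for every $F$ with continuous marginals; by \eqref{eq: hom plim} this reads $\sigma^2(F)\le 9(1-\rho^2)$, which gives the claim with $c=1/9$ whenever $\sigma^2(F)>0$ (and the ratio is trivially $\ge c$, read as $+\infty$, when $\sigma^2(F)=0$). By \eqref{eq: avar spearman} we have $\sigma^2=9\Var(W)$, where
\[
W := (2U-1)(2V-1) + 4\psi_X(X) + 4\psi_Y(Y), \qquad U := F_X(X),\quad V := F_Y(Y),
\]
and $U,V$ are each $\Uniform[0,1]$ since the marginals are continuous. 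So it suffices to prove two facts: (i) $|W|\le 1$ almost surely, and (ii) $\Ep[W]=\rho$. Given these, $\Var(W)=\Ep[W^2]-\rho^2\le 1-\rho^2$, hence $\sigma^2=9\Var(W)\le 9(1-\rho^2)$.

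For (i), I would use Sklar's theorem to write $F(x,y)=C(F_X(x),F_Y(y))$ ($C$ the unique copula, the marginals being continuous) and change variables $v=F_Y(y)$ in the integral defining $\psi_X$ to get $\psi_X(X)=g(U)$ with $g(u):=\int_0^1(C(u,v)-uv)\,dv$, and symmetrically $\psi_Y(Y)=h(V)$ with $h(v):=\int_0^1(C(u,v)-uv)\,du$. Integrating the Fréchet--Hoeffding bounds $\max(u+v-1,0)\le C(u,v)\le\min(u,v)$ over $v\in[0,1]$ yields the quadratic envelope $|g(u)|\le u(1-u)/2$, and likewise $|h(v)|\le v(1-v)/2$. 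Writing $s:=|2U-1|$ and $t:=|2V-1|$ (so $s,t\in[0,1]$ and $2U(1-U)=(1-s^2)/2$, $2V(1-V)=(1-t^2)/2$), the triangle inequality gives
\[
|W|\ \le\ st + 4|g(U)| + 4|h(V)|\ \le\ st + \frac{1-s^2}{2} + \frac{1-t^2}{2}\ =\ 1 - \tfrac12(s-t)^2\ \le\ 1 .
\]
For (ii), I would use the identity $\Ep[UV]=\int_0^1\!\int_0^1 C(u,v)\,du\,dv$ together with $\rho=12\int_0^1\!\int_0^1 C(u,v)\,du\,dv-3$ to get, by a one-line computation, $\Ep[(2U-1)(2V-1)]=\rho/3$ and $\Ep[g(U)]=\Ep[h(V)]=\rho/12$, so $\Ep[W]=\rho/3+4\rho/12+4\rho/12=\rho$.

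The only genuinely substantive step is the observation in (i) that Fréchet--Hoeffding delivers the \emph{quadratic} envelope $|g(u)|\le u(1-u)/2$, rather than the crude $|g(u)|\le 1/4$; once this is in hand, $|W|\le 1$ drops out of AM--GM and the rest is bookkeeping. I do not expect a serious obstacle beyond (a) justifying the change of variables $v=F_Y(y)$ and the appeal to Sklar's theorem rigorously when $F_Y$ is continuous but possibly flat — which is exactly where the continuity hypothesis on the marginals is used — and (b) tracking the constants in (ii).
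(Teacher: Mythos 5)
Your proposal is correct, and it takes a genuinely different route from the paper. The paper works relative to the quantity $E[(U-V)^2]$: for $\rho\geq 0$ it lower-bounds $\sigma_{hom}^2=1-\rho^2\geq 1-\rho=6E[(U-V)^2]$, and it upper-bounds $\sigma^2$ by decomposing the Hoeffding influence function around the comonotone copula $u\wedge v$, using the inequality $|C(u,v)-u\wedge v|\leq P(|U-V|\geq |u-v|)$ and integrating to get $\sigma^2\leq 9\cdot 520\,E[(U-V)^2]$; the case $\rho<0$ is then handled symmetrically, and the resulting constant is implicit and small (on the order of $6/(9\cdot 520)$). You instead prove the pointwise bound $|W|\leq 1$ on the influence function $W=(2U-1)(2V-1)+4\psi_X(X)+4\psi_Y(Y)$ via the Fr\'echet--Hoeffding envelope, which gives the quadratic bounds $|\psi_X(X)|\leq U(1-U)/2$ and $|\psi_Y(Y)|\leq V(1-V)/2$, combined with the identity $E[W]=\rho$; this yields $\sigma^2=9\,Var(W)\leq 9(1-\rho^2)=9\sigma_{hom}^2$ directly, with no case split on the sign of $\rho$ and an explicit, much better constant $c=1/9$. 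I checked the computations ($E[(2U-1)(2V-1)]=\rho/3$, $E[\psi_X(X)]=E[\psi_Y(Y)]=\rho/12$, the envelope integrals, and the inequality $st+\tfrac{1-s^2}{2}+\tfrac{1-t^2}{2}=1-\tfrac12(s-t)^2\leq 1$) and they are all right; the use of Sklar's theorem and the substitution $v=F_Y(y)$ is exactly where continuity of the marginals enters, as in the paper. The only loose end is the degenerate case $\sigma^2(F)=0$ (e.g.\ comonotone pairs), where $\sigma_{hom}^2(F)=0$ as well and the ratio is $0/0$ rather than $+\infty$; this edge case is equally implicit in the paper's formulation, and your substantive inequality $\sigma^2\leq 9\,\sigma_{hom}^2$ is the meaningful content, so it is not a real gap.
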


By this lemma, the ratio $\sigma_{hom}^2/\sigma^2$ is bounded away from zero, so that $\sigma_{hom}^2$ can be smaller than $\sigma^2$ only by a factor that is bounded from below by the constant $c$.\footnote{We do not know whether a version of Lemma~\ref{lem: small ratio of variances} holds for the EW variance $\sigma_{EW}^2$ but we would be surprised if that were not the case.} However, it is important to emphasize that this constant may be rather small. In particular, this constant is strictly smaller than one, as there do exist distributions $F$ such that $\sigma_{hom}^2(F)/\sigma^2(F)<1$. For example, we have already seen in Figure~\ref{fig: vars} that for simple well-known classes of copulas, like the Student-t copula, the hom variance may be substantially smaller than the correct variance. In the empirical applications of Section~\ref{sec: emp}, we also find that the hom variance may be substantially smaller than the correct variance.

We conclude this subsection with a heuristic explanation for why the hom and EW variance estimators fail to consistently estimate the asymptotic variance of the OLS estimator. The OLS estimator can be written as
$$\hat\rho = \frac{S_{YX}}{S_X^2} = \frac{1}{n S_X^2} \sum_{i=1}^n (R_i^Y-\bar{R}^Y)(R_i^X-\bar{R}^X). $$
If $R_i^X$ and $R_i^Y$ were replaced by the population ranks $R_X(X_i)$ and $R_Y(Y_i)$, then, after scaling and recentering, the above quantity would be asymptotically equivalent to a sample average of i.i.d. random variables with asymptotic variance equal to $\sigma_{EW}^2$. However, since the ranks $R_i^X$ and $R_i^Y$ are estimators of the populations ranks, the above argument ignores the fact that these variables are sample averages themselves. Therefore, the OLS estimator is, in fact, a triple sum over the sample divided by the sample variance $S_X^2$ and asymptotically equivalent to a U-statistic of order three. Employing U-statistics theory from, e.g. \cite{Serfling:2002re}, one can show that the OLS estimator is asymptotically normal, but the variance is $\sigma^2$, not $\sigma_{EW}^2$. Intuitively, the estimation error in the ranks, i.e. the estimation error in the empirical cdfs, cannot be ignored even in the limit as the sample size tends to infinity because it is of the same order of magnitude as the estimation error in the infeasible OLS estimator in which the estimated ranks are replaced by the population ranks.

In conclusion, this subsection has shown that commonly used estimators for the asymptotic variance of the OLS estimator in a rank-rank regression are not consistent. In particular, the resulting homoskedastic and Eicker-White standard errors may be too small or too large depending on the shape of the underlying copula. In consequence, confidence intervals may be too short or too wide, possibly leading to under-coverage or conservative inference.

\subsection{Noncontinuous Marginal Distributions}
\label{sec: noncon dist}

The well-established asymptotic theory for Spearman's rank correlation described in the previous subsection requires both marginal distributions, $F_X$ and $F_Y$, to be continuous. In empirical applications of rank-rank regressions, however, pointmasses are common. For instance, incomes may be top-coded and there may be pointmasses at zero or negative incomes (e.g. as in \cite{Chetty:2018iu}).\footnote{In fact, one of the commonly cited (\cite{Deutscher:2023oo}, \cite{Mogstad:2023uu}) advantages of measuring mobility by the rank-rank slope is that it can be estimated even in the presence of zero incomes, while other popular measures like elasticities can only be estimated after removing observations with zero incomes.} Alternatively, ranks may be computed from discrete measures other than income, e.g., occupational status as in \cite{Ward:2023uu}, human capital as in \cite{Croix:2022aa}, or years of education as in \cite{Asher:2024iu}. In such cases, pointmasses in the marginal distributions create ties in the ranks. The presence of ties changes inference in the rank-rank regression in at least three important ways, which we now discuss.

\paragraph*{Interpretation of the estimand.} The probability limit of the OLS estimator is equal to 
\begin{equation}\label{eq: plim rhohat}
    \rho = Corr(R_Y(Y),R_X(X)) \sqrt{\frac{Var(R_Y(Y))}{Var(R_X(X))}},
\end{equation}
i.e. it is equal to the rank correlation times the ratio of standard deviations of the ranks. For continuous distributions, the ratio of standard deviations is equal to one. However, in the presence of pointmasses, this ratio can take any value in $(0,\infty)$ and, thus, the slope coefficient of the rank-rank regression may lie outside of the $[-1,1]$ interval and cannot be interpreted as a correlation.

\paragraph*{Sensitivity of the estimand to the definition of ranks.} In the presence of pointmasses, the population ranks $R_X(X)$ and $R_Y(Y)$ depend on $\omega$, i.e. on how ranks are assigned to tied observations. Therefore, the OLS estimand \eqref{eq: plim rhohat} also depends on $\omega$. In other words, different ways of handling ties imply different estimands.

\paragraph*{Sensitivity of the OLS estimator's statistical properties to the definition of ranks.} In the presence of pointmasses, not only the population ranks, but also the sample ranks $R_i^X$ and $R_i^Y$ depend on $\omega$. Therefore, the OLS estimator and its statistical properties also depend on $\omega$ and, thus, on how ties are handled. The previous paragraph already argues that the estimator's probability limit depends on $\omega$, but in Section~\ref{sec: asymptotic normality} we formally show that its asymptotic variance also depends on $\omega$. In consequence, standard errors for the OLS estimator vary depending on how ties in the ranks are handled. Moreover, in the empirical application in Section \ref{sec: Asher et al}, the OLS estimator and the corresponding standard errors exhibit high sensitivity to the way in which ties are handled.

\subsection{Covariates}

As seen in \eqref{eq: plim rhohat}, the rank-rank slope $\rho$ is not necessarily equal to the rank correlation. This case occurs when the standard deviations of $R_Y(Y)$ and $R_X(X)$ differ due to the presence of pointmasses.

Another reason for why $\rho$ may not be interpretable as a correlation is the presence of covariates, say $W$, in the rank-rank regression. With covariates the rank-rank slope satisfies
$$\rho = Corr(\tilde{R}_Y(Y),\tilde{R}_X(X))  \sqrt{\frac{Var(\tilde{R}_Y(Y))}{Var(\tilde{R}_X(X))}},$$
where $\tilde{R}_Y(Y)$ and $\tilde{R}_X(X)$ denote the residuals from partialling out $W$ from $R_Y(Y)$ and $R_X(X)$. Therefore, the rank-rank slope is equal to the correlation of $\tilde{R}_Y(Y)$ and $\tilde{R}_X(X)$ only if the standard deviations of the two residuals happen to be identical. In general, however, the ratio of standard deviations can take any value in $(0,\infty)$ and, thus, the rank-rank slope may lie outside of $[-1,1]$.

\section{A General Asymptotic Theory}
\label{sec: inf rank rank reg}

\subsection{Asymptotic Normality Result}
\label{sec: asymptotic normality}

In this section, we develop a general, unifying asymptotic theory for coefficients in a rank-rank regression that allows for any definition of the rank, for continuous or noncontinuous distributions, and for the presence of covariates.

We consider the following regression model:
\begin{equation}\label{eq: model with controls}
R_Y(Y) = \rho R_X(X) + W'\beta + \varepsilon,\qquad E\left[\varepsilon\begin{pmatrix}
R_{X}(X)\\ W
\end{pmatrix}\right]=0,
\end{equation}
where $W$ is a $d$-dimensional vector of covariates, $\varepsilon$ is noise, and $\rho$, $\beta$ are regression coefficients to be estimated. In applications, $W$ typically includes a constant, but our theoretical results below do not require that.

Letting $\{(X_i,W_i,Y_i)\}_{i=1}^n$ be an i.i.d. sample from the distribution of the triplet $(X,W,Y)$, we study the properties of the following OLS estimator of the vector of parameters $(\rho,\beta')'$:
\begin{equation}\label{eq: joint ols estimator}
\begin{pmatrix}
\hat{\rho}\\
\hat{\beta}
\end{pmatrix}=\left(\sum_{i=1}^{n}\begin{pmatrix}
R_i^X\\
W_{i}
\end{pmatrix}\begin{pmatrix}
R_i^X & W_{i}'\end{pmatrix}\right)^{-1}\sum_{i=1}^{n}\begin{pmatrix}
R_i^X\\
W_{i}
\end{pmatrix}R_i^Y.
\end{equation}
The population ranks $R_X(X)$, $R_Y(Y)$ and the estimated ranks $R_i^X$, $R_i^Y$ are defined as in Section \ref{sub: notations rank-rank} with the same, but arbitrary value of $\omega\in[0,1]$.

To derive the asymptotic normality of $\hat\rho$, we introduce the projection of $R_X(X)$ onto the covariates:
\begin{equation}\label{eq: first stage}
    R_X(X) = W'\gamma + \nu,\quad E[\nu W] = 0,
\end{equation}
where $\nu$ is a random variable representing the projection residual, and $\gamma$ is a vector of parameters. Consider the following regularity conditions:
\begin{assumption}\label{as: random sample}
    $\{(X_i,W_i,Y_i)\}_{i=1}^n$ is an i.i.d. sample from the distribution of $(X,W,Y)$.
\end{assumption}

\begin{assumption}\label{as: vector w}
    The vector $W$ satisfies $E[\|W\|^4]<\infty$ and the matrix $E[WW']$ is non-singular.
\end{assumption}
\begin{assumption}\label{as: variable nu}
    The random variable $\nu$ is such that $E[\nu^2]>0$.
\end{assumption}
These are standard regularity conditions underlying typical regression analyses. Assumption~\ref{as: variable nu} requires that the rank $R_X(X)$ can not be represented as a linear combination of covariates. Importantly, our regularity conditions do not require $X$ and $Y$ to be continuously distributed.

Under these conditions, we have the following asymptotic normality result.

\begin{theorem}\label{thm: AN for rank rank reg}
    Suppose that \eqref{eq: model with controls}--\eqref{eq: first stage} hold and that Assumptions \ref{as: random sample}--\ref{as: variable nu} are satisfied. Then
    $$
    \sqrt n(\hat \rho - \rho) = \frac{1}{\sigma_{\nu}^2\sqrt n}\sum_{i=1}^n \Big\{ h_1(X_i,W_i,Y_i) + h_2(X_i,Y_i) + h_3(X_i)\Big\} + o_P(1)\to_D N(0,\sigma^2),
    $$
    where 
    \begin{equation}\label{eq: our general variance formula}
    \sigma^2  := \frac{1}{\sigma_{\nu}^4} E\Big[(h_1(X,W,Y) + h_2(X,Y) + h_3(X))^2\Big] 
    \end{equation}
    with $\sigma_{\nu}^2 := E[\nu^2]$ and
    \begin{align*}
        h_1(x,w,y) &:= (R_Y(y) - \rho R_X(x) - w'\beta)(R_X(x) - w'\gamma),\\
        h_2(x,y) & := E[(I(y,Y) - \rho I(x,X) - W'\beta)(R_X(X) - W'\gamma)],\\
        h_3(x) & := E[(R_Y(Y) - \rho R_X(X) - W'\beta)(I(x,X) - W'\gamma)]
    \end{align*}
for all $x\in\mathbb R$, $w\in\mathbb R^p$, and $y\in\mathbb R$.
\end{theorem}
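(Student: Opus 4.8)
The plan is to isolate $\hat\rho$ with the Frisch--Waugh--Lovell theorem, to linearize the estimated ranks so that the resulting numerator is (asymptotically equivalent to) a $V$-statistic of order three, to pass to its H\'ajek projection, and to conclude with the Lindeberg--L\'evy central limit theorem. Throughout set $Z_i:=(X_i,W_i,Y_i)$. Let $\hat\gamma$ be the OLS slope of $R_i^X$ on $W_i$, with residual $\hat\nu_i:=R_i^X-W_i'\hat\gamma$, and put $\hat\varepsilon_i:=R_i^Y-\rho R_i^X-W_i'\beta$. By Frisch--Waugh--Lovell and $\sum_i\hat\nu_iW_i=0$,
$$\hat\rho-\rho=\frac{N_n}{D_n},\qquad N_n:=\frac1n\sum_{i=1}^n\hat\nu_i\hat\varepsilon_i,\qquad D_n:=\frac1n\sum_{i=1}^n\hat\nu_i^2.$$
The Glivenko--Cantelli theorem gives $\sup_x|\hat R_X(x)-R_X(x)|\to_p0$, so Assumptions~\ref{as: random sample}--\ref{as: variable nu} and the law of large numbers yield $\hat\gamma\to_p\gamma$ and $D_n\to_p\sigma_\nu^2>0$. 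Hence it suffices to prove $\sqrt n\,N_n=\tfrac1{\sqrt n}\sum_{i=1}^n\{h_1(Z_i)+h_2(X_i,Y_i)+h_3(X_i)\}+o_P(1)$ and to apply a CLT to this sum; Slutsky's lemma then delivers the claim with $\sigma^2$ as in \eqref{eq: our general variance formula}.

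Next I would linearize the ranks. By \eqref{eq: def rank}, $\hat R_X(X_i)=R_X(X_i)+\Delta_i^X$ with $\Delta_i^X=\tfrac1n\sum_{k=1}^n\phi_X(X_i,X_k)+O(1/n)$, where $\phi_X(x,x'):=\omega\ind\{x'\le x\}+(1-\omega)\ind\{x'<x\}-R_X(x)$ obeys $E[\phi_X(X_i,X_k)\mid X_i]=0$; define $\Delta_i^Y,\phi_Y$ analogously. With $\nu_i:=R_X(X_i)-W_i'\gamma$ and $\varepsilon_i:=R_Y(Y_i)-\rho R_X(X_i)-W_i'\beta$, replacing $\hat\gamma$ by $\gamma$ (the error equals $(\gamma-\hat\gamma)'\tfrac1n\sum_iW_i\hat\varepsilon_i=o_P(n^{-1/2})$, since $\tfrac1n\sum_iW_i\hat\varepsilon_i=O_P(n^{-1/2})$ by $E[W\varepsilon]=0$ plus a mean-zero $V$-statistic bound) and expanding the product,
$$N_n=\underbrace{\tfrac1n\sum_i\nu_i\varepsilon_i}_{A_n}+\underbrace{\tfrac1n\sum_i\nu_i(\Delta_i^Y-\rho\Delta_i^X)}_{B_n}+\underbrace{\tfrac1n\sum_i\Delta_i^X\varepsilon_i}_{C_n}+o_P(n^{-1/2}),$$
the remainder collecting the $\Delta^X(\Delta^Y-\rho\Delta^X)$ cross products, which are $O_P(1/n)$ by the uniform rate $\sup_x(|\hat F_X(x)-F_X(x)|+|\hat F_X^-(x)-F_X^-(x)|)=O_P(n^{-1/2})$, together with the $O(1/n)$ terms of \eqref{eq: def rank}. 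Since $\nu_i\varepsilon_i=h_1(Z_i)$, the term $A_n$ already supplies $\tfrac1{\sqrt n}\sum_ih_1(Z_i)$. Equivalently, substituting the empirical cdfs for all ranks in $N_n$ turns it into a $V$-statistic of order three, and $A_n+B_n+C_n$ is its Hoeffding decomposition sorted by which sample index is active.

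Then I would project. Each of $B_n,C_n$ is a $V$-statistic of order two with mean-zero kernel — $E[\nu_1(\phi_Y(Y_1,Y_2)-\rho\phi_X(X_1,X_2))]=0$ and $E[\phi_X(X_1,X_2)\varepsilon_1]=0$ — so by the H\'ajek projection (e.g.\ \cite{Serfling:2002re}), $\sqrt n\,B_n=\tfrac1{\sqrt n}\sum_i(b^{(1)}(Z_i)+b^{(2)}(Z_i))+o_P(1)$, and similarly for $C_n$, with $b^{(1)},b^{(2)}$ the projections onto the first and second kernel argument. The decisive observation is that $b^{(1)}\equiv0$, and its $C_n$-counterpart $\equiv0$, because $E[\phi_X(x,X_2)]=E[\phi_Y(y,Y_2)]=0$ for every fixed $x,y$; this degeneracy is precisely what makes the rank-estimation errors contribute at order $n^{-1/2}$. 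Evaluating the surviving projections and simplifying via $E[\varepsilon R_X(X)]=E[\varepsilon W]=E[\nu W]=0$ identifies the $C_n$-term as $\tfrac1{\sqrt n}\sum_ih_3(X_i)$ and the $B_n$-term as $\tfrac1{\sqrt n}\sum_ih_2(X_i,Y_i)$. Collecting, $\sqrt n\,N_n=\tfrac1{\sqrt n}\sum_i\{h_1(Z_i)+h_2(X_i,Y_i)+h_3(X_i)\}+o_P(1)$; the summands are i.i.d., mean zero (each $h_j$ is a centered projection), and square-integrable (ranks and indicator pieces are bounded and $E[\|W\|^4]<\infty$ by Assumption~\ref{as: vector w}), so the Lindeberg--L\'evy CLT gives $\sqrt n\,N_n\to_D N(0,E[(h_1+h_2+h_3)^2])$, and dividing by $D_n\to_p\sigma_\nu^2$ finishes the proof.

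\emph{Main obstacle.} The heart of the argument is the projection step: recognizing that $\Delta^X,\Delta^Y$ are themselves sample averages — so $\hat\rho$ is a genuine third-order $V$-statistic rather than a smooth function of ordinary sample means — and that the ``leading'' projection of each order-two piece vanishes by the conditional centering of $\phi_X,\phi_Y$. This is why $h_2$ and $h_3$ enter at the same $\sqrt n$-order as $h_1$ and cannot be dropped, and hence why the homoskedastic and Eicker--White variance estimators, which implicitly treat the ranks as known, are inconsistent (Section~\ref{sec: motivation continuous}). Everything else — the $O(1/n)$ terms in \eqref{eq: def rank}, the $V$-statistic diagonals, the $\Delta^X\Delta^Y$ remainder, and the first-stage error $\hat\gamma-\gamma$ — is routine bookkeeping given Assumptions~\ref{as: random sample}--\ref{as: variable nu}.
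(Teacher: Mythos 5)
Your proposal is correct and follows essentially the same route as the paper's proof: Frisch--Waugh--Lovell, replacement of $\hat\gamma$ by $\gamma$ at cost $o_P(n^{-1/2})$, recognition of the numerator as (asymptotically) a third-order U/V-statistic in the sense of \cite{Serfling:2002re}, projection onto the observations yielding exactly $h_1+h_2+h_3$ (your $g_2$-terms agree with $h_2,h_3$ after using $E[\varepsilon\nu]=0$, as you note), and then the CLT plus Slutsky with the denominator converging to $\sigma_\nu^2$. The only difference is organizational---you expand into the Hoeffding pieces $A_n,B_n,C_n$ and project the two degenerate second-order terms separately, whereas the paper symmetrizes a single third-order kernel and projects it wholesale---which you yourself observe is the same decomposition.
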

The proof of this result can be found in Appendix~\ref{sec: proof main result}. In fact, Theorem \ref{thm: AN for rank rank reg} follows from a more general result, which we state in Appendix~\ref{sec: all coefficients} (Theorem \ref{thm: AN for rank rank reg all coeffs}). The more general result shows joint asymptotic normality for $\hat\rho$ and $\hat\beta$,
$$\sqrt{n} \begin{pmatrix}
    \hat\rho -\rho\\ \hat\beta-\beta 
\end{pmatrix}= \frac{1}{\sqrt{n}} \sum_{i=1}^n \psi_i + o_P(1) \to_D N(0,\Sigma),\qquad \Sigma := E[\psi_i\psi_i'], $$
and provides an explicit formula for $\psi_i$. The joint asymptotic distribution for $\hat\rho$ and $\hat\beta$ is useful for empirical work in the intergenerational mobility literature. Suppose there are no covariates so $W$ only contains a constant. Besides the rank-rank slope, another popular measure of intergenerational mobility (\cite{Deutscher:2023oo}) is the expected income rank of a child given that their parent's income rank is equal to some value $p$: $\theta_p := \beta + \rho\, p$. With the asymptotic joint distribution above, it is straightforward to compute standard errors and confidence intervals for $\theta_p$ based on the plugin estimator $\hat{\theta}_p := \hat{\beta} + \hat{\rho}\, p$.

Theorem \ref{thm: AN for rank rank reg} has the following interpretation. If we knew the population ranks $R_Y(Y_i)$ and $R_X(X_i)$, we would simply estimate $\rho$ and $\beta$ by an OLS regression of $R_Y(Y_i)$ on $R_X(X_i)$ and $W$. The asymptotic variance of such an estimator for $\rho$ would be given by $E[h_1(X,W,Y)^2]/\sigma_{\nu}^4$, which is the familiar OLS variance for heteroskedastic regression errors. Without knowing the population ranks, however, we have to plugin their estimators, and the extra terms in the asymptotic variance, represented by $h_2(X,Y)$ and $h_3(X)$, provide the adjustments necessary to take into account the noise coming from estimated ranks. Importantly, the estimation error in the ranks is of the same order of magnitude as that in the infeasible OLS estimator in which the estimated ranks are replaced by the population ranks. Therefore, the estimation error in the ranks does not become negligible in large samples, not even in the limit as the sample size tends to infinity.

\begin{remark}[Comparing the variance formulas in \eqref{eq: avar spearman} and \eqref{eq: our general variance formula}]
When both $X$ and $Y$ are continuous random variables and $W$ contains only a constant, the asymptotic variance formula in \eqref{eq: our general variance formula} reduces to the classical Hoeffding variance formula in \eqref{eq: avar spearman}. Indeed, replacing $X$ and $Y$ by $F_X(X)$ and $F_Y(Y)$ respectively, we can assume without loss of generality that both $X$ and $Y$ are $U[0,1]$ random variables, in which case $\psi_X(x)$ in \eqref{eq: psi x def} reduces to
\begin{align*}
\int_0^1 (E[\ind\{X\leq x\}\ind\{Y\leq y\}] - xy)dy 
& = E[\ind\{X\leq x\}(1-Y)] - x/2 \\
& = x/2 - E[\ind\{X\leq x\}Y].
\end{align*}
Similarly, $\psi_Y(y)$ in \eqref{eq: psi y def} reduces to $y/2 - E[\ind\{Y\leq y\}X]$. Hence, the variance in \eqref{eq: avar spearman} simplifies to $\sigma^2 = 144Var(h(X,Y))$, where
\begin{equation}\label{eq: function h true}
h(x,y):=xy - E[\ind\{X\leq x\}Y] - E[\ind\{Y\leq y\}X],\quad x\in\mathbb R, y\in\mathbb R,
\end{equation}
and it is straightforward to verify that $h(x,y)$ coincides with $h_1(x,w,y) + h_2(x,y) + h_3(x)$ up to an additive constant whenever $X,Y\sim U[0,1]$ and $W$ contains only a constant. \QEDA
\end{remark}

\begin{remark}[pointmasses]\label{rem: noncontinuous dist}
The asymptotic normality result in Theorem~\ref{thm: AN for rank rank reg} holds for both continuous and noncontinuous distributions of $Y$ and $X$. In the presence of pointmasses in at least one of the distributions of $Y$ and $X$, the estimands $\rho$ and $\beta$ depend on $\omega$, i.e. on the way in which ties are handled. Similarly, the asymptotic variance $\sigma^2$ of the OLS estimator derived in Theorem~\ref{thm: AN for rank rank reg} depends on $\omega$. Therefore, different ways of handling ties through different choices of $\omega$ may affect not only the estimand and the estimator, but also the value of the asymptotic variance.
\QEDA
\end{remark}

\begin{remark}[choice of $\tau$]
    In the presence of pointmasses in the marginal distributions of $Y$ and $X$, the researcher needs to choose the value of $\tau\in[0,1]$ that defines how ties are handled. In Section~\ref{sec: Asher et al}, we show that results might be highly sensitive to $\tau$. Therefore, a natural question is how to best choose this parameter. First, notice that each value of $\tau$ may induce a different estimand (i.e., $\rho$ depends on $\omega$). Because of that one might argue $\tau$ should be chosen by the researcher prior to looking at the data so as to define the estimand of interest. On the other hand, we cannot imagine substantive reasons for one of these estimands to be more suitable than others. Therefore, one might also consider various strategies of avoiding an arbitrary choice of $\tau$. For instance, one could compute the OLS estimator over a grid of values of $\tau$ and then report the average of the estimators. One could also choose the value of $\tau$ so as to minimizes the OLS estimator's asymptotic variance. Finally, one could consider a randomized definition of the ranks that randomly breaks ties.\QEDA
\end{remark}

\subsection{Consistent Estimation of the Asymptotic Variance}
\label{sec: const var estim}

In this subsection, we propose an estimator of the asymptotic variance $\sigma^2$ appearing in the asymptotic normality result in Theorem \ref{thm: AN for rank rank reg} and show that it is consistent. In particular, we consider the following plug-in estimator:
$$\hat{\sigma}^2 := \frac{1}{n \hat{\sigma}_{\nu}^4} \sum_{i=1}^n(H_{1i} + H_{2i} + H_{3i})^2, $$
where $\hat{\sigma}_{\nu}^2 := n^{-1}\sum_{i=1}^n \hat{\nu}_i^2$ is an empirical analog of $\sigma_{\nu}^2 = E[\nu^2]$, $\hat{\nu}_i := R_i^X-W_i'\hat{\gamma}$, and
\begin{align*}
    H_{1i} &:= \left(R_i^Y - \hat{\rho} R_i^X - W_i'\hat{\beta}\right)\left(R_i^X - W_i'\hat{\gamma}\right),\\
    H_{2i} & := \frac{1}{n}\sum_{j=1}^n \left(I(Y_i,Y_j) - \hat{\rho} I(X_i,X_j)-W_j'\hat{\beta}\right)\left(R_j^X-W_j'\hat{\gamma}\right),\\
    H_{3i} & := \frac{1}{n}\sum_{j=1}^n \left(R_j^Y - \hat{\rho} R_j^X - W_j'\hat{\beta}\right)\left(I(X_i,X_j) - W_j'\hat{\gamma}\right),
\end{align*}
for all $i=1,\dots,n$. The following lemma shows that this simple plug-in estimator is consistent without any additional assumptions.

\begin{lemma}\label{lem: consistent variance estimation}
Suppose that \eqref{eq: model with controls}--\eqref{eq: first stage} hold and that Assumptions \ref{as: random sample}--\ref{as: variable nu} are satisfied. Then $\hat\sigma^2\to_P\sigma^2$.
\end{lemma}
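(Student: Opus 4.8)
The plan is to show that $\hat\sigma^2\to_P\sigma^2$ by establishing two facts: first, that $\hat\sigma_\nu^2\to_P\sigma_\nu^2$, and second, that $n^{-1}\sum_{i=1}^n (H_{1i}+H_{2i}+H_{3i})^2\to_P E[(h_1+h_2+h_3)^2]$. Since $\sigma_\nu^2>0$ by Assumption~\ref{as: variable nu}, the conclusion then follows from the continuous mapping theorem. The key preliminary input is that the OLS estimators are consistent, i.e. $\hat\rho\to_P\rho$, $\hat\beta\to_P\beta$, and $\hat\gamma\to_P\gamma$; this follows from Assumptions~\ref{as: random sample}--\ref{as: variable nu} together with the moment conditions in \eqref{eq: model with controls} and \eqref{eq: first stage} by a standard law-of-large-numbers argument, once one notes that the regressors $R_i^X$ are uniformly bounded (ranks lie in $[0,1]$) and $E\|W\|^4<\infty$, so all relevant sample moments converge to their population counterparts. (Strictly, one should check that replacing $R_X(X_i)$ by $R_i^X$ and $R_Y(Y_i)$ by $R_i^Y$ in these sample moments is asymptotically negligible; this is exactly the kind of empirical-process/U-statistic bound that underlies Theorem~\ref{thm: AN for rank rank reg}, so I would either cite that proof or reproduce the short argument that $\sup_x|\hat R_X(x)-R_X(x)|=O_P(n^{-1/2})$.)

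Next I would handle $\hat\sigma_\nu^2 = n^{-1}\sum_i(R_i^X-W_i'\hat\gamma)^2$. Writing $R_i^X-W_i'\hat\gamma = (R_X(X_i)-W_i'\gamma) + (R_i^X-R_X(X_i)) - W_i'(\hat\gamma-\gamma)$, I expand the square. The leading term $n^{-1}\sum_i(R_X(X_i)-W_i'\gamma)^2=n^{-1}\sum_i\nu_i^2\to_P\sigma_\nu^2$ by the LLN. Each cross term and remainder term is $o_P(1)$: the terms involving $\hat\gamma-\gamma$ vanish because $\hat\gamma-\gamma=o_P(1)$ and $n^{-1}\sum_i\|W_i\|^2=O_P(1)$; the terms involving $R_i^X-R_X(X_i)$ vanish because $\sup_x|\hat R_X(x)-R_X(x)|=o_P(1)$ (in fact $O_P(n^{-1/2})$) and the other factors have bounded sample second moments.

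The main work is the second convergence. I would treat $H_{1i}$, $H_{2i}$, $H_{3i}$ in turn, showing $n^{-1}\sum_i H_{ki}^2\to_P E[h_k^2]$ and $n^{-1}\sum_i H_{ki}H_{li}\to_P E[h_kh_l]$ for each pair. For $H_{1i}$ this is a direct LLN argument after plugging in the consistent $\hat\rho,\hat\beta,\hat\gamma$ and the uniformly-consistent ranks, using boundedness of ranks and $E\|W\|^4<\infty$ to control moments. For $H_{2i}$ and $H_{3i}$, the quantities are themselves sample averages (conditional-expectation estimates): e.g. $H_{2i}$ is a V-statistic-type average over $j$ that estimates $h_2(X_i,Y_i)$ with the true parameters replaced by estimates and $R_j^X$ in place of $R_X(X_j)$. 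Here the argument is that, uniformly in $i$, $|H_{2i}-h_2(X_i,Y_i)|=o_P(1)$: decompose the difference into (a) the estimation error from $\hat\rho,\hat\beta,\hat\gamma$, bounded using boundedness of the indicator/rank factors and $n^{-1}\sum_j\|W_j\|^2=O_P(1)$, (b) the error from using $R_j^X$ instead of $R_X(X_j)$, bounded by $\sup_x|\hat R_X(x)-R_X(x)|$, and (c) the sampling error of the average over $j$ around its conditional expectation, which is $O_P(n^{-1/2})$ uniformly by a standard maximal/empirical-process bound since the relevant function class (indexed by $(X_i,Y_i)$) has a bounded, VC-type envelope. Once $\max_i|H_{ki}-h_k(\cdot)|=o_P(1)$ for $k=2,3$ and $\max_i|H_{1i}-h_1(\cdot)|=o_P(1)$, the desired convergence of the average of squares follows by combining with $n^{-1}\sum_i h_k(\cdot)^2=O_P(1)$ (LLN, finite second moments, which hold because $h_2,h_3$ are bounded and $h_1$ has a fourth-moment-controlled factor $W$) via the elementary inequality $|a^2-b^2|\le|a-b|(|a-b|+2|b|)$.

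The main obstacle is the uniform-over-$i$ control of the sampling error in $H_{2i}$ and $H_{3i}$ — i.e. replacing the inner average over $j$ by its conditional expectation uniformly in the ``outer'' index $i$. This is where an empirical-process or U-statistic argument is genuinely needed rather than a plain LLN; everything else is bookkeeping with bounded ranks, the rate $\sup_x|\hat R_X(x)-R_X(x)|=O_P(n^{-1/2})$, and the moment bound $E\|W\|^4<\infty$. Since this uniform bound is already a component of the proof of Theorem~\ref{thm: AN for rank rank reg}, I would lean on that machinery (or cite \cite{Serfling:2002re}) rather than redoing it from scratch.
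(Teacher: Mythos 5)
Your proposal is correct and follows essentially the same route as the paper's proof: consistency of $\hat\rho,\hat\beta,\hat\gamma$ plus uniform consistency of the estimated ranks, an elementary square-difference bound to reduce the problem to showing $n^{-1}\sum_i(H_{1i}+H_{2i}+H_{3i}-h_1-h_2-h_3)^2=o_P(1)$, uniform (empirical-process) control of the inner averages defining $H_{2i}$ and $H_{3i}$, and then the LLN for $n^{-1}\sum_i(h_1+h_2+h_3)^2$ together with $\hat\sigma_\nu^2\to_P\sigma_\nu^2$ and the continuous mapping theorem. The only cosmetic difference is that the paper obtains the uniform control by taking a supremum over $(x,y)\in\mathbb R^2$ via a Glivenko--Cantelli-type extension (its Lemma~\ref{lem: glivenko cantelli 2}) rather than your max-over-$i$ VC/maximal-inequality phrasing, which amounts to the same argument.
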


Theorem \ref{thm: AN for rank rank reg} and Lemma \ref{lem: consistent variance estimation} give the correct way to perform inference in rank-rank regressions. For example, a $(1-\alpha)\times 100\%$ asymptotic confidence interval for $\rho$ can be constructed using the standard formula
$$
\left(\hat\rho - \frac{z_{\alpha/2}\hat\sigma}{\sqrt n}, \hat\rho + \frac{z_{\alpha/2}\hat\sigma}{\sqrt n}\right),
$$
where $z_{\alpha/2}$ is the number such that $P(N(0,1) > z_{\alpha/2}) = \alpha/2$. Standard hypothesis testing can be performed analogously.

\begin{remark}[bootstrap]\label{rem: bootstrap}
    In addition to performing inference on $\rho$ via the plugin estimator of the asymptotic variance, one can also perform inference by bootstrapping the distribution of $\sqrt n(\hat\rho - \rho)$. Indeed, consider a bootstrap version $\hat\rho^*$ of the estimator $\hat\rho$ constructed in three steps: (1) draw a bootstrap sample from the original sample $\{(X_i,W_i,Y_i)\}_{i=1}^n$ with replacement, (2) calculate bootstrap ranks, i.e. ranks on the bootstrap sample, and (3) calculate the OLS estimator $\hat\rho^*$ using the bootstrap ranks. One can show that $\rho$ is a Hadamard differentiable functional of the distribution of the data (where the distribution is viewed as the functional returning the probability of lower rectangles) and thus, by standard results, e.g. from \cite{vaart}, the distribution of $\sqrt n(\hat\rho^* - \hat\rho)$ consistently estimates the asymptotic distribution of $\sqrt n(\hat\rho - \rho)$. This result holds without any additional assumptions beyond those in Theorem~\ref{thm: AN for rank rank reg}; see Appendix~\ref{app: bootstrap}. 

    Even when the marginal distributions of $Y$ and $X$ are continuous, the bootstrap sample inevitably contains ties, but these do not lead to a problem with bootstrap inference because, with high probability, the number of ties in the bootstrap sample remains relatively small. Finally, it is important to emphasize that the validity of the bootstrap hinges on re-sampling the original data $\{(X_i,W_i,Y_i)\}_{i=1}^n$ and re-computing ranks on each bootstrap sample; it is not sufficient to draw bootstrap samples directly from $\{(R_i^X, W_i, R_i^Y)\}_{i=1}^n$. \QEDA
\end{remark}

\section{Other Regressions Involving Ranks}
\label{sec: other regressions}

Motivated by the empirical practice documented in Section~\ref{sec: survey} we now extend the asymptotic normality result in Theorem~\ref{thm: AN for rank rank reg} to other regressions involving ranks. We consider (i) a rank-rank regression with subpopulations, where ranks are computed in the national distributions rather than in the subpopulation-specific distributions, (ii) a regression of the level of $Y$ on the rank of $X$, and (iii) a regression of the rank of $Y$ on the level of $X$. For brevity of the paper, we keep the discussions of each extension relatively short.

\subsection{Rank-Rank Regressions With Subpopulations}
\label{sec: rank rank reg with subpops}

In this subsection, we consider a population (e.g., the U.S.) that is divided into $n_G$ subpopulations (e.g., commuting zones). We are interested in running rank-rank regressions separately within each subpopulation. The ranks, however, are computed in the distribution of the entire population (e.g., the U.S.). \cite{Chetty:2014tr} has been influential in promoting this kind of regression for the analysis of mobility across regions, where the scale of the mobility measure is fixed by the national distribution. The survey by \cite{Deutscher:2023oo} provides more examples of empirical work running such regressions, for instance \cite{Corak:2020io} and \cite{Acciari:2022uu}. In Section~\ref{sec: SOEP}, we apply the methods from this section to study income mobility across states in Germany.

Consider the model
\begin{equation}\label{eq: model with subpops}
    R_Y(Y) = \sum_{g=1}^{n_G} \ind\{G=g\}\left(\rho_g R_X(X) + W'\beta_g\right) + \varepsilon,\quad E\left[\left.\varepsilon\begin{pmatrix} R_{X}(X)\\ W\end{pmatrix} \right| G\right]=0\; \text{a.s.},
\end{equation}
where $G$ is an observed random variable taking values in $\{1,\ldots,n_G\}$ to indicate the subpopulation to which an individual belongs. The quadruple $(G,X,W,Y)$ has distribution $F$, and we continue to denote marginal distributions of $X$ and $Y$ by $F_X$ and $F_Y$. The quantities $F_X^-$, $F_Y^-$, $R_X(x)$, and $R_Y(y)$ are also as previously defined, so that $R_X(X)$, for instance, is the rank of $X$ in the entire population, not the rank within a subpopulation. So, in model \eqref{eq: model with subpops}, the coefficients $\rho_g$ and $\beta_g$ are subpopulation-specific, but the ranks $R_Y(Y)$ and $R_X(X)$ are not. In consequence, $\rho_g$ cannot be interpreted as the rank correlation within the subpopulation $g$. Instead, in the intergenerational mobility literature, the rank-rank slope is interpreted as a relative measure of mobility in a region $g$, where its scale is fixed by the national population. Unlike the rank-rank slope in the model without subpopulations, \eqref{eq: model with controls}, the slopes $\rho_g$ do not only depend on the copula of $Y$ and $X$ in a subpopulation, but also on the marginal distributions of $Y$ and $X$ in the subpopulation. To see this note that adding a fixed amount to every child and parent income in a subpopulation does not change the ranking of children and parents within the subpopulation, but it may change the ranking of these individuals in their national income distributions. In conclusion, the $\rho_g$ may then also change.

We now introduce a first-stage projection equation similar to the one in \eqref{eq: first stage}, except that the coefficients are subpopulation-specific:
\begin{equation}\label{eq: first stage with subpops}
    R_X(X) = \sum_{g=1}^{n_G} \ind\{G=g\} W'\gamma_g + \nu,\quad E[\left.\nu W\right| G] = 0\; \text{a.s.}.
\end{equation}
Let $\{(G_i,X_i,W_i,Y_i)\}_{i=1}^n$ be an i.i.d. sample from the distribution of $(G,X,W,Y)$. Ranks are computed using all observations, i.e. $R_i^X := \hat R_X(X_i)$ with $\hat R_X(x)$ as in \eqref{eq: def rank} and $\hat{F}_X$ ($\hat{F}^-_X$) the (left-limit of the) empirical cdf of $X_1,\ldots,X_n$. The computation of the rank $R_i^Y$ is analogous.

First, notice that an OLS regression of $R_i^Y$ on all regressors, i.e. $(\ind\{G_i=g\}R_i^X)_{g=1}^{n_G}$ and $(\ind\{G_i=g\}W_i^X)_{g=1}^{n_G}$, produces estimates $(\hat\rho_g,\hat\beta_g)_{g=1}^{n_G}$ of $(\rho_g,\beta_g)_{g=1}^{n_G}$ that can be written as:
\begin{equation}\label{eq: joint ols estimator with subpops}
    \begin{pmatrix}
    \hat{\rho}_g\\
    \hat{\beta}_g
    \end{pmatrix}=\left(\sum_{i=1}^{n}\ind\{G_i=g\}\begin{pmatrix}
    R_i^X\\
    W_{i}
    \end{pmatrix}\begin{pmatrix}
    R_i^X & W_{i}'\end{pmatrix}\right)^{-1}\sum_{i=1}^{n}\ind\{G_i=g\}\begin{pmatrix}
    R_i^X\\
    W_{i}
    \end{pmatrix}R_i^Y.
\end{equation}
Therefore, $(\hat\rho_g,\hat\beta_g)$ can be computed by an OLS regression of $R_i^Y$ on $R_i^X$ and $W_i$ using only observations $i$ from subpopulation $g$. Similarly, $\hat{\gamma}_g$ can be computed by an OLS regression of $R_i^X$ on $W_i$ using only observations $i$ from subpopulation $g$. Note however, as explained above, that the ranks $R_i^X$ and $R_i^Y$ are computed using observations of $X$ and $Y$ from all subpopulations and thus the OLS estimators for different subpopulations are not independent.

The following are Assumptions~\ref{as: random sample}--\ref{as: variable nu} adapted to the model with subpopulations:

\begin{assumption}\label{as: random sample with subpops}
    $\{(G_i,X_i,W_i,Y_i)\}_{i=1}^n$ is a random sample from the distribution of $(G,X,W,Y)$.
\end{assumption}

\begin{assumption}\label{as: vector w with subpops}
    The vector $W$ is such that $E[\|W\|^4]<\infty$ and, for all $g=1,\ldots,n_G$, the matrix $E[\ind\{G=g\}WW']$ is non-singular.
\end{assumption}
\begin{assumption}\label{as: variable nu with subpops}
    The random variable $\nu$ is such that $E[\ind\{G=g\}\nu^2]>0$ for all $g=1,\dots,n_G$.
\end{assumption}

As in the previous section, note that our assumptions do not require the marginal distributions $F_X$ and $F_Y$ to be continuous. In addition, we introduce an assumption about the number and size of the subpopulations:
\begin{assumption}\label{as: large subpops}
    The number of subpopulations $n_G$ is a finite constant and $P(G=g)>0$ for all $g=1,\ldots,G$.
\end{assumption}

Observe that if the number of subpopulations were to increase together with the sample size $n$, with the number of units within each subpopulation being of the same order, the extra noise coming from estimated ranks would be asymptotically negligible, and the standard OLS variance formula would be applicable. The new result below applies to the case with a fixed number of subpopulations so that the estimation error in the ranks is not negligible even in large samples. This scenario seems reasonable, for instance, in our empirical application in which a subpopulation is a German state. 

Under these four assumptions, we have the following extension of Theorem \ref{thm: AN for rank rank reg}.

\begin{theorem}\label{thm: AN for rank rank reg with subpops}
    Suppose that \eqref{eq: model with subpops}--\eqref{eq: joint ols estimator with subpops} hold and that Assumptions \ref{as: random sample with subpops}--\ref{as: large subpops} are satisfied. Then for all $g=1,\dots,G$, we have
    \begin{align*}
        \sqrt n(\hat \rho_g- \rho_g) \to_D N(0,\sigma_g^2),
    \end{align*}
    for some $\sigma_g^2>0$.
 \end{theorem}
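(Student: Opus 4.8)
The plan is to replicate the structure of the proof of Theorem~\ref{thm: AN for rank rank reg}, carrying the subpopulation indicator $\ind\{G_i=g\}$ through every step. Fix $g\in\{1,\dots,n_G\}$. Because the design in \eqref{eq: joint ols estimator with subpops} is block-diagonal by group, $\hat\rho_g$ is the slope from an OLS regression of $R_i^Y$ on $R_i^X$ and $W_i$ over the observations with $G_i=g$, so by the Frisch--Waugh--Lovell theorem
$$
\hat\rho_g-\rho_g=\frac{n^{-1}\sum_{i=1}^n\ind\{G_i=g\}\,\hat\nu_i\,\bigl(R_i^Y-\rho_g R_i^X-W_i'\beta_g\bigr)}{n^{-1}\sum_{i=1}^n\ind\{G_i=g\}\,\hat\nu_i^2},\qquad \hat\nu_i:=R_i^X-W_i'\hat\gamma_g,
$$
where $\hat\gamma_g$ is the group-$g$ OLS coefficient of $R_i^X$ on $W_i$ and we used $\sum_{i:G_i=g}\hat\nu_iW_i=0$. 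Uniform consistency of $\hat R_X,\hat R_Y$ for $R_X,R_Y$ — which, by \eqref{eq: def rank}, holds at rate $n^{-1/2}$ — consistency of $\hat\gamma_g$, the law of large numbers, and Assumptions~\ref{as: vector w with subpops}--\ref{as: variable nu with subpops} give $n^{-1}\sum_i\ind\{G_i=g\}\hat\nu_i^2\to_P c_g:=E[\ind\{G=g\}\nu^2]>0$, so it remains to analyze the numerator.

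Write $\varepsilon_i:=R_Y(Y_i)-\rho_g R_X(X_i)-W_i'\beta_g$, $\nu_i:=R_X(X_i)-W_i'\gamma_g$, $\Delta_i^X:=R_i^X-R_X(X_i)$, $\Delta_i^Y:=R_i^Y-R_Y(Y_i)$, so that $R_i^Y-\rho_g R_i^X-W_i'\beta_g=\varepsilon_i+\Delta_i^Y-\rho_g\Delta_i^X$ and $\hat\nu_i=\nu_i+\Delta_i^X-W_i'(\hat\gamma_g-\gamma_g)$. Expanding the product $\ind\{G_i=g\}\hat\nu_i(\varepsilon_i+\Delta_i^Y-\rho_g\Delta_i^X)$, the leading term is the infeasible-OLS score $\ind\{G_i=g\}\nu_i\varepsilon_i$. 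Every term carrying $\hat\gamma_g-\gamma_g$ is $o_P(n^{-1/2})$ because $\hat\gamma_g-\gamma_g=O_P(n^{-1/2})$ while $n^{-1}\sum_i\ind\{G_i=g\}\varepsilon_iW_i'$, $n^{-1}\sum_i\ind\{G_i=g\}\Delta_i^XW_i'$ and $n^{-1}\sum_i\ind\{G_i=g\}\Delta_i^YW_i'$ are all $O_P(n^{-1/2})$ (using the conditional moment restriction in \eqref{eq: model with subpops} and $E[\|W\|^4]<\infty$), and every product of two rank errors is $O_P(1/n)$ uniformly since $\sup_x|\hat R_X(x)-R_X(x)|=O_P(n^{-1/2})$ and likewise for $Y$. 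What remains are the three cross terms $\ind\{G_i=g\}\nu_i\Delta_i^Y$, $-\rho_g\ind\{G_i=g\}\nu_i\Delta_i^X$ and $\ind\{G_i=g\}\varepsilon_i\Delta_i^X$. Substituting $\Delta_i^X=n^{-1}\sum_{j=1}^n\bigl(I(X_i,X_j)-R_X(X_i)\bigr)+(1-\omega)/n$, with $I(x,x'):=\omega\ind\{x'\le x\}+(1-\omega)\ind\{x'<x\}$ so that $R_X(x)=E[I(x,X)]$, and its analogue for $Y$, turns each of these into a generalized U-statistic of order two in which one index ranges over $\{i:G_i=g\}$ and the other over the full sample; the deterministic $(1-\omega)/n$ pieces and the $i=j$ diagonal terms are $O_P(1/n)$ and discarded.

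Next I would replace each U-statistic by its H\'ajek projection; since all summands are bounded ($R_X,R_Y\in[0,1]$) and $E[\|W\|^4]<\infty$, the degenerate remainders in the Hoeffding decomposition are $o_P(n^{-1/2})$. The ``$i=k$'' part of each projection vanishes because $E[I(X_i,X)-R_X(X_i)\mid X_i]=0$ and $E[I(Y_i,Y)-R_Y(Y_i)\mid Y_i]=0$; the ``$j=k$'' parts produce functions of $X_k$ and $Y_k$ alone, with no factor $\ind\{G_k=g\}$, which is exactly the analytic reflection of the fact that the empirical c.d.f.s ranking subpopulation $g$ are built from the whole sample (and hence why the $\hat\rho_g$ are correlated across $g$). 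Collecting the pieces and dividing by $c_g$ gives
$$
\sqrt n(\hat\rho_g-\rho_g)=\frac{1}{\sqrt n}\sum_{i=1}^n\psi_{g,i}+o_P(1),\qquad
\psi_{g,i}=\frac{1}{c_g}\Bigl(\ind\{G_i=g\}\,h_{1,g}(X_i,W_i,Y_i)+h_{2,g}(X_i,Y_i)+h_{3,g}(X_i)\Bigr),
$$
where $h_{1,g}=\nu\varepsilon$ and $h_{2,g},h_{3,g}$ are the group-free conditional-expectation terms — the direct analogues of $h_1,h_2,h_3$ in Theorem~\ref{thm: AN for rank rank reg}, restricted to subpopulation $g$ via the interaction in \eqref{eq: model with subpops}. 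As $\psi_{g,i}$ is mean zero and square integrable, the classical central limit theorem yields $\sqrt n(\hat\rho_g-\rho_g)\to_D N(0,\sigma_g^2)$ with $\sigma_g^2:=E[\psi_{g,1}^2]$; strict positivity follows from Assumption~\ref{as: variable nu with subpops} by a routine non-degeneracy argument, for example via the bound $\sigma_g^2\ge c_g^{-2}E\bigl[\Var(\ind\{G=g\}\nu\varepsilon\mid X,Y)\bigr]$.

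I expect the main obstacle to be the third step: checking that the \emph{asymmetric}, group-indicator-weighted double sums are genuinely asymptotically linear — that the Hoeffding projection goes through cleanly when the within-group index and the full-sample index play different roles, and that each of the many remainder terms (products of rank errors, first-stage cross terms, the $1/n$ corrections from \eqref{eq: def rank}, the U-statistic diagonals) is indeed $o_P(n^{-1/2})$. Tracking the indicator $\ind\{G_i=g\}$ through the projection — in particular, why it survives in $h_{1,g}$ but not in $h_{2,g}$ or $h_{3,g}$ — is the crux of the argument and simultaneously explains the cross-subpopulation dependence of the estimators.
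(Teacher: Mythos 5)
Your proposal is correct and follows essentially the same route as the paper's proof: Frisch--Waugh--Lovell within group $g$, negligibility of the $\hat\gamma_g$-estimation error and convergence of the denominator to $E[\ind\{G=g\}\nu^2]>0$, and a U-statistic projection of the numerator that yields exactly the paper's influence function, in which the group indicator survives only in the $h_{1,g}$-type term while the correction terms carry $\ind\{G=g\}$ only inside an expectation --- which, as you note, is precisely why the $\hat\rho_g$ are dependent across subpopulations. The only substantive difference is bookkeeping: the paper writes the entire numerator as a single symmetrized third-order U-statistic with kernel $f_g(Z_i,Z_j,Z_k)=\ind\{G_i=g\}\left(I(Y_j,Y_i)-\rho_g I(X_j,X_i)-W_i'\beta_g\right)\left(I(X_k,X_i)-W_i'\gamma_g\right)$ and invokes the projection results of Serfling (which absorb the diagonal terms automatically), whereas you expand into the infeasible score plus three asymmetric second-order double sums and project each by hand, which forces you to verify separately the routine bounds on diagonals, products of rank errors, and the $(1-\omega)/n$ shifts that you correctly list; both computations produce the same $h_{1,g},h_{2,g},h_{3,g}$ and hence the same $\sigma_g^2$. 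One caveat: your closing argument for $\sigma_g^2>0$ via the bound $\sigma_g^2\ge c_g^{-2}E\left[\Var\left(\ind\{G=g\}\nu\varepsilon\mid X,Y\right)\right]$ does not actually deliver strict positivity, since the right-hand side can vanish (e.g.\ when $W$ is a constant and $G$ is degenerate given $(X,Y)$); the paper's own proof asserts positivity without verifying it either, so this is a shared loose end rather than a defect specific to your argument.
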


In Appendix~\ref{sec: all coefficients}, we prove a joint asymptotic normality result for all regression coefficients and provide explicit formulas for the asymptotic variance. In particular, letting $\hat\rho:=(\hat\rho_1,\ldots,\hat\rho_{n_G})'$, $\rho:=(\rho_1,\ldots,\rho_{n_G})'$, $\hat\beta:=(\hat\beta_1',\ldots,\hat\beta_{n_G}')'$, and $\beta:=(\beta_1',\ldots,\beta_{n_G}')'$, the appendix shows that
\begin{equation}\label{eq: AN all coeffs with subpops}
\sqrt{n} \begin{pmatrix}
    \hat\rho -\rho\\ \hat\beta-\beta 
\end{pmatrix}= \frac{1}{\sqrt{n}} \sum_{i=1}^n \psi_i + o_P(1) \to_D N(0,\Sigma),\qquad \Sigma := E[\psi_i\psi_i'], 
\end{equation}
and provides an explicit formula for $\psi_i$. From this result, one can then easily calculate the asymptotic distribution of linear combinations of parameters. For instance, similarly as in the rank-rank regression without subpopulations, a popular measure of intergenerational mobility (\cite{Deutscher:2023oo}) is the expected rank of a child with parents at a given income rank $p$,
\begin{equation}\label{eq: def theta}
    \theta_{g,p} := \beta_g + \rho_g\; p.
\end{equation}
The asymptotic distribution in \eqref{eq: AN all coeffs with subpops} allows us to construct a confidence interval for $\theta_{g,p}$ for a specific commuting zone $g$ or simultaneous confidence sets across all commuting zones. Here, an estimator of the asymptotic variance can be obtained using the plugin method and its consistency can be proved using the same arguments as those in Lemma \ref{lem: consistent variance estimation}.

\subsection{Level-Rank Regressions}
\label{sec: outcome on rank}
In this subsection, we consider a regression model with the level of $Y$ as the dependent variable and a rank as a regressor:
\begin{equation}\label{eq: model of outcome on rank}
    Y = \rho R_X(X) + W'\beta + \varepsilon,\qquad E\left[\varepsilon\begin{pmatrix}
    R_{X}(X)\\ W
    \end{pmatrix}\right]=0.
\end{equation}
Such a regression has been used, for instance, by \cite{Murphy:2020ii} who regress student outcomes like test scores on students' ranks in their classrooms. Other examples are \cite{Chetty:2014tr} and \cite{Abramitzky:2021ii} who regress a child's outcome like college attendance or teenage pregnancy on their parent's income rank.

For the above regression, the OLS estimator takes the form
\begin{equation}\label{eq: joint ols estimator outcome on rank}
    \begin{pmatrix}
    \hat{\rho}\\
    \hat{\beta}
    \end{pmatrix}=\left(\sum_{i=1}^{n}\begin{pmatrix}
    R_i^X\\
    W_{i}
    \end{pmatrix}\begin{pmatrix}
    R_i^X & W_{i}'\end{pmatrix}\right)^{-1}\sum_{i=1}^{n}\begin{pmatrix}
    R_i^X\\
    W_{i}
    \end{pmatrix}Y_i.
\end{equation}
The following theorem derives asymptotic normality for $\hat\rho$.

\begin{theorem}\label{thm: AN for outcome on rank reg}
    Suppose that \eqref{eq: first stage}, \eqref{eq: model of outcome on rank}, and \eqref{eq: joint ols estimator outcome on rank} hold, that $E[\varepsilon^4]<\infty$, and that Assumptions \ref{as: random sample}--\ref{as: variable nu} are satisfied. Then
    $$
    \sqrt n(\hat \rho - \rho) \to_D N(0,\sigma^2),
    $$
    for some $\sigma^2>0$.
\end{theorem}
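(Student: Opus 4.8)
\emph{Proof plan.} The plan is to adapt the proof of Theorem~\ref{thm: AN for rank rank reg}: the only structural simplification is that $Y$ now enters in levels, so there is a single estimated object ($\hat R_X$) to linearize rather than two, while the one new ingredient is the moment condition $E[\varepsilon^4]<\infty$, needed because $Y$, hence the error $\varepsilon$ and the first-stage residual $\nu$, is no longer bounded. First I would use the Frisch--Waugh--Lovell (partialling-out) representation: letting $\hat\nu_i:=R_i^X-W_i'\hat\gamma$ be the least-squares residual of the \emph{estimated} rank $R_i^X$ on $W_i$, and using $\sum_i\hat\nu_iW_i=0$ and $\sum_i\hat\nu_iR_i^X=\sum_i\hat\nu_i^2$, one gets $\sqrt n(\hat\rho-\rho)=(n^{-1/2}\sum_i\hat\nu_i(Y_i-\rho R_i^X))/(n^{-1}\sum_i\hat\nu_i^2)$. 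For the denominator I would invoke a Dvoretzky--Kiefer--Wolfowitz-type bound $\sup_x|\hat R_X(x)-R_X(x)|=O_P(n^{-1/2})$, which gives $\hat\gamma\to_P\gamma$ by Assumption~\ref{as: vector w} and $n^{-1}\sum_i(\hat\nu_i-\nu_i)^2=o_P(1)$ with $\nu_i:=R_X(X_i)-W_i'\gamma$, so $n^{-1}\sum_i\hat\nu_i^2\to_P\sigma_\nu^2=E[\nu^2]>0$ by Assumption~\ref{as: variable nu}.

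For the numerator, I would substitute $Y_i=\rho R_X(X_i)+W_i'\beta+\varepsilon_i$ and use $\sum_i\hat\nu_iW_i=0$ to write $n^{-1/2}\sum_i\hat\nu_i(Y_i-\rho R_i^X)=n^{-1/2}\sum_i\hat\nu_i\varepsilon_i-\rho\,n^{-1/2}\sum_i\hat\nu_i(R_i^X-R_X(X_i))$. Replacing $\hat\nu_i$ by $\nu_i$ via $\hat\nu_i-\nu_i=(R_i^X-R_X(X_i))-W_i'(\hat\gamma-\gamma)$, the $W_i'(\hat\gamma-\gamma)$ part is $o_P(1)$ because $\hat\gamma-\gamma=O_P(n^{-1/2})$ multiplies $n^{-1/2}\sum_iW_i\varepsilon_i=O_P(1)$ (using $E[W\varepsilon]=0$ and the fourth-moment conditions) or uniformly $O_P(n^{-1/2})$ factors, and the $R_i^X-R_X(X_i)$ part in the $\rho$-term is a product of two uniformly $O_P(n^{-1/2})$ quantities, hence $o_P(1)$; only $n^{-1/2}\sum_i(R_i^X-R_X(X_i))\varepsilon_i$, arising from the $\hat\nu_i\varepsilon_i$ piece, survives. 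The deterministic $(1-\omega)/n$ term inside $R_i^X$ is negligible throughout.

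The heart of the argument, and the step I expect to be the main obstacle, is the linearization of the two surviving rank corrections, which (after substituting the estimated ranks) have become V-statistics of order two in the $X_i$'s. Writing $I(a,b):=\omega\ind\{b\le a\}+(1-\omega)\ind\{b<a\}$, so that $R_i^X=n^{-1}\sum_jI(X_i,X_j)+(1-\omega)/n$ and $E[I(X_i,X_j)\mid X_i]=R_X(X_i)$, one has $n^{-1/2}\sum_i(R_i^X-R_X(X_i))\varepsilon_i=n^{-3/2}\sum_{i,j}(I(X_i,X_j)-R_X(X_i))\varepsilon_i+o_P(1)$ and, similarly, $-\rho\,n^{-1/2}\sum_i\nu_i(R_i^X-R_X(X_i))=-\rho\,n^{-3/2}\sum_{i,j}\nu_i(I(X_i,X_j)-R_X(X_i))+o_P(1)$. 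I would apply a Hoeffding/H\'ajek projection to each: the diagonal $i=j$ terms are $O_P(n^{-1})$ since $I(X_i,X_i)-R_X(X_i)=\omega-R_X(X_i)$ is bounded; one of the two first-order projections vanishes because $E[I(X_i,X_j)-R_X(X_i)\mid X_i]=0$; the surviving first-order projection produces sums of i.i.d.\ terms $n^{-1/2}\sum_j\kappa_1(X_j)$ and $-\rho\,n^{-1/2}\sum_j\kappa_2(X_j)$ for explicit bounded $\kappa_1,\kappa_2$ (the adjustments for estimated ranks, analogous to $h_2,h_3$ in Theorem~\ref{thm: AN for rank rank reg} with $R_Y(Y)$ replaced by $Y$); and the completely degenerate remainders have $L^2$-norm $O(n^{-1/2})$, hence are $o_P(1)$. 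Carrying out these remainder bounds and verifying finiteness of the moments entering the projections is the main technical labor, and it is here, together with the $\nu\varepsilon$ term below, that $E[\varepsilon^4]<\infty$ and $E[\|W\|^4]<\infty$ are used, as $\varepsilon$ and $\nu$ are unbounded.

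Collecting everything gives the asymptotically linear representation $\sqrt n(\hat\rho-\rho)=(\sigma_\nu^2\sqrt n)^{-1}\sum_{i=1}^n\ell(X_i,W_i,Y_i)+o_P(1)$, where $\ell$ is the sum of the infeasible-OLS term $\nu\varepsilon$ and the two centered rank-correction terms. Since $E[\ell^2]<\infty$, the Lindeberg--L\'evy central limit theorem and Slutsky's lemma with the denominator limit give $\sqrt n(\hat\rho-\rho)\to_D N(0,\sigma^2)$ with $\sigma^2=\sigma_\nu^{-4}E[\ell^2]$, and $\sigma^2>0$ follows from a short nondegeneracy argument (conditioning on $(X,W)$ gives $\sigma_\nu^4\sigma^2\ge E[\nu^2\Var(\varepsilon\mid X,W)]$ plus a nonnegative term, so the influence function cannot be a.s.\ constant). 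Alternatively, the same conclusion follows from the functional delta method, since $\rho$ is a Hadamard-differentiable functional of the law of $(X,W,Y)$ (as used in Remark~\ref{rem: bootstrap}), with the resulting influence function coinciding with $\ell$.
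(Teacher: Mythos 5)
Your proposal is correct and reaches the same influence function as the paper, but it organizes the key linearization differently. The paper's proof of Theorem~\ref{thm: AN for outcome on rank reg} mirrors Theorem~\ref{thm: AN for rank rank reg} almost verbatim: after the Frisch--Waugh--Lovell step and Lemmas~\ref{lem: denominator limit} and \ref{lem: three i's bound other reg}, it writes the entire numerator as a single third-order V-statistic with kernel $f(Z_i,Z_j,Z_k)=(Y_i-\rho I(X_j,X_i)-W_i'\beta)(I(X_k,X_i)-W_i'\gamma)$, symmetrizes, and invokes the V-to-U reduction and Hoeffding projection results from \cite{Serfling:2002re}, with $E[\varepsilon^4]<\infty$ and Assumption~\ref{as: vector w} guaranteeing the square-integrable kernel. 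You instead substitute the structural equation first, peel off the i.i.d.\ term $n^{-1/2}\sum_i\nu_i\varepsilon_i$, and reduce the rank-estimation error to two second-order V-statistics, each with one vanishing projection, handled by explicit H\'ajek projections and $L^2$ bounds on the degenerate remainders; your surviving projections $\kappa_1$ and $-\rho\kappa_2$ coincide (up to additive constants that drop out) with the paper's $h_2$ and $h_3$, and you correctly locate where the fourth moments are needed. Your route is more elementary and self-contained (no order-three symmetrization, no appeal to Serfling's lemmas), at the price of more bookkeeping of remainder terms; the paper's route buys uniformity, since the same template and auxiliary lemmas serve Theorems~\ref{thm: AN for rank rank reg}, \ref{thm: AN for rank rank reg with subpops}, and \ref{thm: AN for outcome on rank reg} with only the kernel changing. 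One small caveat: your closing argument for $\sigma^2>0$ via $\sigma_\nu^4\sigma^2\geq E[\nu^2\Var(\varepsilon\mid X,W)]$ only delivers strict positivity when $\varepsilon$ is conditionally nondegenerate, which the stated assumptions do not guarantee; this is not a defect relative to the paper, whose proof likewise does not establish positivity, but the sentence ``so the influence function cannot be a.s.\ constant'' overstates what that bound gives.
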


In Appendix~\ref{sec: all coefficients}, we provide a joint asymptotic normality result for all regression coefficients and derive the expression for $\psi_i$. A consistent estimator of the asymptotic variance can be obtained by the plugin method, analogously to our discussion in Section~\ref{sec: const var estim}.

\subsection{Rank-Level Regressions}
\label{sec: rank on regressor}
In this subsection, we consider a regression model in which the outcome variable has been transformed into a rank, but the regressors are included in levels:
\begin{equation}\label{eq: model of rank on regressor}
    R_Y(Y) = W'\beta + \varepsilon,\qquad E\left[\varepsilon W \right]=0,
\end{equation}
where, for simplicity of notation, we let the vector $W$ absorb the regressor $X$. Examples of such a regressions appear in \cite{Ghosh:2023oi} who regress income rank on a binary indicator for whether a couple is in an isonymous marriage and \cite{Gronqvist:2020oi} who regress a student's grade point average rank on an indicator of childhood lead exposure.

The OLS estimator takes the following form:
\begin{equation}\label{eq: joint ols estimator rank on regressor}
    \hat{\beta} =\left(\sum_{i=1}^{n} W_{i}W_{i}'\right)^{-1}\sum_{i=1}^{n}W_iR^Y_i.
\end{equation}
We then have the following result.

\begin{theorem}\label{thm: AN for rank on regressor reg}
    Suppose that \eqref{eq: model of rank on regressor}--\eqref{eq: joint ols estimator rank on regressor} hold and that Assumptions \ref{as: random sample} and \ref{as: vector w} are satisfied. Then
    $$
    \sqrt n( \hat\beta - \beta) \to_D N(0,\Sigma),
    $$
    for some positive-definite $\Sigma$.
\end{theorem}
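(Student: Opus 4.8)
The plan is to reduce $\hat\beta$ to a sample average of i.i.d.\ vectors by isolating and then linearizing the estimation error in the ranks. Write $Z_i:=(X_i,W_i,Y_i)$, $\varepsilon_i:=R_Y(Y_i)-W_i'\beta$, and $\hat Q_n:=n^{-1}\sum_{i=1}^n W_iW_i'$. From \eqref{eq: joint ols estimator rank on regressor} and the orthogonality condition in \eqref{eq: model of rank on regressor},
$$\sqrt n(\hat\beta-\beta)=\hat Q_n^{-1}\frac{1}{\sqrt n}\sum_{i=1}^n W_i\big(R_i^Y-W_i'\beta\big)=\hat Q_n^{-1}\frac{1}{\sqrt n}\sum_{i=1}^n W_i\Big(\varepsilon_i+\big(R_i^Y-R_Y(Y_i)\big)\Big).$$
By Assumption~\ref{as: vector w} and the law of large numbers, $\hat Q_n\to_P Q:=E[WW']$, which is invertible, so by Slutsky's lemma it suffices to prove a CLT for the bracketed sum. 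The term $n^{-1/2}\sum_i W_i\varepsilon_i$ is a sum of i.i.d.\ mean-zero vectors (mean zero by \eqref{eq: model of rank on regressor}) with finite variance (using $E\|W\|^4<\infty$ and $R_Y(\cdot)\in[0,1]$). The substance is the second term, which carries the error in the plugged-in ranks.

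Next I would express that term through the empirical process. By \eqref{eq: def rank}, $R_i^Y=\hat R_Y(Y_i)=n^{-1}\sum_{j=1}^n I(Y_i,Y_j)+(1-\omega)/n$ while $R_Y(Y_i)=E[I(Y_i,Y)\mid Y_i]$, so
$$\frac{1}{\sqrt n}\sum_{i=1}^n W_i\big(R_i^Y-R_Y(Y_i)\big)=\sqrt n\,\frac{1}{n^2}\sum_{i=1}^n\sum_{j=1}^n g(Z_i,Z_j)+\frac{1-\omega}{\sqrt n}\,\bar W,\qquad g(z,z'):=w\big(I(y,y')-R_Y(y)\big),$$
and the last term is $o_P(1)$. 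The double sum is a $V$-statistic whose kernel is degenerate in its first argument, $E[g(Z_1,Z_2)\mid Z_1]=0$, and has mean zero. Its Hoeffding decomposition reads $g(Z_i,Z_j)=g_2(Z_j)+\psi(Z_i,Z_j)$ with $g_2(z):=E[g(Z_1,z)]=E[W\,I(Y,y)]-E[W\,R_Y(Y)]$, a bounded function of $z=(x,w,y)$ through $y$ only, and $\psi$ doubly degenerate; standard second-moment bounds give $n^{-2}\sum_{i,j}\psi(Z_i,Z_j)=O_P(n^{-1})$ and the diagonal $n^{-2}\sum_i g(Z_i,Z_i)=O_P(n^{-1})$. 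Hence $n^{-1/2}\sum_i W_i(R_i^Y-R_Y(Y_i))=n^{-1/2}\sum_j g_2(Z_j)+o_P(1)$, and combining with the first term,
$$\sqrt n(\hat\beta-\beta)=Q^{-1}\frac{1}{\sqrt n}\sum_{i=1}^n\psi_i+o_P(1),\qquad \psi_i:=W_i\varepsilon_i+g_2(Z_i),$$
where the $\psi_i$ are i.i.d.\ mean-zero (since $E[W\varepsilon]=0$ and $E[g_2(Z)]=0$) with finite second moment. The multivariate CLT and Slutsky's lemma yield $\sqrt n(\hat\beta-\beta)\to_D N(0,\Sigma)$ with $\Sigma:=Q^{-1}E[\psi_i\psi_i']Q^{-1}$.

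It remains to show $\Sigma$ is positive definite, equivalently that $\Omega:=E[\psi_i\psi_i']$ is: if $c'\Omega c=0$ for some $c\neq0$ then $c'W_i\varepsilon_i+c'g_2(Z_i)=0$ a.s., and conditioning on $Y_i$ and exploiting the monotone structure of $y\mapsto E[W\,I(Y,y)]$ together with the nonsingularity of $E[WW']$ rules this out by the usual no-collinearity argument (this is the single point where a mild non-degeneracy of $R_Y(Y)$ is used). The main obstacle is the rank-error term in the first display: one must recognize that the plugged-in ranks, unlike an ordinary regression error, contribute a term of the same $n^{-1/2}$-order as $n^{-1/2}\sum_i W_i\varepsilon_i$, and handle the induced $V$-statistic through its Hájek projection; everything else is routine. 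We note that this is the easiest of the four theorems, because ranks enter only on the left-hand side, so — in contrast to the rank-rank case — there are no cross terms between $\varepsilon_i$ and the rank-estimation errors.
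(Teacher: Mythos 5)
Your proof is correct, and at its core it rests on the same idea as the paper's: the plugged-in ranks $R_i^Y$ turn the score $\sum_i W_i(R_i^Y-W_i'\beta)$ into a second-order $V$-statistic whose H\'ajek projection contributes a term of exact order $n^{-1/2}$, so the influence function acquires the extra piece $g_2$. Your limit matches the paper's: by the partitioned-inverse (Frisch--Waugh--Lovell) identity, $e_l'Q^{-1}W=(W_l-W_{-l}'\gamma_l)/\sigma_{\nu_l}^2$ with $\gamma_l$ as in \eqref{eq: first stage general regressor}, and since $E[\varepsilon(W_l-W_{-l}'\gamma_l)]=0$ one checks that $e_l'Q^{-1}\bigl(W\varepsilon+g_2(y)\bigr)=\sigma_{\nu_l}^{-2}\bigl(h_{l,1}(w,y)+h_{l,2}(y)\bigr)$, which is exactly the paper's $\phi_l(w,y)$. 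The packaging differs: the paper reduces to one coordinate at a time via FWL and projects the symmetrized kernel $f(Z_i,Z_j)=(I(Y_j,Y_i)-W_i'\beta)(W_{l,i}-W_{-l,i}'\gamma_l)$ onto both arguments, obtaining $h_{l,1}$ and $h_{l,2}$; you keep the full vector form, strip off $W_i\varepsilon_i$ before forming the kernel, and are left with a kernel degenerate in its first argument, so only one rank-correction term survives the projection. This is a slightly cleaner bookkeeping of the same $U$/$V$-statistic argument; both routes use the same ingredients (boundedness of the ranks, $E[\|W\|^4]<\infty$ from Assumption~\ref{as: vector w}, the standard second-moment bound for the doubly degenerate remainder, and an $O_P(n^{-1})$ bound for the diagonal).

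One caveat, which applies equally to the paper's own proof: neither argument actually establishes that $\Sigma$ is positive definite, and under Assumptions~\ref{as: random sample} and \ref{as: vector w} alone it need not be --- if $Y$ is degenerate and $W=1$, then $\varepsilon=0$ and $g_2\equiv 0$, so $\Sigma=0$. Your closing sketch correctly flags that some non-degeneracy of $R_Y(Y)$ must be invoked at this point; the paper simply asserts positive definiteness without proof, so on this detail your write-up is, if anything, the more careful of the two.
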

 
 An explicit formula for the matrix $\Sigma$ can be found in the beginning of the proof of this theorem and, like above, a consistent estimator of $\Sigma$ can be obtained by the plugin method, analogously to our discussion in Section~\ref{sec: const var estim}.

\section{Empirical Applications}
\label{sec: emp}

In two empirical applications, we illustrate the importance of employing valid inference methods in rank-rank regressions and the sensitivity of results to how ties in the ranks are handled.

\subsection{Intergenerational Income Mobility in Germany}
\label{sec: SOEP}

In this section, we study regional differences in intergenerational income mobility in Germany. Following \cite{Chetty:2014tr}, we measure mobility by rank-rank slopes and expected child ranks for the different states in Germany. As is common in the literature we then compare mobility in the different states to estimates from other countries. Such comparisons are important inputs to debates about potential causes for some regions or countries to have higher mobility than others. Through comparisons to high-mobility regions policy-makers in less mobile regions may seek to learn about potential policy levers for improving mobility.

\paragraph*{Data.} We use data from the German \cite{SOEP:2022aa} containing income on linked child-parent pairs. The sample is a subsample of that used in \cite{Dodin:2024op}, restricted to children for whom we observe the state in which they were born and restricted to states for which we observe at least 20 linked child-parent pairs. The resulting sample size is 664. The incomes of the child and the parent are computed as gross family income as in \citet[Footnote b in Table 4]{Dodin:2024op}. 

For the comparison with the U.S., we also use the publicly available estimates of expected child ranks and rank-rank slopes for each commuting zone (CZ) from \cite{Chetty:2014tr}, provided in the replication package \cite{Chetty:2022oi}.

\paragraph*{Econometric Specification.} We estimate the rank-rank slopes in \eqref{eq: model with subpops} with a subpopulation $g$ referring to the German state in which the child was born. $Y$ is the child's income, $X$ the parent's income, and $W$ only contains a constant. For each state $g$, we estimate the rank-rank slope $\rho_g$ and the expected rank of a child with parents at the income rank $p=0.25$, i.e., $\theta_{g,0.25}$ defined in \eqref{eq: def theta}. Unlike \cite{Dodin:2024op}, we do not use sample weights. Ranks are defined as in \eqref{eq: def rank} with $\omega=1$. Confidence intervals are based on the plugin estimator of the asymptotic variance in \eqref{eq: AN all coeffs with subpops} (``correct''), based on the homoskedastic (``hom''), or on the Eicker-White (``EW'') variance estimators for the regression in \eqref{eq: model with subpops}.

\paragraph*{Results.} The estimated rank-rank slope for the whole country, i.e. using all observations in our sample, is $0.384$ with a standard error of $0.035$. This estimate indicates relatively low mobility compared to other countries like Australia ($0.215$, \cite{Deutscher:2023oo}), Denmark ($0.203$, \cite{Helso:2021oo}), Italy ($0.220$, \cite{Acciari:2022uu}), or the U.S. ($0.341$, \cite{Chetty:2014tr}).

\begin{figure}[t]
  \centering
  \includegraphics[width=\textwidth]{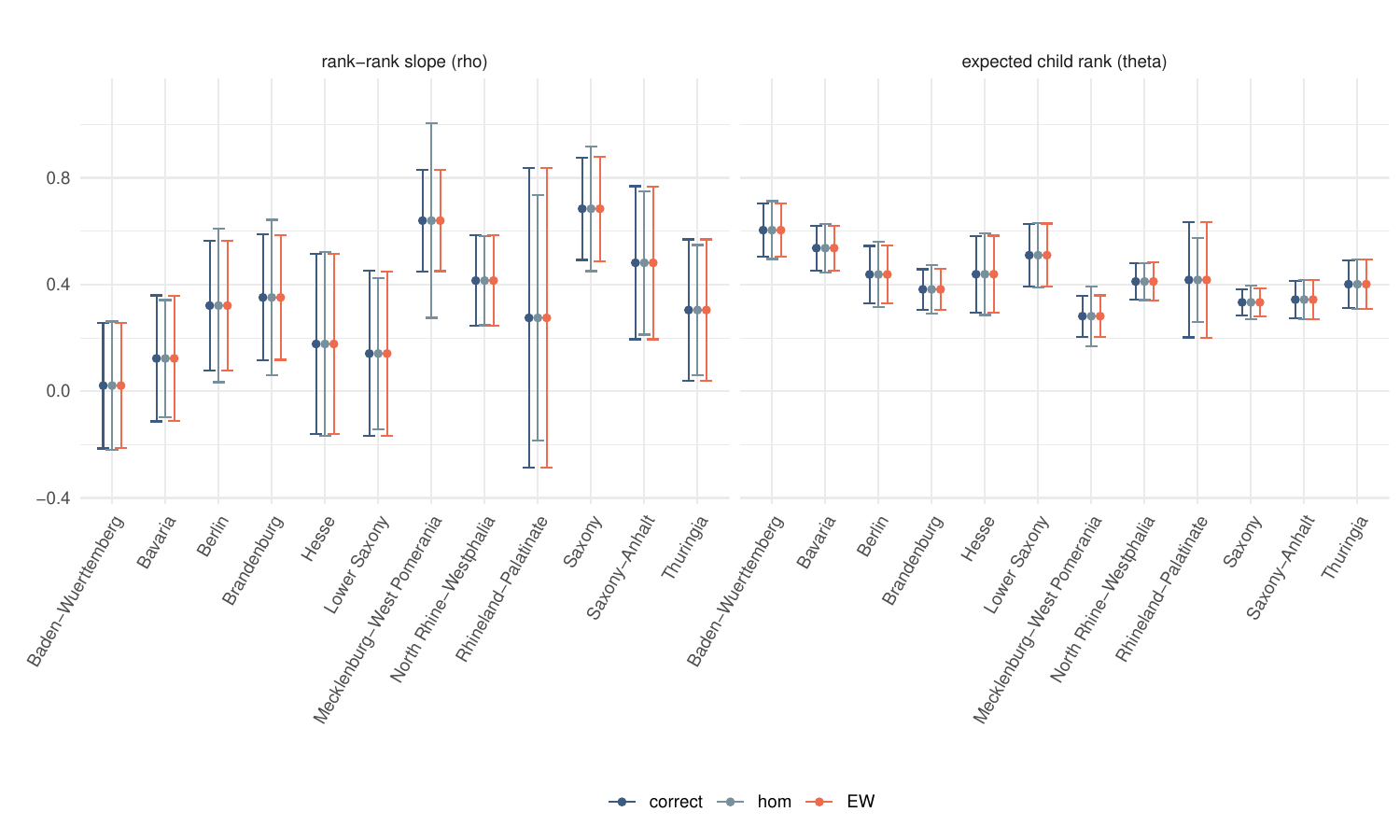}
  \caption{Estimates of the rank-rank slopes $\rho_g$ and expected child ranks $\theta_{g,0.25}$ with 95\% confidence intervals, based on \eqref{eq: model with subpops}. The colors refer to different methods of computing standard errors.}
  \label{fig: SOEP CIs}
\end{figure}

To study the heterogeneity of mobility across states, Figure~\ref{fig: SOEP CIs} shows estimates of the rank-rank slopes ($\rho_g$) and expected child ranks ($\theta_{g,0.25}$) with 95\% confidence intervals for each state. For most states, the rank-rank slope is imprecisely estimated as indicated by wide confidence intervals. The confidence intervals for expected child ranks, on the other hand, are more informative. 

For some states, e.g., Baden-Wuerttemberg, the confidence intervals based on different estimators of the asymptotic variances are similar. However, there are other states, e.g. Mecklenburg-West Pomerania and Rhineland-Palatinate, for which the correct and the hom variance estimates substantially differ. For instance, Mecklenburg-West Pomerania's hom standard error for the rank-rank slope is almost twice as large as the correct one and Rhineland-Palatinate's hom standard error for the expected child rank is 27\% smaller. In this dataset, the EW standard errors turned out to be close to the correct ones for all states and both parameters.

To show that the differences in confidence intervals across methods might matter, we compare the mobility estimates for the German states to those for CZs in the U.S.. To this end, for each German state and for each of the mobility parameters (rank-rank slope and expected child rank), we count how many U.S. CZs have mobility estimates that are larger (smaller) than the upper (lower) bound of the confidence interval of the German state.\footnote{Since the mobility estimates for the U.S. are constructed from the full population of administrative records, we ignore estimation uncertainty in these.} For instance, in the ranking of all 741 CZs, the correct confidence interval for Mecklenburg-West Pomerania's value of $\rho_g$ indicates that its mobility could be placed anywhere from rank 730 to 741, i.e. among the 12 CZs with the lowest mobility.\footnote{The ranking of CZs is in decreasing order of mobility, i.e. the CZ with rank equal to one has the highest mobility. Note that high mobility is indicated by a small (large) value of the rank-rank slope (expected child rank).} In contrast, the hom interval is so wide that it would place Mecklenburg-West Pomerania's mobility among the CZs ranked 130th or worse. Therefore, the correct confidence interval is substantially more informative, allowing us to rule out equal mobility with 729 of the 741 CZs, while the hom confidence interval rules out equal mobility only with 129 of the CZs. 

For instance, the correct confidence interval implies that CZs like Boston, Salt Lake City, San Antonio, Minneapolis, and Washington DC have mobility significantly higher (smaller rank-rank slope) than Mecklenburg-West Pomerania, but the hom confidence interval includes all of these CZs' mobility values.

For some states, the differences in confidence intervals appear small, but the differences occur in regions of mobility values that are attained by many CZs. Therefore, even such apparently small differences can be meaningful. For instance, Saxony's correct confidence interval for the expected child rank excludes the 583 CZs with highest mobility, while the hom confidence interval excludes only 516 of the highest-mobility CZs. So, even though the hom interval is only slightly wider, it includes 67 more CZs than the correct one. 

In summary, the correct confidence interval for mobility parameters in the rank-rank regression may be substantially narrower or wider than those based on other commonly used variance estimators. It therefore may lead to different conclusions in cross-country comparisons of mobility.

\subsection{Intergenerational Education Mobility in India}
\label{sec: Asher et al}

\cite{Asher:2024iu} compute rank-based measures of mobility from data on children's and parents' years of education in India. Because education is observed on discrete points of support, they partially identify rank-rank slopes for a latent, continuously distributed measure of education. We complement their analysis by directly estimating rank-rank slopes from the observed, discrete education data, highlighting the sensitivity of the rank-rank slope to the way in which ties are handled, and by providing valid confidence intervals.

\paragraph*{Data.} We use the 2012 India Human Development Survey (IHDS) dataset provided in the replication package \cite{Asher:2024aa} and focus on the relationship between fathers' and sons' years of education. 

\paragraph*{Econometric Specification.} For each birth cohort of the children, we estimate the rank-rank slopes in \eqref{eq: model with controls} with $W$ containing only a constant. Unlike \cite{Asher:2024iu} we do not use sample weights. Ranks are defined as in \eqref{eq: def rank} with $\omega\in\{0,0.5,1\}$. Confidence intervals for the rank-rank slope are based on the plugin estimator of the asymptotic variance from Section~\ref{sec: const var estim}. Confidence intervals for the rank correlation are computed using the bootstrap.

\paragraph*{Results.} Fathers' and sons' education is observed on seven support points from 0 to 14. Because of this discreteness, we compare rank-rank slope estimates with different ways of handling ties in the ranks. For each birth cohort and each definition of the rank, Figure~\ref{fig: Asher results} reports the three terms in \eqref{eq: OLS estimator}: the estimated rank-rank slope $\hat\rho$, Spearman's rank correlation $\hat\rho_S$, and the ratio of estimated standard deviations $S_Y/S_X$.

\begin{figure}[!t]
  \centering
  \includegraphics[width=\textwidth]{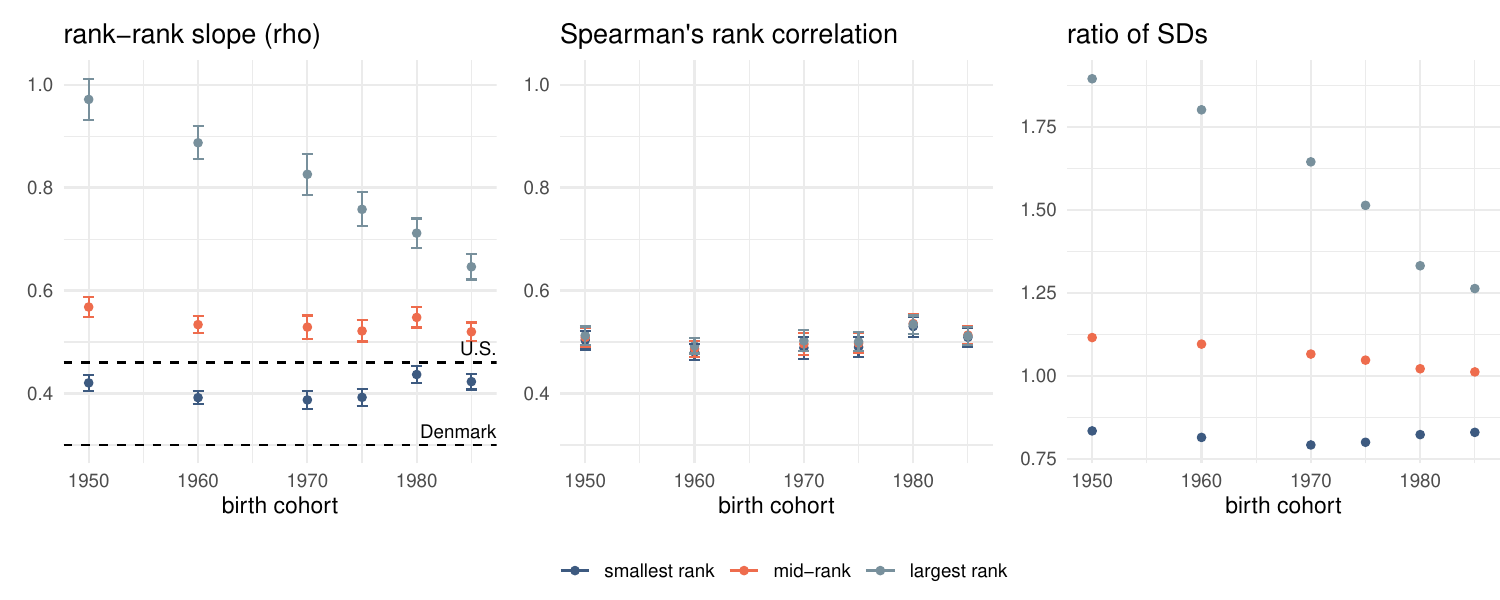}
  \caption{Estimated rank-rank slope $\hat\rho$, Spearman's rank correlation $\hat\rho_S$, and the ratio of standard deviations $S_Y/S_X$ by birth cohort. Bars indicate 95\% confidence intervals. The colors refer to different definitions of the ranks: ``smallest rank'' ($\omega=0$), ``mid-rank'' ($\omega=0.5$), and ``largest rank'' ($\omega=1$). The dashed-lines indicate the values of similar measures of mobility for the U.S. and Denmark as reported in \cite{Asher:2024iu}.}
  \label{fig: Asher results}
\end{figure}

Consider the graph on the left, which shows estimates and confidence intervals for the rank-rank slope. The estimates are very sensitive to the way in which ties are handled, particularly in the earlier birth cohorts. For instance, using the largest rank ($\omega=1$) leads to a rank-rank slope close to one with a relatively wide confidence interval. On the other hand, using the smallest rank ($\omega=0$) leads to a rank-rank slope of $0.42$. In the latest birth cohort (1985), the largest and smallest ranks lead to rank-rank slopes of $0.65$ and $0.42$, still a substantial difference. For the 1950 birth cohort, according to the estimates based on the smallest ranks, mobility in India is higher than in the U.S. (i.e., the rank-rank slope is smaller), while the estimates based on the largest ranks imply that India has extremely low mobility with a rank-rank slope about twice as large as that of the U.S.. Also, based on the largest ranks, mobility in India has increased (rank-rank slopes have decreased) across birth cohorts. According to the estimates based on the smallest ranks, mobility has remained constant at a high level (small rank-rank slope). It is worth emphasizing that, for a given birth cohort, the difference in estimates is due only to the way in which ties in the ranks are handled, i.e. which value of $\omega$ is used. Otherwise, the estimator and the data used are identical.

The graph in the middle shows estimates and confidence intervals for the rank correlation for different ways of handling ties. Interestingly, the estimates and their confidence intervals are almost insensitive to the way in which ties are handled. The estimates are close to $0.5$ for all birth cohorts and all values of $\omega$, and the confidence intervals are all very narrow. To explain why the estimates of the rank-rank slopes and of the rank correlation substantially differ, recall that the rank-rank slope $\hat{\rho}$ is equal to the rank correlation $\hat{\rho}_S$ multiplied by the ratio of standard deviations of the ranks; this is the first equality in \eqref{eq: OLS estimator}, which holds regardless of whether the marginal distributions of $X$ and $Y$ are continuous or not.
If they are both continuous, then the second equality in \eqref{eq: OLS estimator} also holds because the ratio of standard deviations is equal to one. On the other hand, when the marginal distributions are discrete, the ratio can take any value in $(0,\infty)$. While the rank correlation is almost insensitive to how ties are handled and barely varies from birth cohort to birth cohort, the ratio of standard deviations is very sensitive to the way in which ties are handled and varies substantially across birth cohorts. The ratios of standard deviations is displayed in the right-most plot in Figure~\ref{fig: Asher results}. When the smallest ranks ($\omega=0$) are employed, then the ratio of standard deviations is close to $0.8$ for all birth cohorts, while for the largest ranks ($\omega=1$), the ratio starts out at a value close to $1.9$ for the 1950 birth cohort and then decreases towards about $1.25$ for the 1985 birth cohort. Overall, the patterns in the ratios of standard deviations mimic those found in the rank-rank slopes, while the rank correlation remains close to $0.5$ for all birth cohorts and all definitions of the ranks. Therefore, the patterns seen in the rank-rank slopes across birth cohorts and their sensitivity to the way in which ties are handled originate from the variation in the ratios of standard deviations rather than in Spearman's rank correlation.

In summary, in this dataset, it is possible to reach robust conclusions about the value of the rank correlation, namely it being close to $0.5$ for all birth cohorts. On the other hand, the rank-rank slope estimates are so sensitive to the definition of the ranks that they do not provide robust conclusions about whether the trend across birth cohorts is increasing or decreasing, nor whether the rank-rank slope in India is larger or smaller than that of the U.S.. In particular, the trends observed in rank-rank slopes are not due to trends in the rank correlation, but rather due to trends in the marginal distribution of education.

\bibliographystyle{econ-econometrica}
\bibliography{ref}

\clearpage
\newpage

\begin{appendix}
\section{Proof of the Main Result}\label{sec: proof main result}

The proof of Theorem \ref{thm: AN for rank rank reg} relies on a few auxiliary results, which are presented below. As in the main text, throughout this appendix, $\omega\in[0,1]$ is fixed and the same in the definitions of $R_X$, $\hat{R}_X$, $R_Y$, and $\hat{R}_Y$.

\begin{proof}[Proof of Theorem \ref{thm: AN for rank rank reg}]
    By the Frisch-Waugh-Lovell theorem, the estimator $\hat\rho$ in \eqref{eq: joint ols estimator}
    can alternatively be written as
    \begin{equation}\label{eq: hat rho alternative}
        \hat\rho = \frac{\sum_{i=1}^n  R_i^Y( R_i^X - W_i'\hat\gamma)}{\sum_{i=1}^n (R_i^X - W_i'\hat\gamma)^2} = \frac{\sum_{i=1}^n \hat R_Y(Y_i)(\hat R_X(X_i) - W_i'\hat\gamma)}{\sum_{i=1}^n (\hat R_X(X_i) - W_i'\hat\gamma)^2},
    \end{equation}
    where $\hat\gamma = (\sum_{i=1}^n W_i W_i')^{-1}(\sum_{i=1}^n W_i\hat R_X(X_i))$
    is the OLS estimator of a regression of $\hat R_X(X_i)$ on $W_i$. Therefore, using $\sum_{i=1}^n W_i(\hat R_X(X_i) - W_i'\hat\gamma) = 0$ and replacing $\hat R_Y(Y_i)$ in the numerator of \eqref{eq: hat rho alternative} by $\hat R_Y(Y_i) - \rho\hat R_X(X_i) + \rho\hat R_X(X_i)$,
\begin{equation}\label{eq: fwl to be referred in another thm}
    \sqrt n(\hat\rho - \rho) = \frac{\frac{1}{\sqrt n}\sum_{i=1}^n (\hat R_Y(Y_i) - \rho\hat R_X(X_i))(\hat R_X(X_i) - W_i'\hat\gamma)}{\frac{1}{n}\sum_{i=1}^n (\hat R_X(X_i) - W_i'\hat\gamma)^2}.
\end{equation}
    Thus, by Assumption \ref{as: variable nu} and Lemmas \ref{lem: denominator limit} and \ref{lem: three i's bound},
    \begin{equation}\label{eq: rho hat ratio}
        \sqrt n(\hat\rho - \rho) = \frac{\frac{1}{\sqrt n}\sum_{i=1}^n (\hat R_Y(Y_i) - \rho\hat R_X(X_i) - W_i'\beta)(\hat R_X(X_i) - W_i'\gamma)}{\frac{1}{n}\sum_{i=1}^n (\hat R_X(X_i) - W_i'\hat\gamma)^2} + o_P(1).
    \end{equation}
    Consider the numerator. Define $Z_i := (Y_i,X_i,W_i')'$ for all $i=1,\dots,n$ and
    $$f(Z_i,Z_j,Z_k):=\left(I(Y_j, Y_i) - \rho I(X_j, X_i) - W_i'\beta\right)(I(X_k, X_i) - W_i'\gamma)$$ 
    for all $i,j,k=1,\dots,n$. Also, define $h(Z_i,Z_j,Z_k) := 6^{-1}\sum_{i_1,i_2,i_3}f(Z_{i_1},Z_{i_2},Z_{i_3})$, where the sum is over all six permutations of the triplet $(i,j,k)$, 
for all $i,j,k=1,\dots,n$. Note that $h$ is a symmetric function satisfying $E[h(Z_i,Z_j,Z_k)]=0$ whenever $1\leq i<j<k\leq n$. Moreover, let $\Sigma_1$ be the set of triplets $(i,j,k)$ in $\{1,\dots,n\}^3$ such that all three elements are the same, $\Sigma_2$ be the set of triplet $(i,j,k)$ in $\{1,\dots,n\}^3$ such that two out of three elements are the same, and $\Sigma_3$ be the set of triplet $(i,j,k)$ in $\{1,\dots,n\}^3$ such that all three elements are different. It is then easy to check that for all $l=1,2,3$,
$$
\sum_{(i,j,k)\in\Sigma_l} f(Z_i,Z_j,Z_k) = \sum_{(i,j,k)\in\Sigma_l} h(Z_i,Z_j,Z_k),
$$
and so
$$
\sum_{i,j,k=1}^n f(Z_i,Z_j,Z_k) = \sum_{i,j,k=1}^n h(Z_i,Z_j,Z_k).
$$
Therefore,
    \begin{align*}
        &\frac{1}{n}\sum_{i=1}^n (\hat R_Y(Y_i) - \rho\hat R_X(X_i) - W_i'\beta)(\hat R_X(X_i) - W_i'\gamma)\\
        &\qquad=\frac{1}{n^3}\sum_{i,j,k=1}^n \left(I(Y_j, Y_i) - \rho I(X_j, X_i) - W_i'\beta\right)(I(X_k, X_i) - W_i'\gamma) + o_P(n^{-1/2}) \\
        &\qquad = \frac{1}{n^3}\sum_{i,j,k=1}^n f(Z_i,Z_j,Z_k) + o_P(n^{-1/2}) = \frac{1}{n^3}\sum_{i,j,k=1}^n h(Z_i,Z_j,Z_k) + o_P(n^{-1/2})\\
        &\qquad =\begin{pmatrix}n\\ 3\end{pmatrix}^{-1} \sum_{1\leq i<j<k\leq n} h(Z_i,Z_j,Z_k) + o_P(n^{-1/2}),
    \end{align*}
where the last line follows from the Lemma on p. 206 of \cite{Serfling:2002re}, whose application is justified since  $E(h(Z_i,Z_j,Z_k)^2)<\infty$ by Assumption \ref{as: vector w}.
    Furthermore, the results on p. 188 of \cite{Serfling:2002re} imply that the U-statistic can be projected onto the basic observations. To compute the projection, note that, for $z=(y,x,w')'$,
    \begin{align*} 
        E[f(Z_1,Z_2,Z_3)\mid Z_1=z] &= E[f(Z_1,Z_3,Z_2) | Z_1=z]\\
                &= (R_Y(y) -\rho R_X(x)-w'\beta)(R_X(x)-w'\gamma) = h_1(x,w,y),\\
        E[f(Z_2,Z_1,Z_3)\mid Z_1=z] &= E[f(Z_3,Z_1,Z_2)\mid Z_1=z]\\
                &= E\left[(I(y,Y) -\rho I(x,X)-W'\beta)(R_X(X)-W'\gamma)\right] = h_2(x,y),\\
        E[f(Z_2,Z_3,Z_1)\mid Z_1=z] &= E[f(Z_3,Z_2,Z_1)\mid Z_1=z]\\
                &= E\left[(R_Y(Y) -\rho R_X(X)-W'\beta)(I(x,X)-W'\gamma)\right] = h_3(x).
    \end{align*}
    Denoting $\tilde h(z) = E[h(Z_1,Z_2,Z_3) | Z_1 = z]$, we thus have $\tilde h(z)=(h_1(x,w,y)+h_2(x,y)+h_3(x))/3$, and so
    \begin{align*}
        &\begin{pmatrix}n\\ 3\end{pmatrix}^{-1} \sum_{1\leq i<j<k\leq n} h(Z_i,Z_j,Z_k)  = \frac{3}{n}\sum_{i=1}^n \tilde h(Z_i) + o_P(n^{-1/2})\\
            &\quad=\frac{1}{n}\sum_{i=1}^n \Big\{  h_1(X_i,W_i,Y_i) + h_2(X_i,Y_i) + h_3(X_i) \Big\} + o_P(n^{-1/2}).
    \end{align*}
    Therefore, the central limit theorem (e.g., Theorem 9.5.6 in \cite{dudley02}) implies that the numerator of \eqref{eq: rho hat ratio} is asymptotically normal with mean zero and variance $E[(h_1(X,W,Y) + h_2(X,Y) + h_3(X))^2]$. By Lemma~\ref{lem: denominator limit}, the denominator of \eqref{eq: rho hat ratio} converges in probability to $\sigma_{\nu}^2$, so that Slutsky's lemma yields the asserted claim.
\end{proof}

\begin{lemma}\label{lem: gamma asy normality}
    Under Assumptions \ref{as: random sample} and \ref{as: vector w}, we have $\sqrt n(\widehat\gamma - \gamma) = O_P(1)$.
\end{lemma}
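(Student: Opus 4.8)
The plan is to write $\hat\gamma$ as a ratio of sample moments and to show that each of these moments is $\sqrt n$-close to its population analogue, the only delicate point being the contribution of the estimation error in the ranks. Set $\hat Q := n^{-1}\sum_{i=1}^n W_iW_i'$ and $\hat m := n^{-1}\sum_{i=1}^n W_i\hat R_X(X_i)$, so that $\hat\gamma = \hat Q^{-1}\hat m$, and recall that $\gamma = Q^{-1}m$ with $Q := E[WW']$ and $m := E[WR_X(X)]$, which follows from the moment condition $E[\nu W]=0$ in \eqref{eq: first stage}. Under Assumption~\ref{as: random sample} and $E[\|W\|^4]<\infty$ from Assumption~\ref{as: vector w}, the central limit theorem gives $\hat Q - Q = O_P(n^{-1/2})$, and since $Q$ is non-singular, a first-order expansion of the matrix inverse yields $\hat Q^{-1} = Q^{-1} + O_P(n^{-1/2})$.

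Next I would handle $\hat m$ by splitting it as $\hat m = n^{-1}\sum_i W_iR_X(X_i) + n^{-1}\sum_i W_i(\hat R_X(X_i)-R_X(X_i))$. The first summand equals $m + O_P(n^{-1/2})$ by the central limit theorem, since $E[\|WR_X(X)\|^2]\le E[\|W\|^2]<\infty$. For the second, the definitions in \eqref{eq: def rank} give
$$\hat R_X(x) - R_X(x) = \omega\big(\hat F_X(x) - F_X(x)\big) + (1-\omega)\big(\hat F_X^-(x) - F_X^-(x)\big) + \frac{1-\omega}{n},$$
so that
$$\Big\|\tfrac1n\sum_{i=1}^n W_i\big(\hat R_X(X_i)-R_X(X_i)\big)\Big\| \le \Big(\sup_x|\hat F_X(x)-F_X(x)| + \sup_x|\hat F_X^-(x)-F_X^-(x)| + \tfrac1n\Big)\cdot\tfrac1n\sum_{i=1}^n\|W_i\|.$$
Here $\sup_x|\hat F_X(x)-F_X(x)| = O_P(n^{-1/2})$ by the Dvoretzky–Kiefer–Wolfowitz inequality; $\sup_x|\hat F_X^-(x)-F_X^-(x)| = \sup_x|\hat F_X(x-)-F_X(x-)| \le \sup_x|\hat F_X(x)-F_X(x)| = O_P(n^{-1/2})$ because $\hat F_X^-$ is the left-limit of the empirical cdf; and $n^{-1}\sum_i\|W_i\| = O_P(1)$ by the law of large numbers. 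Hence the second summand is $O_P(n^{-1/2})$ and $\hat m = m + O_P(n^{-1/2})$.

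Finally I would combine the pieces: $\hat\gamma - \gamma = \hat Q^{-1}\hat m - Q^{-1}m = \big(Q^{-1}+O_P(n^{-1/2})\big)\big(m+O_P(n^{-1/2})\big) - Q^{-1}m = O_P(n^{-1/2})$, which is exactly $\sqrt n(\hat\gamma-\gamma) = O_P(1)$. The only nontrivial step is the control of $n^{-1}\sum_i W_i(\hat R_X(X_i)-R_X(X_i))$, and even that reduces to a uniform bound on the empirical cdf once one notices that the same bound applies to $\hat F_X^-$; an alternative is to treat $n^{-1}\sum_i W_i\hat F_X(X_i)$ directly as a $V$-statistic of order two and invoke standard $U$-statistic rates, but the uniform-bound route is shorter.
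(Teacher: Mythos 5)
Your proof is correct. It also shares the paper's overall skeleton: the paper writes
$\sqrt n(\hat\gamma-\gamma)=\bigl(\tfrac1n\sum_i W_iW_i'\bigr)^{-1}\bigl(\tfrac{1}{\sqrt n}\sum_i W_i(\nu_i+\hat R_X(X_i)-R_X(X_i))\bigr)$,
handles the matrix factor by the law of large numbers and the $W_i\nu_i$ part by Chebyshev's inequality, exactly as you do (your split $W_i\hat R_X(X_i)=W_iR_X(X_i)+W_i(\hat R_X(X_i)-R_X(X_i))$ is the same decomposition, since $W_iR_X(X_i)=W_iW_i'\gamma+W_i\nu_i$). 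Where you genuinely diverge is the rank-estimation-error term $\tfrac{1}{\sqrt n}\sum_i W_i(\hat R_X(X_i)-R_X(X_i))$: the paper treats it as a second-order V/U-statistic and invokes the variance bounds on p.~183 of \cite{Serfling:2002re}, whereas you bound it by
$\sqrt n\bigl(\sup_x|\hat F_X(x)-F_X(x)|+\sup_x|\hat F_X^-(x)-F_X^-(x)|+\tfrac1n\bigr)\cdot\tfrac1n\sum_i\|W_i\|$
and use the Dvoretzky--Kiefer--Wolfowitz inequality, noting that the left-limit empirical cdf deviation is dominated by the ordinary sup deviation (or, equivalently, by applying DKW to $-X_i$, as the paper does in its Glivenko--Cantelli lemma). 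Both arguments are valid without continuity of $F_X$, which matters here. Your route is more elementary and self-contained, needs only $E\|W\|<\infty$ for that term, and makes the $O_P(n^{-1/2})$ rate of the rank error completely transparent; the paper's route avoids any uniform empirical-process bound and keeps the proof stylistically aligned with the U-statistic projection machinery already deployed for Theorem~\ref{thm: AN for rank rank reg}. One small point worth making explicit in your write-up is that $\hat Q$ is invertible with probability approaching one (so that $\hat Q^{-1}=Q^{-1}+O_P(n^{-1/2})$ via $\hat Q^{-1}-Q^{-1}=\hat Q^{-1}(Q-\hat Q)Q^{-1}$); this is standard and does not affect correctness.
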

\begin{proof}
    By \eqref{eq: first stage}, for $\widehat\gamma$ defined in the beginning of the proof of Theorem \ref{thm: AN for rank rank reg},
    $$
    \sqrt n(\widehat\gamma - \gamma) = \left(\frac{1}{n}\sum_{i=1}^n W_iW_i'\right)^{-1}\left(\frac{1}{\sqrt n}\sum_{i=1}^n W_i(\nu_i + \widehat R_X(X_i) - R_X(X_i))\right).
    $$
    The rest of the proof follows from the law of large numbers, Chebyshev's inequality, and results on p. 183 in \cite{Serfling:2002re}.
\end{proof}

\begin{lemma}\label{lem: glivenko cantelli}
    Under Assumption \ref{as: random sample}, we have $\sup_{x\in\mathbb R}|\hat{R}_X(x) - R_X(x)|=o_P(1)$ and $\sup_{y\in\mathbb R}|\hat R_Y(y) - R_Y(y)|=o_P(1)$.
\end{lemma}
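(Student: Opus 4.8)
The plan is to reduce everything to the classical Glivenko--Cantelli theorem. First I would write out the difference explicitly from the definitions in Section~\ref{sub: notations rank-rank}: for every $x\in\mathbb R$,
$$
\hat R_X(x) - R_X(x) = \omega\big(\hat F_X(x) - F_X(x)\big) + (1-\omega)\big(\hat F_X^-(x) - F_X^-(x)\big) + \frac{1-\omega}{n},
$$
so that, since $\omega\in[0,1]$, the triangle inequality gives
$$
\sup_{x\in\mathbb R}\big|\hat R_X(x) - R_X(x)\big| \le \sup_{x\in\mathbb R}\big|\hat F_X(x) - F_X(x)\big| + \sup_{x\in\mathbb R}\big|\hat F_X^-(x) - F_X^-(x)\big| + \frac{1}{n}.
$$
The deterministic term $1/n$ is negligible, so it remains to control the two supremum terms.

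The first term is handled directly by the Glivenko--Cantelli theorem: under Assumption~\ref{as: random sample} the $X_i$ are i.i.d., hence $\sup_x|\hat F_X(x) - F_X(x)|\to 0$ almost surely, and in particular $o_P(1)$. For the second term I would observe that both $\hat F_X^-$ and $F_X^-$ are the left-continuous versions of $\hat F_X$ and $F_X$, i.e. $\hat F_X^-(x) = \hat F_X(x-)$ and $F_X^-(x) = F_X(x-) = P(X<x)$. Consequently, for each $x$, $\hat F_X^-(x) - F_X^-(x) = \lim_{t\uparrow x}\big(\hat F_X(t) - F_X(t)\big)$, which is bounded in absolute value by $\sup_t|\hat F_X(t) - F_X(t)|$; taking the supremum over $x$ gives $\sup_x|\hat F_X^-(x) - F_X^-(x)| \le \sup_x|\hat F_X(x) - F_X(x)|$. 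Plugging both bounds into the display above yields $\sup_x|\hat R_X(x) - R_X(x)| \le 2\sup_x|\hat F_X(x) - F_X(x)| + 1/n = o_P(1)$. The identical argument applied to $Y_1,\dots,Y_n$ gives $\sup_y|\hat R_Y(y) - R_Y(y)| = o_P(1)$.

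I do not expect a genuine obstacle here — the statement is essentially Glivenko--Cantelli in disguise. The only point requiring a small amount of care is the treatment of the left-limit empirical cdf $\hat F_X^-$: rather than invoking a separate uniform law, I reduce it to the ordinary empirical cdf via the left-limit identity above, which avoids any issue arising from pointmasses in $F_X$ (note that no continuity of $F_X$ or $F_Y$ is assumed). The constant shift $\tfrac{1-\omega}{n}$ in \eqref{eq: def rank} is clearly asymptotically irrelevant and plays no role.
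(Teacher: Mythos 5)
Your proof is correct, and at its core it is the same as the paper's: decompose $\hat R_X - R_X$ via the definition into the $\hat F_X$ term, the $\hat F_X^-$ term, and the negligible $\tfrac{1-\omega}{n}$ shift, then invoke the classical Glivenko--Cantelli theorem. The only difference is how the left-limit term is handled: the paper simply applies Glivenko--Cantelli a second time, to $-X_1,\dots,-X_n$ (since $\hat F_X^-(x)-F_X^-(x)$ equals the ordinary empirical-cdf deviation of $-X_i$ evaluated at $-x$, up to sign), whereas you deduce the uniform bound for $\hat F_X^-$ from the bound for $\hat F_X$ itself via the left-limit identity $\sup_x|\hat F_X^-(x)-F_X^-(x)|\le\sup_x|\hat F_X(x)-F_X(x)|$ (valid because both $\hat F_X$ and $F_X$ are nondecreasing, so the left limits of the difference exist and are dominated by the supremum). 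Both treatments are equally valid and equally short; yours avoids a second invocation of the theorem and makes explicit that no continuity of $F_X$ is needed, while the paper's reflection trick is a standard one-liner. No gaps.
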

\begin{proof}
    By the Glivenko-Cantelli theorem (e.g., Theorem 1.3 in \cite{dudley2014}), $\sup_{x\in\mathbb R}|\hat F_X(x) - F_X(x)|=o_P(1)$. Also, by the Glivenko-Cantelli theorem applied to $-X_i$'s instead of $X_i$'s, $\sup_{x\in\mathbb R}|\hat F_X^-(x) - F_X^-(x)|=o_P(1)$. The first claim thus follows. The second claim follows from the same argument.
\end{proof}

\begin{lemma}\label{lem: denominator limit}
    Under Assumptions \ref{as: random sample} and \ref{as: vector w}, we have 
    $$\frac{1}{n}\sum_{i=1}^n(\widehat R_X(X_i)-W_i'\widehat\gamma)^2\to_P E[(R_X(X)-W'\gamma)^2] = \sigma_{\nu}^2.$$
\end{lemma}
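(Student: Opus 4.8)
The plan is to split the summand into its population projection residual plus the two estimation-error pieces and show the latter are asymptotically negligible. Using the first-stage equation \eqref{eq: first stage}, write
$$
\widehat R_X(X_i) - W_i'\widehat\gamma = \nu_i + \big(\widehat R_X(X_i) - R_X(X_i)\big) - W_i'(\widehat\gamma - \gamma),
$$
where $\nu_i := R_X(X_i) - W_i'\gamma$, so that $E[\nu_i^2] = E[(R_X(X) - W'\gamma)^2] = \sigma_{\nu}^2$ by the definition of $\nu$. Expanding the square and averaging over $i$ produces six terms: $\frac1n\sum_i \nu_i^2$; the two squared-error averages $\frac1n\sum_i(\widehat R_X(X_i) - R_X(X_i))^2$ and $\frac1n\sum_i (W_i'(\widehat\gamma - \gamma))^2$; and three cross terms. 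The goal is to show the last five are $o_P(1)$, leaving $\frac1n\sum_i\nu_i^2$, which converges by the law of large numbers.

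First I would argue $\frac1n\sum_i \nu_i^2 \to_P \sigma_{\nu}^2$ by the weak law of large numbers; this is legitimate because $R_X(X)\in[0,1]$ is bounded and $E[\|W\|^2]<\infty$ by Assumption \ref{as: vector w} (so that $\gamma = (E[WW'])^{-1}E[W R_X(X)]$ is finite and $E[\nu^2]<\infty$). Second, $\frac1n\sum_i(\widehat R_X(X_i) - R_X(X_i))^2 \le \sup_{x\in\mathbb R}|\widehat R_X(x) - R_X(x)|^2 = o_P(1)$ by Lemma \ref{lem: glivenko cantelli}. Third, $\frac1n\sum_i (W_i'(\widehat\gamma - \gamma))^2 \le \|\widehat\gamma - \gamma\|^2 \cdot \frac1n\sum_i\|W_i\|^2$, where $\|\widehat\gamma - \gamma\|^2 = O_P(1/n)$ by Lemma \ref{lem: gamma asy normality} and $\frac1n\sum_i\|W_i\|^2 = O_P(1)$ by the law of large numbers, so the product is $o_P(1)$.

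Finally, each of the three cross terms is bounded by the Cauchy–Schwarz inequality in terms of the quantities already controlled; for instance,
$$
\left|\frac1n\sum_{i=1}^n \nu_i\big(\widehat R_X(X_i) - R_X(X_i)\big)\right| \le \left(\frac1n\sum_{i=1}^n\nu_i^2\right)^{1/2}\left(\frac1n\sum_{i=1}^n(\widehat R_X(X_i) - R_X(X_i))^2\right)^{1/2} = O_P(1)\cdot o_P(1) = o_P(1),
$$
and the cross terms involving $W_i'(\widehat\gamma - \gamma)$ are handled identically. Collecting terms yields $\frac1n\sum_i(\widehat R_X(X_i) - W_i'\widehat\gamma)^2 = \frac1n\sum_i\nu_i^2 + o_P(1) \to_P \sigma_{\nu}^2$. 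I do not expect a genuine obstacle here: the only points needing care are verifying the moment conditions that license the two laws of large numbers, and invoking the uniform convergence of the estimated ranks and the $\sqrt n$-consistency of $\widehat\gamma$, both of which are already established in Lemmas \ref{lem: glivenko cantelli} and \ref{lem: gamma asy normality}.
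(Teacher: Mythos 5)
Your proposal is correct and follows essentially the same route as the paper: the paper's proof invokes the identity $a^2-b^2=(a-b)^2+2(a-b)b$ with $a=\widehat R_X(X_i)-W_i'\widehat\gamma$ and $b=\nu_i$, together with the law of large numbers and Lemmas \ref{lem: gamma asy normality} and \ref{lem: glivenko cantelli}, which is exactly the decomposition-plus-Cauchy--Schwarz argument you carry out in more explicit detail. Your verification of the moment conditions (boundedness of $R_X(X)$ and $E[\|W\|^4]<\infty$ from Assumption \ref{as: vector w}) correctly fills in what the paper leaves implicit.
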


\begin{proof}
The proof follows from the law of large numbers, the elementary identity $a^2 - b^2 = (a-b)^2 +2(a-b)b$, and Lemmas \ref{lem: gamma asy normality} and \ref{lem: glivenko cantelli}.
\end{proof}

\begin{lemma}\label{lem: three i's bound}
    Under Assumptions \ref{as: random sample} and \ref{as: vector w}, we have
    \begin{align*}
    & \frac{1}{\sqrt n}\sum_{i=1}^n(\hat R_Y(Y_i) - \rho\hat R_X(X_i))(\hat R_X(X_i) - W_i'\hat\gamma) \\
    &\qquad = \frac{1}{\sqrt n}\sum_{i=1}^n(\hat R_Y(Y_i) - \rho\hat R_X(X_i) - W_i'\beta)(\hat R_X(X_i) - W_i'\gamma) + o_P(1).
    \end{align*}
\end{lemma}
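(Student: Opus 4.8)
The plan is to show that replacing $\hat\gamma$ by $\gamma$ in the second factor and inserting the extra $-W_i'\beta$ term in the first factor each change the sum by $o_P(1)$. Write the left-hand side as
$$\frac{1}{\sqrt n}\sum_{i=1}^n A_i\,(B_i - W_i'(\hat\gamma-\gamma)),\qquad A_i := \hat R_Y(Y_i) - \rho\hat R_X(X_i),\quad B_i := \hat R_X(X_i) - W_i'\gamma,$$
so that the difference between the two sides equals
$$-\frac{1}{\sqrt n}\sum_{i=1}^n A_i\,W_i'(\hat\gamma-\gamma) \;+\; \frac{1}{\sqrt n}\sum_{i=1}^n W_i'\beta\,(B_i - W_i'(\hat\gamma-\gamma)).$$
I would handle these two pieces separately, in that order.

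For the first piece, factor out $\sqrt n(\hat\gamma-\gamma)$, which is $O_P(1)$ by Lemma~\ref{lem: gamma asy normality}, leaving $\frac1n\sum_{i=1}^n A_i W_i'$. Here I would use $\sup_x|\hat R_X(x)-R_X(x)| = o_P(1)$ and the analogous bound for $\hat R_Y$ (Lemma~\ref{lem: glivenko cantelli}) together with $E[\|W\|]<\infty$ (Assumption~\ref{as: vector w}, via Cauchy--Schwarz) and the law of large numbers to conclude $\frac1n\sum_i A_i W_i' = O_P(1)$ (in fact it converges in probability to $E[(R_Y(Y)-\rho R_X(X))W']$). Multiplying an $O_P(1)$ quantity by the $1/\sqrt n$-size factor $\hat\gamma-\gamma$ gives $o_P(1)$.

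For the second piece, note that $B_i - W_i'(\hat\gamma-\gamma) = \hat R_X(X_i) - W_i'\hat\gamma = \hat\nu_i$, the first-stage residual, and that by construction $\sum_{i=1}^n W_i\hat\nu_i = 0$. Hence $\frac{1}{\sqrt n}\sum_{i=1}^n W_i'\beta\,\hat\nu_i = \beta'\big(\frac{1}{\sqrt n}\sum_{i=1}^n W_i\hat\nu_i\big) = 0$ exactly, so this piece vanishes identically. (If $W$ does not contain a constant the orthogonality is still exactly what is needed, since $W_i'\beta$ is a linear combination of the coordinates of $W_i$.) Combining the two pieces gives the claim.

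The only mild subtlety—really the ``main obstacle,'' though it is routine—is controlling $\frac1n\sum_i A_i W_i'$: since $A_i$ involves estimated ranks it is not an i.i.d. average, so one cannot cite the LLN directly. The fix is the standard decomposition $A_i = (R_Y(Y_i)-\rho R_X(X_i)) + (\hat R_Y(Y_i)-R_Y(Y_i)) - \rho(\hat R_X(X_i)-R_X(X_i))$: the first part is i.i.d. and handled by the LLN under $E[\|W\|^2]<\infty$, while the remaining parts are bounded in absolute value by $(1+|\rho|)\big(\sup_y|\hat R_Y(y)-R_Y(y)| + \sup_x|\hat R_X(x)-R_X(x)|\big)\cdot\frac1n\sum_i\|W_i\| = o_P(1)\cdot O_P(1) = o_P(1)$, using Lemma~\ref{lem: glivenko cantelli} and the LLN for $\frac1n\sum_i\|W_i\|$. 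Everything else is bookkeeping.
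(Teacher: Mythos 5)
There is a genuine gap, and it sits in both halves of your argument. First, your decomposition of the difference does not add up: with $A_i := \hat R_Y(Y_i)-\rho\hat R_X(X_i)$ and $B_i := \hat R_X(X_i)-W_i'\gamma$, the left side minus the right side equals
\begin{equation*}
-\frac{1}{\sqrt n}\sum_{i=1}^n A_i\,W_i'(\hat\gamma-\gamma)\;+\;\frac{1}{\sqrt n}\sum_{i=1}^n W_i'\beta\,B_i ,
\end{equation*}
whereas your second piece has $\hat\nu_i=B_i-W_i'(\hat\gamma-\gamma)$ in place of $B_i$; the unaccounted term is $\frac{1}{\sqrt n}\sum_i (W_i'\beta)\,W_i'(\hat\gamma-\gamma)=\beta'\bigl(\frac1n\sum_i W_iW_i'\bigr)\sqrt n(\hat\gamma-\gamma)$, which is $O_P(1)$, not negligible. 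Second, and more fundamentally, your first piece is not $o_P(1)$ by the argument you give: $\frac{1}{\sqrt n}\sum_i A_iW_i'(\hat\gamma-\gamma)=\bigl(\frac1n\sum_i A_iW_i'\bigr)\,\sqrt n(\hat\gamma-\gamma)$, and since, as you yourself note, $\frac1n\sum_i A_iW_i'\to_P E[(R_Y(Y)-\rho R_X(X))W']=\beta'E[WW']\neq 0$ in general, this product is $O_P(1)\cdot O_P(1)=O_P(1)$. The statement ``multiplying an $O_P(1)$ quantity by the $1/\sqrt n$-size factor $\hat\gamma-\gamma$ gives $o_P(1)$'' overlooks that the sum carries a $1/\sqrt n$, not a $1/n$, normalization. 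In fact the two $O_P(1)$ pieces in your split cancel each other to leading order precisely because $E[\varepsilon W]=0$; they cannot be disposed of separately.

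The fix is to reverse the order of your two operations, which is essentially the paper's route: first use the exact normal equation $\sum_i W_i\hat\nu_i=0$ (which you correctly identified) to write the left side exactly as $\frac{1}{\sqrt n}\sum_i(A_i-W_i'\beta)\hat\nu_i$, and only then replace $\hat\gamma$ by $\gamma$. The remainder is then $-\bigl(\frac1n\sum_i(A_i-W_i'\beta)W_i'\bigr)\sqrt n(\hat\gamma-\gamma)$, and the average now converges in probability to $E[(R_Y(Y)-\rho R_X(X)-W'\beta)W']=E[\varepsilon W']=0$ (use Lemma~\ref{lem: glivenko cantelli} to swap estimated for population ranks and Chebyshev's inequality/LLN under Assumption~\ref{as: vector w}), so multiplying by $\sqrt n(\hat\gamma-\gamma)=O_P(1)$ from Lemma~\ref{lem: gamma asy normality} yields $o_P(1)$. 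The paper's one-line proof cites exactly these ingredients; the essential point your write-up misses is that the $-W_i'\beta$ term must be inserted \emph{before} the $\hat\gamma\to\gamma$ replacement so that the relevant sample average has probability limit zero rather than merely being $O_P(1)$.
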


\begin{proof}
The proof follows from the identity $\sum_{i=1}^n W_i(\hat R_X(X_i) - W_i'\hat\gamma) = 0$, Lemmas \ref{lem: gamma asy normality} and \ref{lem: glivenko cantelli}, and Chebyshev's inequality.
\end{proof}

\section{Asymptotic Theory for All Regression Coefficients}\label{sec: all coefficients}

\subsection{Rank-Rank Regressions}
\label{app: AN all coefficients rank-rank}

In this section, we present the asymptotic theory for all coefficients in the rank-rank regression model studied in Section~\ref{sec: inf rank rank reg}:
\begin{equation}\label{eq: model with controls app}
    R_Y(Y) = \rho R_X(X) + W'\beta + \varepsilon,\qquad E\left[\varepsilon\begin{pmatrix}
    R_{X}(X)\\ W
    \end{pmatrix}\right]=0,
\end{equation}
where we had already introduced the equation projecting $R_X(X)$ onto the other regressors in $W$:
\begin{equation}\label{eq: first stage app 1}
    R_X(X) = W'\gamma + \nu,\quad E[\nu W] = 0.
\end{equation}
To study the asymptotic behavior of the coefficients $\beta$, we now also introduce some additional projections. Let $W_l$ denote an element of $W := (W_1,\ldots,W_p)'$ and by $W_{-l}$ the vector of all elements of $W$ except the $l$-th. We now introduce the projection of $W_l$ onto $R_X(X)$ and the remaining regressors $W_{-l}$: for any, $l=1,\ldots, p$,
\begin{equation}\label{eq: first stage app 2}
    W_l = \tau_l R_X(X) + W_{-l}'\delta_l + \upsilon_l,\quad E\left[\upsilon_l \begin{pmatrix} R_{X}(X)\\ W_{-l} \end{pmatrix}\right] = 0,
\end{equation}
where $\tau_1,\ldots,\tau_p$ are scalar constants and $\delta_1,\ldots,\delta_{p}$ are $(p-1)$-dimensional vectors of constants.

\begin{assumption}\label{as: variable upsilon}
    For $l=1,\ldots, p$, the random variable $\upsilon_l$ is such that $E[\upsilon_l^2]>0$.
\end{assumption}
Consider the OLS estimator in \eqref{eq: joint ols estimator}:
\begin{equation}\label{eq: joint ols estimator app}
\begin{pmatrix}
\hat{\rho}\\
\hat{\beta}
\end{pmatrix}=\left(\sum_{i=1}^{n}\begin{pmatrix}
R_i^X\\
W_{i}
\end{pmatrix}\begin{pmatrix}
R_i^X & W_{i}'\end{pmatrix}\right)^{-1}\sum_{i=1}^{n}\begin{pmatrix}
R_i^X\\
W_{i}
\end{pmatrix}R_i^Y.
\end{equation}

Let $\sigma_{\nu}^2$, $h_{0,1} := h_1$, $h_{0,2}:= h_2$, and $h_{0,3}:=h_3$ be as defined in Theorem~\ref{thm: AN for rank rank reg}. For $l=1,\ldots, p$, further define $\sigma_{\upsilon_l}^2 := E[\upsilon_l^2]$, 
\begin{align*}
    \phi_0(x,w,y) &:= \frac{1}{\sigma_{\nu}^2}\left[h_{0,1}(x,w,y)+h_{0,2}(x,y)+h_{0,3}(x) \right]\\
    \phi_l(x,y,w) &:= \frac{1}{\sigma_{\upsilon_l}^2}\left[h_{l,1}(x,w,y)+h_{l,2}(x,y)+h_{l,3}(x) \right]
\end{align*}
and
\begin{align*}
    h_{l,1}(x,w,y) &:= (R_Y(y) - \rho R_X(x) - w'\beta)(w_l - \tau_l R_X(x) - w_{-l}'\delta_l),\\
    h_{l,2}(x,y) & := E[(I(y,Y) - \rho I(x,X) - W'\beta)(W_l - \tau_l R_X(X) - W_{-l}'\delta_l)], \\
    h_{l,3}(x) & := E[(R_Y(Y) - \rho R_X(X) - W'\beta)(W_l - \tau_l I(x,X) -  W_{-l}'\delta_l)].
\end{align*}
Finally, let $\psi_i := (\phi_0(X_i,W_i,Y_i), \phi_1(X_i,W_i,Y_i), \ldots, \phi_p(X_i,W_i,Y_i))'$.

With the additional assumption and notation, one can derive the following joint asymptotic normality result:

\begin{theorem}\label{thm: AN for rank rank reg all coeffs}
    Suppose that \eqref{eq: model with controls app}--\eqref{eq: joint ols estimator app} hold and that Assumptions~\ref{as: random sample}--\ref{as: variable nu}, and \ref{as: variable upsilon} are satisfied. Then
    $$
    \sqrt n  \begin{pmatrix}\hat \rho - \rho \\ \hat\beta -\beta \end{pmatrix}  = \frac{1}{\sqrt n}\sum_{i=1}^n \psi_i + o_P(1)\to_D N(0,\Sigma),
    $$
    where $\Sigma := E[\psi_i \psi_i']$ and $\psi_i$'s are as defined above in this section.
\end{theorem}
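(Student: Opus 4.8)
The plan is to piggyback on Theorem~\ref{thm: AN for rank rank reg}, whose proof already produces the asymptotically linear representation $\sqrt n(\hat\rho-\rho)=n^{-1/2}\sum_{i=1}^n\phi_0(X_i,W_i,Y_i)+o_P(1)$, and to derive the parallel expansion $\sqrt n(\hat\beta_l-\beta_l)=n^{-1/2}\sum_{i=1}^n\phi_l(X_i,W_i,Y_i)+o_P(1)$ for each $l=1,\dots,p$. Stacking the $p+1$ expansions yields $\sqrt n\big((\hat\rho-\rho)',(\hat\beta-\beta)'\big)'=n^{-1/2}\sum_{i=1}^n\psi_i+o_P(1)$ with the $\psi_i$ i.i.d.\ and mean-zero: each component of $\psi_i$ is assembled from the $h$-kernels, whose expectations all reduce to $E[\varepsilon\nu]=0$ (for the $\hat\rho$ block) or $E[\varepsilon\upsilon_l]=0$ (for the $\hat\beta_l$ blocks), these vanishing because $\varepsilon$ is orthogonal to $R_X(X)$ and to every coordinate of $W$ by \eqref{eq: model with controls app}, while $\nu$ and $\upsilon_l$ are linear in those variables by \eqref{eq: first stage app 1} and \eqref{eq: first stage app 2}. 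Since also $E[\|\psi_i\|^2]<\infty$ by Assumption~\ref{as: vector w} (together with the boundedness of $R_X(X),R_Y(Y),I(\cdot,\cdot)$ and Assumptions~\ref{as: variable nu}, \ref{as: variable upsilon} for the normalizing constants $\sigma_\nu^2,\sigma_{\upsilon_l}^2$), the multivariate central limit theorem gives $\to_D N(0,\Sigma)$ with $\Sigma=E[\psi_i\psi_i']$.

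For a fixed $l$, I would apply the Frisch--Waugh--Lovell theorem to write $\hat\beta_l$ as the slope in the univariate OLS regression of $R_i^Y$ on the residuals $\hat\upsilon_{il}$ obtained by regressing $W_{il}$ on $(R_i^X,W_{i,-l})$, which is the empirical counterpart of \eqref{eq: first stage app 2}. The FWL orthogonality relations $\sum_iR_i^X\hat\upsilon_{il}=0$ and $\sum_iW_{i,-l}\hat\upsilon_{il}=0$ give $\hat\beta_l-\beta_l=\sum_{i=1}^n(R_i^Y-\rho R_i^X-W_i'\beta)\hat\upsilon_{il}/\sum_{i=1}^n\hat\upsilon_{il}^2$, and replacing the fitted first-stage coefficients inside $\hat\upsilon_{il}$ by their population values $(\tau_l,\delta_l)$ via analogues of Lemma~\ref{lem: gamma asy normality} ($\sqrt n$-boundedness of the first-stage estimators) and Lemma~\ref{lem: three i's bound} then yields
$$\sqrt n(\hat\beta_l-\beta_l)=\frac{n^{-1/2}\sum_{i=1}^n\big(\hat R_Y(Y_i)-\rho\hat R_X(X_i)-W_i'\beta\big)\big(W_{il}-\tau_l\hat R_X(X_i)-W_{i,-l}'\delta_l\big)}{n^{-1}\sum_{i=1}^n\hat\upsilon_{il}^2}+o_P(1),$$
which has the same structure as~\eqref{eq: rho hat ratio}, with the regressor residual $\hat R_X(X_i)-W_i'\gamma$ there replaced here by $W_{il}-\tau_l\hat R_X(X_i)-W_{i,-l}'\delta_l$. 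The denominator converges in probability to $\sigma_{\upsilon_l}^2=E[\upsilon_l^2]>0$ by an analogue of Lemma~\ref{lem: denominator limit} and Assumption~\ref{as: variable upsilon}.

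The numerator is then handled line-by-line as in the proof of Theorem~\ref{thm: AN for rank rank reg}. Writing $Z_i=(Y_i,X_i,W_i')'$ and $f_l(Z_i,Z_j,Z_k):=(I(Y_j,Y_i)-\rho I(X_j,X_i)-W_i'\beta)(W_{il}-\tau_l I(X_k,X_i)-W_{i,-l}'\delta_l)$, and substituting $\hat R_Y(Y_i)=n^{-1}\sum_j I(Y_j,Y_i)+O(1/n)$, $\hat R_X(X_i)=n^{-1}\sum_k I(X_k,X_i)+O(1/n)$, the argument shows that $n^{-1}\sum_i(\hat R_Y(Y_i)-\rho\hat R_X(X_i)-W_i'\beta)(W_{il}-\tau_l\hat R_X(X_i)-W_{i,-l}'\delta_l)=n^{-3}\sum_{i,j,k=1}^n f_l(Z_i,Z_j,Z_k)+o_P(n^{-1/2})$. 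Symmetrizing $f_l$ over the six permutations of $(i,j,k)$ to a symmetric kernel $h_l$ turns the triple sum into an order-three $U$-statistic with $E[h_l(Z_1,Z_2,Z_3)^2]<\infty$ (Assumption~\ref{as: vector w}), which by the projection results on p.~188 of \cite{Serfling:2002re} is asymptotically equivalent to $3n^{-1}\sum_i\tilde h_l(Z_i)$ with $\tilde h_l(z):=E[h_l(Z_1,Z_2,Z_3)\mid Z_1=z]$. Evaluating the three distinct conditional expectations $E[f_l(Z_1,Z_2,Z_3)\mid Z_1=z]$, $E[f_l(Z_2,Z_1,Z_3)\mid Z_1=z]$, $E[f_l(Z_2,Z_3,Z_1)\mid Z_1=z]$ --- each time using that averaging $I(\cdot,\cdot)$ over one of its arguments returns $R_X$ or $R_Y$ --- identifies them with $h_{l,1}(x,w,y)$, $h_{l,2}(x,y)$, $h_{l,3}(x)$, respectively, so $3\tilde h_l(z)=h_{l,1}(x,w,y)+h_{l,2}(x,y)+h_{l,3}(x)$. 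Dividing by the denominator limit $\sigma_{\upsilon_l}^2$ and applying Slutsky's lemma gives $\sqrt n(\hat\beta_l-\beta_l)=n^{-1/2}\sum_{i=1}^n\phi_l(X_i,W_i,Y_i)+o_P(1)$, and assembling this with the $\hat\rho$ expansion completes the argument.

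I expect the main obstacle to be organizational rather than conceptual: in each of the $p$ slope regressions one must keep careful track of which permutation of $(i,j,k)$ feeds into which Hájek-projection term, so that the three conditional expectations come out exactly equal to the stated kernels --- in particular so that $I(x,X)$, and not $R_X(x)$, appears in $h_{l,3}$, which is precisely the contribution of the estimation error in the partialled-out regressor $R_i^X$. The supporting technical facts --- uniform consistency of the estimated ranks (Lemma~\ref{lem: glivenko cantelli}), $\sqrt n$-boundedness of the first-stage coefficient estimators, and the $O(1/n)$ negligibility of the gap between $\hat R$ and the corresponding kernel average --- are either already established or are immediate analogues of the lemmas proved for $\hat\rho$, and so introduce no new difficulty.
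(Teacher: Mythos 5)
Your proposal is correct and follows essentially the same route as the paper: Frisch--Waugh--Lovell applied coordinate-wise with the first-stage projection \eqref{eq: first stage app 2}, substitution of the population coefficients $(\tau_l,\delta_l)$ via analogues of Lemmas~\ref{lem: gamma asy normality}, \ref{lem: denominator limit}, and \ref{lem: three i's bound}, an order-three U-statistic projection identifying $h_{l,1},h_{l,2},h_{l,3}$ (including the correct appearance of $I(x,X)$ rather than $R_X(x)$ in $h_{l,3}$), and then stacking with the expansion from Theorem~\ref{thm: AN for rank rank reg} and applying the multivariate CLT and Slutsky. The only cosmetic difference is that the paper writes the FWL numerator as $\sum_i(\hat R_Y(Y_i)-\beta_l W_{l,i})\hat\upsilon_{il}$ before passing to $\sum_i(\hat R_Y(Y_i)-\rho\hat R_X(X_i)-W_i'\beta)(W_{l,i}-\tau_l\hat R_X(X_i)-W_{-l,i}'\delta_l)$, whereas you use the equivalent exact identity obtained from the FWL orthogonality relations; the substance is identical.
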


Theorem~\ref{thm: AN for rank rank reg} has already shown the asymptotic variance of $\hat\rho$. Theorem \ref{thm: AN for rank rank reg all coeffs} adds to that result by providing the asymptotic variance for individual components of $\hat\beta$: for all $l=1,\dots,p$, we have
$$\sqrt{n}(\hat\beta_l-\beta_l) \to_D N(0,\sigma_{\beta_l}^2), $$
where
$$\sigma_{\beta_l}^2 := \frac{1}{\sigma_{\upsilon_l}^4}E\left[\left(h_{l,1}(X,W,Y)+h_{l,2}(X,Y)+h_{l,3}(X)\right)^2 \right]. $$

\subsection{Rank-Rank Regressions With Subpopulations}
\label{app: AN all coefficients rank-rank with subpops}

In this section, we present the asymptotic theory for all coefficients in the rank-rank regression model with subpopulations studied in Section~\ref{sec: rank rank reg with subpops}:
\begin{equation}\label{eq: model with subpops app}
    R_Y(Y) = \sum_{g=1}^{n_G} \ind\{G=g\}\left(\rho_g R_X(X) + W'\beta_g\right) + \varepsilon,\quad E\left[\left. \varepsilon\begin{pmatrix} R_{X}(X)\\ W\end{pmatrix}\right| G\right]=0\;\text{a.s.},
\end{equation}
where we had already introduced the equation projecting $R_X(X)$ onto the other regressors in $W$:
\begin{equation}\label{eq: first stage with subpops app 1}
    R_X(X) = \sum_{g=1}^{n_G} \ind\{G=g\} W'\gamma_g + \nu,\quad E[\left.\nu W\right| G] = 0\;\text{a.s.}.
\end{equation}
To study the asymptotic behavior of the coefficients $\beta_g$, we now also introduce some additional projections. Let $W_l$ denote an element of $W := (W_1,\ldots,W_p)'$ and by $W_{-l}$ the vector of all elements of $W$ except the $l$-th. We now introduce the projection of $W_l$ onto $R_X(X)$ and the remaining regressors $W_{-l}$: for any, $l=1,\ldots, p$,
\begin{equation}\label{eq: first stage with subpops app 2}
    W_l = \sum_{g=1}^{n_G} \ind\{G=g\}\left(\tau_{l,g} R_X(X) + W_{-l}'\delta_{l,g}\right) + \upsilon_l,\  E\left[\left.\upsilon_l \begin{pmatrix} R_{X}(X)\\ W_{-l} \end{pmatrix}\right| G\right] = 0\;\text{a.s.},
\end{equation}
where $\tau_{l,g}$ are scalar constants and $\delta_{l,g}$ are $(p-1)$-dimensional vectors of constants.

\begin{assumption}\label{as: variable upsilon with subpops}
    For $l=1,\ldots, p$ and $g=1,\ldots,n_G$, the random variable $\upsilon_l$ is such that $E[\ind\{G=g\}\upsilon_l^2]>0$.
\end{assumption}
Consider the OLS estimator in \eqref{eq: joint ols estimator with subpops}:
\begin{equation}\label{eq: joint ols estimator with subpops app}
\begin{pmatrix}
\hat{\rho}_g\\
\hat{\beta}_g
\end{pmatrix}=\left(\sum_{i=1}^{n}\ind\{G_i=c\}\begin{pmatrix}
R_i^X\\
W_{i}
\end{pmatrix}\begin{pmatrix}
R_i^X & W_{i}'\end{pmatrix}\right)^{-1}\sum_{i=1}^{n}\ind\{G_i=c\}\begin{pmatrix}
R_i^X\\
W_{i}
\end{pmatrix}R_i^Y.
\end{equation}
Let $v_0 := \nu$ and for $l=0,\ldots, p$ and $g=1,\ldots,n_G$, define $\sigma_{\upsilon_l,g}^2 := E[\ind\{G=g\}\upsilon_l^2]$. Also, for $l=1,\ldots, p$ and $g=1,\ldots,n_G$, define
\begin{align*}
\xi_{0,1,g}(x,w) & := R_X(x) - w'\gamma_g, \quad \xi_{l,1,g}(x,w) := w_l - \tau_{l,g}R_X(x) - w_{-l}'\delta_{l,g},\\ 
\xi_{0,3,g}(x,X,W)& := I(x,X)-W'\gamma_g, \quad \xi_{l,3,g}(x,X,W) := W_l - \tau_{l,g}I(x,X) - W_{-l}'\delta_{l,g}.
\end{align*}
Moreover, $l=0,\ldots, p$ and $g=1,\ldots,n_G$, define
\begin{align*}
    h_{l,1,g}(\bar{g},x,w,y) := \ind\{\bar{g}=g\}(R_Y(y) - \rho_g R_X(x) - w'\beta_g)\xi_{l,1,g}(x,w),\\
    h_{l,2,g}(x,y)  := E[\ind\{G=g\}(I(y,Y) - \rho_g I(x,X) - W'\beta_g)\xi_{l,1,g}(X,W)], \\
    h_{l,3,g}(x)  := E[\ind\{G=g\}(R_Y(Y) - \rho_g R_X(X) - W'\beta_g)\xi_{l,3,g}(x,X,W)],\\
    \phi_{l,g}(\bar{g},x,y,w) := \frac{1}{\sigma_{\upsilon_{l},g}^2}\left[h_{l,1,g}(\bar{g},x,w,y)+h_{l,2,g}(x,y)+h_{l,3,g}(x) \right].
\end{align*}
Finally, let $\psi_{i,g} := (\phi_{0,g}(G_i,X_i,W_i,Y_i), \phi_{1,g}(G_i,X_i,W_i,Y_i), \ldots, \phi_{p,g}(G_i,X_i,W_i,Y_i))'$ and 
\begin{align*}
    \psi_i &:= (\phi_{0,1}(G_i,X_i,W_i,Y_i), \ldots, \phi_{0,n_G}(G_i,X_i,W_i,Y_i), \\
    &\qquad \phi_{1,1}(G_i,X_i,W_i,Y_i), \ldots, \phi_{1,n_G}(G_i,X_i,W_i,Y_i), \ldots)'.
\end{align*}
With the additional assumption and notation, one can derive the following joint asymptotic normality result:

\begin{theorem}\label{thm: AN for rank rank reg all coeffs subpops}
    Suppose \eqref{eq: model with subpops app}--\eqref{eq: joint ols estimator with subpops app} hold and that Assumptions~\ref{as: random sample with subpops}--\ref{as: large subpops}, and \ref{as: variable upsilon with subpops} are satisfied. Then:
    $$
    \sqrt n  \begin{pmatrix}\hat \rho - \rho \\ \hat\beta -\beta \end{pmatrix}  = \frac{1}{\sqrt n}\sum_{i=1}^n \psi_i + o_P(1)\to_D N(0,\Sigma),
    $$
    where $\Sigma := E[\psi_i \psi_i']$ and $\psi_i$ as defined in this section.
\end{theorem}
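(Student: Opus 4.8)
The plan is to reduce Theorem~\ref{thm: AN for rank rank reg all coeffs subpops} to $n_G(p+1)$ one-dimensional expansions of exactly the type used in the proof of Theorem~\ref{thm: AN for rank rank reg} (and Theorem~\ref{thm: AN for rank rank reg all coeffs}) and then to stack them. Fix a subpopulation $g$ and a coefficient index $l\in\{0,1,\dots,p\}$, where $l=0$ picks out $\hat\rho_g$ and $l\geq1$ picks out $\hat\beta_{l,g}$. Since the estimator in \eqref{eq: joint ols estimator with subpops app} uses only the observations with $G_i=g$, the Frisch--Waugh--Lovell theorem applied \emph{within} subpopulation $g$ writes the corresponding centered estimator as a ratio whose denominator is $n^{-1}\sum_i\ind\{G_i=g\}(\hat R_X(X_i)-W_i'\hat\gamma_g)^2$ for $l=0$ and $n^{-1}\sum_i\ind\{G_i=g\}(W_{l,i}-\hat\tau_{l,g}\hat R_X(X_i)-W_{-l,i}'\hat\delta_{l,g})^2$ for $l\geq1$, and whose numerator is the analogous sum with the residualized outcome $\hat R_Y(Y_i)-\rho_g\hat R_X(X_i)-W_i'\beta_g$ replacing one factor (the same residualized outcome appears for all $l$; only the second factor changes, after using the within-$g$ normal equations). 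By the law of large numbers, $n^{-1}\sum_i\ind\{G_i=g\}\to_P P(G=g)>0$ (Assumption~\ref{as: large subpops}), so subpopulation-$g$ versions of Lemmas~\ref{lem: gamma asy normality}--\ref{lem: three i's bound}, proved by the same Glivenko--Cantelli, law-of-large-numbers and Chebyshev arguments, let us (i) replace $\hat\gamma_g,\hat\tau_{l,g},\hat\delta_{l,g}$ by their population counterparts from \eqref{eq: first stage with subpops app 1}--\eqref{eq: first stage with subpops app 2} at cost $o_P(1)$, and (ii) conclude that the denominator converges in probability to $\sigma_{\upsilon_l,g}^2>0$ (positivity from Assumptions~\ref{as: variable nu with subpops} and \ref{as: variable upsilon with subpops}).

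It then remains to analyze, for each $(l,g)$, a numerator of the form $n^{-1}\sum_i\ind\{G_i=g\}\big(\hat R_Y(Y_i)-\rho_g\hat R_X(X_i)-W_i'\beta_g\big)\,\xi_{l,1,g}(X_i,W_i)$ with the population rank in $\xi_{l,1,g}$ replaced by $\hat R_X(X_i)$. Writing $\hat R_X(X_i)=n^{-1}\sum_j I(X_j,X_i)+O(1/n)$ and $\hat R_Y(Y_i)=n^{-1}\sum_j I(Y_j,Y_i)+O(1/n)$, this numerator equals $n^{-3}\sum_{i,j,k}\ind\{G_i=g\}f_{l,g}(Z_i,Z_j,Z_k)+o_P(n^{-1/2})$, where $Z_i:=(G_i,Y_i,X_i,W_i')'$ and $f_{l,g}$ is the explicit kernel whose $i$-argument carries the indicator $\ind\{G_i=g\}$ and whose $j$- and $k$-arguments enter only through $I(\cdot,\cdot)$ (reflecting that the ranks are computed from the \emph{pooled} empirical c.d.f.s). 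Symmetrizing $\ind\{G_i=g\}f_{l,g}$ over its three arguments exactly as on p.~206 of \cite{Serfling:2002re}, the triple sum becomes an order-three $U$-statistic over the full i.i.d.\ sample $\{Z_i\}_{i=1}^n$; its kernel is mean zero because the conditional moment restrictions in \eqref{eq: model with subpops app}, \eqref{eq: first stage with subpops app 1}, \eqref{eq: first stage with subpops app 2} give $E[\ind\{G=g\}\varepsilon(R_X(X)-W'\gamma_g)]=E[\ind\{G=g\}\varepsilon\upsilon_l]=0$, and it has finite second moment since $I(\cdot,\cdot)$, $R_X$, $R_Y$ are bounded and $E[\|W\|^4]<\infty$ (Assumption~\ref{as: vector w with subpops}). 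The H\'ajek projection onto the basic observations (the lemmas on pp.~188, 206 of \cite{Serfling:2002re}) then shows the numerator equals $n^{-1}\sum_i\big[h_{l,1,g}(G_i,X_i,W_i,Y_i)+h_{l,2,g}(X_i,Y_i)+h_{l,3,g}(X_i)\big]+o_P(n^{-1/2})$; dividing by $\sigma_{\upsilon_l,g}^2$ gives $n^{-1}\sum_i\phi_{l,g}(G_i,X_i,W_i,Y_i)$.

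Collecting these expansions over $l\in\{0,\dots,p\}$ and $g\in\{1,\dots,n_G\}$, a fixed finite collection so the error terms sum to $o_P(1)$, yields $\sqrt n\,((\hat\rho-\rho)',(\hat\beta-\beta)')'=n^{-1/2}\sum_i\psi_i+o_P(1)$ with $\psi_i$ the stacked vector defined in this section. The $\psi_i$ are i.i.d., mean zero, with finite second moments, so the multivariate Lindeberg--L\'evy central limit theorem (equivalently, the Cram\'er--Wold device with a one-dimensional CLT) gives $n^{-1/2}\sum_i\psi_i\to_D N(0,\Sigma)$ with $\Sigma=E[\psi_i\psi_i']$, which completes the proof.

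I expect the main obstacle to be keeping the cross-subpopulation dependence straight. Because the ranks used in subpopulation $g$ are built from the \emph{pooled} empirical c.d.f.s, an observation $i$ with $G_i\neq g$ still influences $\hat\rho_g$ and $\hat\beta_{l,g}$: in the projection above its contribution is $\sigma_{\upsilon_l,g}^{-2}[\,0+h_{l,2,g}(X_i,Y_i)+h_{l,3,g}(X_i)\,]$, generically nonzero even though the $h_{l,1,g}$ term vanishes off $\{G_i=g\}$. Getting the symmetrization of $\ind\{G_i=g\}f_{l,g}$ right — so that the projection produces exactly $h_{l,1,g}$ (carrying the indicator) together with $h_{l,2,g}$ and $h_{l,3,g}$ (with the indicator inside the expectation) — is what makes $\Sigma$ non-block-diagonal and is the one genuinely non-mechanical part of the argument; once it is in place, the rest is a routine adaptation of the single-population proof.
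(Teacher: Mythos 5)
Your proposal is correct and takes essentially the same approach as the paper: the paper's own proof of Theorem \ref{thm: AN for rank rank reg all coeffs subpops} is a one-line reduction (``argue as for Theorem \ref{thm: AN for rank rank reg with subpops}, adapted the way Theorem \ref{thm: AN for rank rank reg all coeffs} adapts Theorem \ref{thm: AN for rank rank reg}''), and your steps --- within-$g$ Frisch--Waugh--Lovell, replacement of $\hat\gamma_g,\hat\tau_{l,g},\hat\delta_{l,g}$ by their population counterparts, recognition of each numerator as an order-three U-statistic over the pooled sample with the subpopulation indicator in one argument, the H\'ajek projection yielding $h_{l,1,g},h_{l,2,g},h_{l,3,g}$, and stacking the finitely many expansions before the multivariate CLT --- are exactly what that reduction entails. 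Your closing remark that the pooled ranks let observations with $G_i\neq g$ enter through $h_{l,2,g}$ and $h_{l,3,g}$, making $\Sigma$ non-block-diagonal, is consistent with the paper's definitions and with its warning that the subpopulation-level estimators are not independent.
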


Theorem~\ref{thm: AN for rank rank reg with subpops} already showed the asymptotic normality of $\hat\rho_g$. Theorem \ref{thm: AN for rank rank reg all coeffs subpops} in turn shows the asymptotic normality of $\hat\beta_g$ as well and provides explicit formulas for the asymptotic variance of both $\hat\rho_g$ and $\hat\beta_g$.

\subsection{Regression of a General Outcome on a Rank}
\label{app: AN all coefficients outcome-rank}

In this section, we present the asymptotic theory for all coefficients in the rank-rank regression model studied in Section~\ref{sec: outcome on rank}:
\begin{equation}\label{eq: model of outcome on rank app}
    Y = \rho R_X(X) + W'\beta + \varepsilon,\qquad E\left[\varepsilon\begin{pmatrix}
    R_{X}(X)\\ W
    \end{pmatrix}\right]=0,
\end{equation}
with the projection equations \eqref{eq: first stage app 1} and \eqref{eq: first stage app 2}. Consider the OLS estimator in \eqref{eq: joint ols estimator outcome on rank}:
\begin{equation}\label{eq: joint ols estimator outcome on rank app}
    \begin{pmatrix}
    \hat{\rho}\\
    \hat{\beta}
    \end{pmatrix}=\left(\sum_{i=1}^{n}\begin{pmatrix}
    R_i^X\\
    W_{i}
    \end{pmatrix}\begin{pmatrix}
    R_i^X & W_{i}'\end{pmatrix}\right)^{-1}\sum_{i=1}^{n}\begin{pmatrix}
    R_i^X\\
    W_{i}
    \end{pmatrix}Y_i.
\end{equation}
Let $v_0 := \nu$ and for $l=0,\ldots, p$, define $\sigma_{\upsilon_l}^2 := E[\upsilon_l^2]$. Also, for $l=1,\ldots, p$, define
\begin{align*}
\xi_{0,1}(x,w) & := R_X(x) - w'\gamma, \quad \xi_{l,1}(x,w) := w_l - \tau_{l}R_X(x) - w_{-l}'\delta_{l},\\ 
\xi_{0,3}(x,X,W)& := I(x,X)-W'\gamma, \quad \xi_{l,3}(x,X,W) := W_l - \tau_{l,g}I(x,X) - W_{-l}'\delta_{l}.
\end{align*}
Moreover, $l=0,\ldots, p$, define
\begin{align*}
    h_{l,1}(x,w,y) &:= (y - \rho R_X(x) - w'\beta)\xi_{l,1}(x,w),\\
    h_{l,2}(x) & := E[(Y - \rho I(x,X) - W'\beta)\xi_{l,1}(X,W)], \\
    h_{l,3}(x) & := E[(Y - \rho R_X(X) - W'\beta)\xi_{l,3}(x,X,W)],\\
    \phi_l(x,y,w) &:= \frac{1}{\sigma_{\upsilon_l}^2}\left[h_{l,1}(x,w,y)+h_{l,2}(x)+h_{l,3}(x) \right]
\end{align*}
Finally, let $\psi_i := (\phi_{0}(X_i,W_i,Y_i), \phi_{1}(X_i,W_i,Y_i), \ldots, \phi_{p}(X_i,W_i,Y_i))'$. With this additional notation, one can derive the following joint asymptotic normality result:

\begin{theorem}\label{thm: AN for rank rank reg all coeffs gen outcome on rank}
    Suppose \eqref{eq: first stage app 1}, \eqref{eq: first stage app 2}, \eqref{eq: model of outcome on rank app}, and \eqref{eq: joint ols estimator outcome on rank app} hold and that Assumptions~\ref{as: random sample}--\ref{as: variable nu}, and \ref{as: variable upsilon} are satisfied. Then:
    $$
    \sqrt n  \begin{pmatrix}\hat \rho - \rho \\ \hat\beta -\beta\end{pmatrix}  = \frac{1}{\sqrt n}\sum_{i=1}^n \psi_i + o_P(1)\to_D N(0,\Sigma),
    $$
    where $\Sigma := E[\psi_i \psi_i']$ and $\psi_i$ as defined in this section.
\end{theorem}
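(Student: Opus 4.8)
The plan is to mimic the proof of Theorem~\ref{thm: AN for rank rank reg} coordinate by coordinate, using the Frisch--Waugh--Lovell (FWL) theorem to isolate each of $\hat\rho$ and $\hat\beta_1,\dots,\hat\beta_p$, and then to stack the resulting linear (Hájek-projection) expansions and apply a multivariate central limit theorem. The only substantive difference from the rank-rank case is that $Y$ enters the regression in levels rather than as a bounded rank, which alters the kernels of the underlying $U$-statistics but leaves the structure of the argument intact.

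First, by FWL, $\hat\rho$ in \eqref{eq: joint ols estimator outcome on rank app} equals $\sum_{i=1}^n Y_i(\hat R_X(X_i)-W_i'\hat\gamma)$ divided by $\sum_{i=1}^n(\hat R_X(X_i)-W_i'\hat\gamma)^2$, where $\hat\gamma$ is the OLS coefficient from regressing $\hat R_X(X_i)$ on $W_i$; using $\sum_{i=1}^n W_i(\hat R_X(X_i)-W_i'\hat\gamma)=0$ and adding and subtracting $\rho\hat R_X(X_i)+W_i'\beta$ in the numerator,
$$
\sqrt n(\hat\rho-\rho)=\frac{\frac{1}{\sqrt n}\sum_{i=1}^n(Y_i-\rho\hat R_X(X_i)-W_i'\beta)(\hat R_X(X_i)-W_i'\gamma)}{\frac{1}{n}\sum_{i=1}^n(\hat R_X(X_i)-W_i'\hat\gamma)^2}+o_P(1),
$$
the replacement of $\hat\gamma$ by $\gamma$ in the numerator being justified as in Lemma~\ref{lem: three i's bound} via the moment conditions in \eqref{eq: model of outcome on rank app} and \eqref{eq: first stage app 1}, Lemmas~\ref{lem: gamma asy normality} and \ref{lem: glivenko cantelli}, and Chebyshev's inequality. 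The same argument applied to partialling $R_X(X)$ and $W_{-l}$ out of $W_l$ gives, for each $l=1,\dots,p$,
$$
\sqrt n(\hat\beta_l-\beta_l)=\frac{\frac{1}{\sqrt n}\sum_{i=1}^n(Y_i-\rho\hat R_X(X_i)-W_i'\beta)(W_{il}-\tau_l\hat R_X(X_i)-W_{i,-l}'\delta_l)}{\frac{1}{n}\sum_{i=1}^n(W_{il}-\tau_l\hat R_X(X_i)-W_{i,-l}'\hat\delta_l)^2}+o_P(1),
$$
where $\hat\tau_l,\hat\delta_l$ are the first-stage OLS coefficients associated with \eqref{eq: first stage app 2}, $\sqrt n$-consistent by the argument of Lemma~\ref{lem: gamma asy normality}. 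By Assumptions~\ref{as: variable nu} and \ref{as: variable upsilon}, the law of large numbers, Lemmas~\ref{lem: gamma asy normality} and \ref{lem: glivenko cantelli}, and the identity $a^2-b^2=(a-b)^2+2(a-b)b$, each denominator converges in probability to $\sigma_{\upsilon_l}^2>0$ (with $\upsilon_0:=\nu$), exactly as in Lemma~\ref{lem: denominator limit}.

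Next, fix $l\in\{0,\dots,p\}$, set $Z_i:=(Y_i,X_i,W_i')'$, and expand $\hat R_X(X_i)=\frac{1}{n}\sum_{j=1}^n I(X_j,X_i)+(1-\omega)/n$ (with $I(\cdot,\cdot)$ as in the proof of Theorem~\ref{thm: AN for rank rank reg}) in both places it appears in the $l$-th numerator. Because $Y_i$ is observed in levels, the left factor $Y_i-\rho\hat R_X(X_i)-W_i'\beta$ is an average over a single fresh index $j$ and the right factor an average over a second fresh index $k$, so that $\frac{1}{n}\sum_{i=1}^n(\text{left})(\text{right})=\frac{1}{n^3}\sum_{i,j,k}f_l(Z_i,Z_j,Z_k)+o_P(n^{-1/2})$, where $f_l$ is obtained from the summand by replacing the two copies of $\hat R_X(X_i)$ with $I(X_j,X_i)$ and $I(X_k,X_i)$ respectively. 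Symmetrizing $f_l$ into a symmetric kernel $h_l$ as in the proof of Theorem~\ref{thm: AN for rank rank reg} turns the sum over $\{1,\dots,n\}^3$ into a genuine $U$-statistic of order three, whose diagonal contributions (two or three coinciding indices) are $o_P(n^{-1/2})$ because $E[h_l(Z_1,Z_2,Z_3)^2]<\infty$; this second-moment bound uses $E[\|W\|^4]<\infty$ (Assumption~\ref{as: vector w}), $E[Y^4]<\infty$ (which follows from the maintained condition $E[\varepsilon^4]<\infty$, cf.\ Theorem~\ref{thm: AN for outcome on rank reg}), and boundedness of the ranks. The lemma on p.~206 and the projection result on p.~188 of \cite{Serfling:2002re} then reduce the $U$-statistic to its Hájek projection $\frac{3}{n}\sum_{i=1}^n\tilde h_l(Z_i)+o_P(n^{-1/2})$, and evaluating the three conditional expectations $E[f_l(Z_1,Z_2,Z_3)\mid Z_1=z]$, $E[f_l(Z_2,Z_1,Z_3)\mid Z_1=z]$, $E[f_l(Z_2,Z_3,Z_1)\mid Z_1=z]$ identifies $3\tilde h_l(z)=h_{l,1}(x,w,y)+h_{l,2}(x)+h_{l,3}(x)$ with $h_{l,1},h_{l,2},h_{l,3}$ as in the statement; note that $h_{l,2}$ depends on $x$ alone, since $Y$ never enters through the $j$-slot. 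Dividing by the denominator limit $\sigma_{\upsilon_l}^2$ yields $\sqrt n(\hat\rho-\rho)=\frac{1}{\sqrt n}\sum_{i=1}^n\phi_0(X_i,W_i,Y_i)+o_P(1)$ and $\sqrt n(\hat\beta_l-\beta_l)=\frac{1}{\sqrt n}\sum_{i=1}^n\phi_l(X_i,W_i,Y_i)+o_P(1)$ for $l=1,\dots,p$.

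Finally, stacking coordinates gives $\sqrt n(\hat\rho-\rho,(\hat\beta-\beta)')'=\frac{1}{\sqrt n}\sum_{i=1}^n\psi_i+o_P(1)$ with $\psi_i$ as defined in this section. The summands $\psi_i$ are i.i.d., mean zero (by construction of the Hájek projections together with the orthogonality restrictions), and square-integrable (again by $E[\|W\|^4]<\infty$ and $E[Y^4]<\infty$), so the Cramér--Wold device and the Lindeberg--Lévy CLT give $\frac{1}{\sqrt n}\sum_{i=1}^n\psi_i\to_D N(0,\Sigma)$ with $\Sigma=E[\psi_i\psi_i']$, which is the claim. The main obstacle is not any single computation but the bookkeeping required to run all $p+1$ FWL decompositions and their common rank-expansion simultaneously: one must verify that every remainder is jointly $o_P(1)$, that the three Hájek-projection pieces are correctly identified for each coordinate (with $h_{l,2}$ a function of $x$ only once $Y$ enters in levels), and that $E[Y^4]<\infty$ suffices to apply the $U$-statistic lemmas of \cite{Serfling:2002re}; modulo this, the argument is a direct adaptation of the proof of Theorem~\ref{thm: AN for rank rank reg}.
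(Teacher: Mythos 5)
Your proof is correct and follows essentially the same route as the paper, whose own proof is just a pointer to the arguments of Theorem~\ref{thm: AN for outcome on rank reg} and Theorem~\ref{thm: AN for rank rank reg all coeffs} (coordinate-by-coordinate FWL reduction, expansion of the estimated ranks into a third-order U-statistic, H\'ajek projection, then Cram\'er--Wold). Your remark that $E[\varepsilon^4]<\infty$ (equivalently $E[Y^4]<\infty$) is needed for the second-moment bound on the kernels is apt --- that condition appears in Theorem~\ref{thm: AN for outcome on rank reg} but is not restated in the hypotheses of this appendix theorem, so treating it as maintained is the right reading.
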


Theorem~\ref{thm: AN for outcome on rank reg} already showed the asymptotic normality of $\hat\rho$. Theorem \ref{thm: AN for rank rank reg all coeffs gen outcome on rank} in turn shows the asymptotic normality of $\hat\beta$ as well and provides explicit formulas for the asymptotic variance of both $\hat\rho$ and $\hat\beta$.

\section{Additional Proofs}
\label{app: add proofs}

\subsection{Proofs for Section \ref{sec: motivation} }

Define the sample counterpart to $M_{kl}$:
$$\hat{M}_{kl} :=\frac{1}{n} \sum_{i=1}^n \left(\hat{F}_X(X_i)-\frac{1}{n}\sum_{j=1}^n\hat{F}_X(X_j)\right)^k\left(\hat{F}_Y(Y_i)-\frac{1}{n}\sum_{j=1}^n\hat{F}_Y(Y_j)\right)^l. $$
Before proving Lemma \ref{lem: plims}, we state the following auxiliary lemma.

\begin{lemma}\label{lem: Mhat consistency}
    Let $(X_i,Y_i)$, $i=1,\ldots,n$, be an i.i.d. sample from a distribution $F$ with continuous marginals. Then $\hat{M}_{kl} = M_{kl}+o_P(1)$ for $(k,l)\in \{(0,2),(2,0),(2,2),(3,1),(4,0)\}$.
\end{lemma}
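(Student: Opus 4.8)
The plan is to push everything through the probability-integral transform and then combine the uniform Glivenko--Cantelli bound with the law of large numbers. Write $U_i := F_X(X_i)$ and $V_i := F_Y(Y_i)$. Since $F_X$ and $F_Y$ are continuous, $U_i$ and $V_i$ are $U[0,1]$, so $E[U_i] = E[V_i] = 1/2$ and $M_{kl} = E[(U-1/2)^k(V-1/2)^l]$. Set $\hat U_i := \hat F_X(X_i)$ and $\hat V_i := \hat F_Y(Y_i)$. By Lemma~\ref{lem: glivenko cantelli} (equivalently, Glivenko--Cantelli applied to $\hat F_X$ and $\hat F_Y$), $\max_{i\le n}|\hat U_i - U_i| \le \sup_{x\in\mathbb R}|\hat F_X(x) - F_X(x)| = o_P(1)$ and likewise $\max_{i\le n}|\hat V_i - V_i| = o_P(1)$.

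Next I would handle the recentering. Let $\bar{\hat U} := n^{-1}\sum_j \hat U_j$ and $\bar U := n^{-1}\sum_j U_j$ (and analogously for $V$). Under continuity $\hat F_X(X_j)$ ranges over $\{1/n,\dots,n/n\}$, so in fact $\bar{\hat U} = (n+1)/(2n)$ exactly; alternatively $|\bar{\hat U} - \bar U|\le \max_i|\hat U_i - U_i| = o_P(1)$ with $\bar U\to_P 1/2$ by the law of large numbers. Either way $\bar{\hat U}\to_P 1/2$ and $\bar{\hat V}\to_P 1/2$, hence
$$\hat U_i - \bar{\hat U} = (U_i - 1/2) + a_i, \qquad \hat V_i - \bar{\hat V} = (V_i - 1/2) + b_i,$$
with $\max_{i\le n}(|a_i| + |b_i|) = o_P(1)$.

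Finally I would substitute these into the definition of $\hat M_{kl}$ and expand $(\,(U_i-1/2)+a_i\,)^k(\,(V_i-1/2)+b_i\,)^l$ by the binomial theorem. The leading term is $n^{-1}\sum_i (U_i-1/2)^k(V_i-1/2)^l \to_P M_{kl}$ by the law of large numbers (all moments finite since $|U_i-1/2|,|V_i-1/2|\le 1/2$). Every other term in the (finite) expansion carries at least one factor $a_i$ or $b_i$, so on the event $\{\max_i(|a_i|+|b_i|)\le 1\}$ it is bounded in absolute value by $C_{k,l}\bigl(\max_i|a_i| + \max_i|b_i|\bigr)$, using $|U_i-1/2|,|V_i-1/2|\le 1/2$; hence each such term is $o_P(1)$. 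Summing gives $\hat M_{kl} = M_{kl} + o_P(1)$.

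There is no real obstacle here; the only subtle point is that continuity of the marginals is exactly what makes $E[F_X(X)] = E[F_Y(Y)] = 1/2$, so that centering by the sample mean is asymptotically equivalent to centering by $1/2$ (this would fail with pointmasses), and one must keep the remainders $a_i,b_i$ uniformly small in $i$ before averaging, which is precisely what the uniform Glivenko--Cantelli bound provides. The specific list $(k,l)\in\{(0,2),(2,0),(2,2),(3,1),(4,0)\}$ plays no role beyond fixing a finite exponent; the argument yields the conclusion for any fixed $k,l$.
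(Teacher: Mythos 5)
Your proof is correct and follows essentially the same route as the paper: the uniform Glivenko--Cantelli bound combined with an algebraic expansion that separates the oracle sample moment $n^{-1}\sum_i (U_i-1/2)^k(V_i-1/2)^l$ (handled by the law of large numbers) from remainder terms that carry a uniformly small factor and bounded companions. The paper's sketch invokes Cauchy--Schwarz and identities such as $a^2-b^2=(a-b)^2+2(a-b)b$ where you use a binomial expansion with sup-norm bounds on $a_i,b_i$, a purely cosmetic difference; your explicit treatment of the recentering term $\bar{\hat U}\to_P 1/2$ correctly fills in a detail the paper leaves implicit.
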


\begin{proof}
By the Glivenko-Cantelli theorem (e.g., Theorem 1.3 in \cite{dudley2014}),
$$
\max\left\{\sup_{x\in\mathbb{R}}|\hat{F}_X(x)-F_X(x)|,\sup_{y\in\mathbb{R}}|\hat{F}_Y(y)-F_Y(y)|\right\} = o_P(1).
$$
The proof of the lemma follows from standard arguments using this convergence result, the Cauchy-Schwarz and triangle inequalities, and some elementary identities such as $a^2-b^2=(a-b)^2+2(a-b)b$ and $a^3-b^3 = (a-b)^3 + 3ab(a-b)$.
\end{proof}

\begin{proof}[Proof of Lemma~\ref{lem: plims}]
Consider first the estimator for homoskedastic regression errors $\sigma_{hom}^2$. We have
    \begin{align*}
      \frac{1}{n}\sum_{i=1}^n \hat{\varepsilon}_i^2 &= \hat{M}_{02} - \hat{\rho}^2 \hat{M}_{20}= M_{02}-\rho^2M_{20} + o_P(1) = M_{02}(1-\rho^2) + o_P(1)
    \end{align*}
    where the first equality follows from simple algebra, the second from Lemma~\ref{lem: Mhat consistency}, and the third from noting that $M_{20} = M_{02}$.
    Similarly,
    $n^{-1}\sum_{i=1}^n (R_i^X-\bar{R}^X)^2 = M_{02} + o_P(1)$
    and so \eqref{eq: hom plim} follows.
    Next, consider the Eicker-White estimator $\sigma_{EW}^2$. Similarly as above, we have
    \begin{align*}
        &\frac{1}{n}\sum_{i=1}^n \hat{\varepsilon}_i^2(R_i^X-\bar{R}^X)^2 = \hat{M}_{22}-2\hat{\rho}\hat{M}_{31}+\hat{\rho}^2\hat{M}_{40} = M_{22} -2\rho M_{31} + \rho^2 M_{40} + o_P(1).
    \end{align*}
In addition,
    $(n^{-1}\sum_{i=1}^n (R_i^X-\bar{R}^X)^2)^2 = M_{20}^2 +o_P(1), $
    and so \eqref{eq: EW plim} follows by noting that $M_{20}=1/12$ and $M_{40}=1/80$.
\end{proof}

\begin{proof}[Proof of Lemma \ref{lem: large ratio of variances}]
Let $a\in(0,1)$ be a constant and let $F$ be the cdf of a pair $(X,Y)$ of $U[0,1]$ random variables such that $Y = (a-X)\ind\{0\leq X\leq a\} + X\ind\{a<X\leq 1\}$.
Then
\begin{align*}
E[\ind\{X\leq x\}Y] & = (ax-x^2/2)\ind\{0\leq x\leq a\} + 2^{-1}x^2\ind\{a<x\leq 1\},\\
E[\ind\{Y\leq y\}X] & = (ay-y^2/2)\ind\{0\leq y\leq a\} + 2^{-1}y^2\ind\{a<y\leq 1\}.
\end{align*}
Hence, given that $X>a$ if and only if $Y>a$, the function $h$ in \eqref{eq: function h true} satisfies
\begin{align*}
E[h(X,Y)] & = E[-(a^2/2)\ind\{X\leq a\}] = -a^3/2, \\
E[h(X,Y)^2] & = E[(a^2/2)^2\ind\{X\leq a\}] = a^5/4.
\end{align*}
Therefore, given that $\sigma^2 = 144Var(h(X,Y))$ by the discussion in a remark after Theorem \ref{thm: AN for rank rank reg}, it follows that
$
\sigma^2 = 144(E[h(X,Y)^2] - (E[h(X,Y)])^2) = 36(a^5 - a^6).
$
On the other hand, by Lemma \ref{lem: plims},
\begin{align*}
\sigma^2_{hom} & = 1 - \rho^2 = 1 - \left(12E[X Y] - 3\right)^2 = -8 - 144(E[XY])^2 + 72E[XY] \\
& = - 8 - 144\left(\frac{1}{3}-\frac{a^3}{6}\right)^2 + 72\left(\frac{1}{3}-\frac{a^3}{6}\right) = 4(a^3 - a^6).
\end{align*}
In addition, tedious algebra shows that
$$
M_{22} = \frac{1}{80}-\frac{a^3}{6} + \frac{a^4}{3} - \frac{a^5}{6}, \quad M_{31} = \frac{1}{80} - \frac{a^3}{8} + \frac{a^4}{4} - \frac{3a^5}{20}, \quad\rho = 1 - 2a^3.
$$
Thus, by Lemma \ref{lem: plims},
\begin{align*}
\sigma_{EW}^2 &= 144\bigg( \left(\frac{1}{80}-\frac{a^3}{6} + \frac{a^4}{3} - \frac{a^5}{6}\right) \\
& \qquad - \left( \frac{1}{40} - \frac{3a^3}{10} + \frac{a^4}{2} - \frac{3a^5}{10} + \frac{a^6}{2} - a^7 + \frac{3a^8}{5} \right) + \left(\frac{1}{80} - \frac{a^3}{20} + \frac{a^6}{20}\right) \bigg) \\
& = 144\left(\frac{a^3}{12} - \frac{a^4}{6} + \frac{2a^5}{15} - \frac{9a^6}{20} + a^7 -\frac{3a^8}{5}\right).
\end{align*}
Combining these bounds, we obtain that $\sigma^2_{hom} / \sigma^2 \to \infty$ and $\sigma_{EW}^2/\sigma^2\to\infty$ as $a\to 0$, yielding the asserted claim.
\end{proof}

\begin{proof}[Proof of Lemma~\ref{lem: small ratio of variances}]
Let $F$ be any distribution with continuous marginals and let $(X,Y)$ be a pair of random variables with distribution $F$. Letting $F_X$ and $F_Y$ denote the corresponding marginal distributions, it follows that $U:=F_X(X)$ and $V:=F_Y(Y)$ are both $U[0,1]$ random variables. Also, $F(X,Y) = C(U,V)$, where $C$ is the copula of $F$. In addition, $\rho = Corr(F_X(X),F_Y(Y)) = 12E[UV]-3$. 

Consider first the case $\rho\geq 0$. By Lemma \ref{lem: plims},
\begin{align*}
\sigma_{hom}^2 
& = 1 - \rho^2 = (1-\rho)(1+\rho) \geq 1 - \rho = 4 - 12 E[UV]\\
& = 4 - E[12UV - 6U^2 - 6V^2 + 6U^2 + 6V^2] \\
& = 4 + 6E[(U-V)^2] - 6E[U^2] - 6E[V^2] = 6E[(U-V)^2].
\end{align*}
On the other hand, by \eqref{eq: avar spearman},
\begin{align}
& \sigma^2/9 
 = Var\left( (2U-1)(2V-1) + 4\int_0^1 (C(U,v) - Uv)dv + 4\int_0^1 (C(u,V) - uV)du \right) \nonumber \\
& \ \leq 2E\left[ \left((2U-1)(2V-1) + 4\int_0^1 (U\wedge v - Uv)dv + 4\int_0^1 (u\wedge V - uV)du - 1\right)^2 \right] \nonumber \\
& \ +2 E\left[\left( 4\int_0^1 (C(U,v) - U\wedge v)dv + 4\int_0^1 (C(u,V) - u\wedge V)du  \right)^2\right]. \label{eq: lem 2 decom 2}
\end{align}
Here, observe that
$$
\int_0^1 (U\wedge v - Uv)dv = \frac{U - U^2}{2}\quad\text{and}\quad \int_0^1 (u\wedge V - uV)du = \frac{V-V^2}{2}.
$$
Thus, the first term on the right-hand side of \eqref{eq: lem 2 decom 2} is equal to
$$
2E\left[( 4UV - 2U^2 - 2 V^2 )^2\right] = 8E\left[(U-V)^4\right] \leq 8E\left[(U-V)^2\right].
$$
To bound the second term on the right-hand side of \eqref{eq: lem 2 decom 2}, we claim that
\begin{equation}\label{eq: lem 2 claim 1}
|C(u,v) - u\wedge v| \leq P(|U-V| \geq |u-v|).
\end{equation}
Indeed, if $u\leq v$, then
\begin{align*}
C(u,v) - u\wedge v 
& = E[\ind\{U\leq u\}\ind\{V\leq v\}] - E[\ind\{U\leq u\}\ind\{U\leq v\}] \\
& = E[\ind\{U\leq u\}(\ind\{V\leq v\} - \ind\{U\leq v\})]  = - E[\ind\{U\leq u\}\ind\{V> v\}],
\end{align*}
and so \eqref{eq: lem 2 claim 1} follows. On the other hand, if $u>v$, then
\begin{align*}
C(u,v) - u\wedge v 
& = E[\ind\{U\leq u\}\ind\{V\leq v\}] - E[\ind\{V\leq u\}\ind\{V\leq v\}] \\
& = E[\ind\{V\leq v\}(\ind\{U\leq u\} - \ind\{V\leq u\})]  = - E[\ind\{V\leq v\}\ind\{U> u\}],
\end{align*}
and so \eqref{eq: lem 2 claim 1} follows as well, yielding the claim. Thus,
\begin{align*}
\int_0^1 |C(u,v) - u\wedge v|dv
& \leq \int_0^1 P(|U-V|\geq |u-v|)dv \\
& \leq 2\int_0^1 P(|U-V|\geq t)dt \leq 2E[|U-V|]
\end{align*}
and, similarly,
$$
\int_0^1 |C(u,v) - u\wedge v|du \leq 2E[|U-V|].
$$
Hence, the expression in \eqref{eq: lem 2 decom 2} is bounded from above by
$$
2\times 16^2 (E[|U-V|])^2 \leq 512 E\left[|U-V|^2\right].
$$
Therefore, $\sigma^2\leq 9\times 520E[|U-V|^2]$, and so $\sigma^2_{hom}/\sigma^2$ is bounded below from zero. 

In turn, the case $\rho<0$ can be treated similarly. The asserted claim follows.
\end{proof}

\subsection{Proofs for Section~\ref{sec: inf rank rank reg}}

\begin{lemma}\label{lem: glivenko cantelli 2}
Under Assumptions \ref{as: random sample} and \ref{as: vector w}, we have
\begin{equation}\label{eq: gc 2-1}
\sup_{y\in\mathbb R}\left| \frac{1}{n}\sum_{i=1}^n I(y,Y_i)(R_X(X_i) - W_i'\gamma) - E[I(y,Y)(R_X(X) - W'\gamma)] \right| = o_P(1),
\end{equation}
\begin{equation}\label{eq: gc 2-2}
\sup_{x\in\mathbb R}\left| \frac{1}{n}\sum_{i=1}^n I(x,X_i)(R_X(X_i) - W_i'\gamma) - E[I(x,X)(R_X(X) - W'\gamma)] \right| = o_P(1),
\end{equation}
\begin{align}
&\sup_{x\in\mathbb R}\bigg| \frac{1}{n}\sum_{i=1}^n(R_Y(Y_i) - \rho R_X(X_i) - W_i'\beta)I(x,X_i) \nonumber\\
&\qquad\qquad\qquad\qquad\qquad - E[(R_Y(Y) - \rho R_X(X) - W'\beta)I(x,X)]\bigg| = o_P(1). \label{eq: gc 2-3}
\end{align}
\end{lemma}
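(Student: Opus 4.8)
The plan is to deduce all three displays from a single weighted Glivenko--Cantelli statement, which I would then establish by the monotone-function version of the classical argument. By the definition of the rank indicator, $I(y,Y_i)=\omega\ind\{Y_i\ge y\}+(1-\omega)\ind\{Y_i>y\}$, and likewise $I(x,X_i)=\omega\ind\{X_i\ge x\}+(1-\omega)\ind\{X_i>x\}$, so it suffices to prove that for any i.i.d.\ sample $\{(A_i,B_i)\}_{i=1}^n$ with a scalar weight satisfying $E|B|<\infty$ one has $\sup_{a\in\mathbb R}|n^{-1}\sum_{i=1}^n\ind\{A_i\ge a\}B_i-E[\ind\{A\ge a\}B]|=o_P(1)$, together with the analogous statement with ``$\ge$'' replaced by ``$>$''. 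Then \eqref{eq: gc 2-1} is the case $(A,B)=(Y,\,R_X(X)-W'\gamma)=(Y,\nu)$, \eqref{eq: gc 2-2} is $(A,B)=(X,\nu)$, and \eqref{eq: gc 2-3} is $(A,B)=(X,\,R_Y(Y)-\rho R_X(X)-W'\beta)=(X,\varepsilon)$; the integrability requirement holds because $R_X(X),R_Y(Y)\in[0,1]$ and $E\|W\|<\infty$ under Assumption~\ref{as: vector w}, so $E|\nu|<\infty$ and $E|\varepsilon|<\infty$.

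To prove the template statement I would split the weight $B=B^+-B^-$ and treat the two parts separately, so that we may assume $B\ge0$. Then $F_n(a):=n^{-1}\sum_{i=1}^n\ind\{A_i\ge a\}B_i$ and $F(a):=E[\ind\{A\ge a\}B]$ are both non-increasing in $a$, bounded respectively by $n^{-1}\sum_{i=1}^nB_i$ and $E[B]$, and the law of large numbers gives $n^{-1}\sum_{i=1}^nB_i=E[B]+o_P(1)$. Now run the classical Glivenko--Cantelli argument: given $\epsilon>0$, using the generalized inverse of $F$, pick finitely many points $a_0<\dots<a_m$ so that $F$ oscillates by at most $\epsilon$ on each interval $(a_{j-1},a_j]$ and on the two unbounded end pieces; the scalar law of large numbers yields $F_n(a_j)=F(a_j)+o_P(1)$ and, crucially, also the one-sided limits $F_n(a_j^+)=n^{-1}\sum_{i=1}^n\ind\{A_i>a_j\}B_i=E[\ind\{A>a_j\}B]+o_P(1)=F(a_j^+)+o_P(1)$; sandwiching $F_n(a)$ and $F(a)$ between consecutive grid values on each piece then gives $\sup_a|F_n(a)-F(a)|\le\epsilon+o_P(1)$, and letting $\epsilon\to0$ yields the claim. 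The ``$>$'' version requires no new work, since $n^{-1}\sum_i\ind\{A_i>a\}B_i$ is exactly the one-sided-limit process $F_n(a^+)$; combining the ``$\ge$'' and ``$>$'' versions with weights $\omega$ and $1-\omega$ then produces \eqref{eq: gc 2-1}--\eqref{eq: gc 2-3}.

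The single point demanding care --- and the reason one cannot simply invoke the continuous Glivenko--Cantelli theorem as in Lemma~\ref{lem: glivenko cantelli} --- is that $X$ and $Y$ are allowed to have atoms, so the limits $F$ are in general discontinuous; the monotone-function argument accommodates this because it also controls the empirical one-sided limits $F_n(a_j^+)$ at the grid points, which again converge by the ordinary law of large numbers, so no empirical-process entropy bounds are needed. The only other mild complication, the possibly changing sign of $\nu$ or $\varepsilon$, is dispatched by the positive/negative-part decomposition. (An alternative route writes $\ind\{A_i\ge a\}B_i^+=\int_0^\infty\ind\{A_i\ge a,\ B_i^+>t\}\,dt$, applies the bivariate Glivenko--Cantelli theorem to the events $\{A\ge a\}\cap\{B^+>t\}$ for $t\le T$, and bounds the tail $\int_T^\infty$ by $n^{-1}\sum_i(B_i^+-T)^+=E[(B^+-T)^+]+o_P(1)$, which vanishes as $T\to\infty$; but the direct monotone argument is shorter.)
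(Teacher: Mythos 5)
Your proposal is correct and follows essentially the same route as the paper's proof: split the weight $R_X(X_i)-W_i'\gamma$ (resp.\ the regression error) into its positive and negative parts and then run the classical Glivenko--Cantelli bracketing/grid argument adapted to the weighted, possibly discontinuous monotone limit function, which is exactly the ``argument parallel to the proof of the Glivenko--Cantelli theorem'' the paper invokes. You simply spell out the details (the reduction of $I$ to the $\{\geq\}$ and $\{>\}$ indicator families, the control of one-sided limits at grid points, and the LLN for the total mass) that the paper leaves implicit.
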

\begin{proof}
For any random variable $A$, let $A^{+}$ and $A^{-}$ denote $A\ind\{A\geq 0\}$ and $-A\ind\{A<0\}$, respectively. Then
$$
\sup_{y\in\mathbb R}\left| \frac{1}{n}\sum_{i=1}^n I(y,Y_i)(R_X(X_i) - W_i'\gamma)^+ - E[I(y,Y)(R_X(X) - W'\gamma)^+] \right| = o_P(1)
$$
and
$$
\sup_{y\in\mathbb R}\left| \frac{1}{n}\sum_{i=1}^n I(y,Y_i)(R_X(X_i) - W_i'\gamma)^- - E[I(y,Y)(R_X(X) - W'\gamma)^-] \right| = o_P(1)
$$
by the argument parallel to that used in the proof of the Glivenko-Cantelli theorem (e.g., Theorem 1.3 in \cite{dudley2014}). Combining these bounds gives \eqref{eq: gc 2-1}. In turn, \eqref{eq: gc 2-2} and \eqref{eq: gc 2-3} follow from the same argument.
\end{proof}

\begin{lemma}\label{lem: simple square approximation}
For any vectors $A = (A_1,\dots,A_n)'$ and $\widehat A = (\widehat A_1,\dots,\widehat A_n)'$, we have
$$
\left|\frac{1}{n}\sum_{i=1}^n (\widehat A_i^2 - A_i^2)\right| \leq \frac {1}{n}\sum_{i=1}^n (\widehat A_i - A_i)^2 + 2\sqrt{\frac {1}{n}\sum_{i=1}^n (\widehat A_i - A_i)^2}\sqrt{\frac {1}{n}\sum_{i=1}^n A_i^2}.
$$
\end{lemma}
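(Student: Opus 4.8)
The plan is to reduce the statement to the elementary identity $a^2 - b^2 = (a-b)^2 + 2(a-b)b$, applied coordinatewise with $a = \widehat A_i$ and $b = A_i$, followed by the triangle inequality and the Cauchy--Schwarz inequality. This is the same algebraic trick already invoked in the proofs of Lemmas~\ref{lem: denominator limit} and~\ref{lem: Mhat consistency}, so no new machinery is needed.

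Concretely, I would first write, for each $i$,
$$
\widehat A_i^2 - A_i^2 = (\widehat A_i - A_i)^2 + 2(\widehat A_i - A_i)A_i,
$$
average over $i = 1,\dots,n$, and apply the triangle inequality to obtain
$$
\left|\frac{1}{n}\sum_{i=1}^n (\widehat A_i^2 - A_i^2)\right| \leq \frac{1}{n}\sum_{i=1}^n (\widehat A_i - A_i)^2 + \frac{2}{n}\left|\sum_{i=1}^n (\widehat A_i - A_i)A_i\right|.
$$
Then I would bound the last term by the Cauchy--Schwarz inequality,
$$
\frac{2}{n}\left|\sum_{i=1}^n (\widehat A_i - A_i)A_i\right| \leq \frac{2}{n}\left(\sum_{i=1}^n (\widehat A_i - A_i)^2\right)^{1/2}\left(\sum_{i=1}^n A_i^2\right)^{1/2} = 2\sqrt{\frac{1}{n}\sum_{i=1}^n (\widehat A_i - A_i)^2}\,\sqrt{\frac{1}{n}\sum_{i=1}^n A_i^2},
$$
where the final equality just distributes the factor $1/n$ as $n^{-1} = n^{-1/2}\cdot n^{-1/2}$ across the two square roots. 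Combining the two displays yields the claimed inequality.

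There is no substantive obstacle here; the lemma is a purely deterministic, finite-sample algebraic bound, and every step is a one-line manipulation. The only thing worth stating carefully is that the bound holds for arbitrary real vectors (no randomness, no distributional assumptions), so that it can later be applied with $\widehat A_i$ and $A_i$ replaced by random quantities such as $\widehat R_X(X_i) - W_i'\widehat\gamma$ and $R_X(X_i) - W_i'\gamma$, turning uniform consistency of the ranks into consistency of the relevant quadratic forms.
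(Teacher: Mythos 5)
Your proof is correct and follows the same route the paper indicates (the identity $a^2-b^2=(a-b)^2+2(a-b)b$ together with the triangle and Cauchy--Schwarz inequalities); the paper simply states this in one line without writing out the details, which you have filled in accurately.
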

\begin{proof}
The proof follows from the Cauchy-Schwarz and triangle inequalities.
\end{proof}

\begin{lemma}\label{lem: simple square approximation 2}
For any vectors $B = (B_1,\dots,B_n)'$, $\widehat B = (\widehat B_1,\dots,\widehat B_n)'$, $C = (C_1,\dots,C_n)'$, and $\widehat C = (\widehat C_1,\dots,\widehat C_n)'$, we have
\begin{align*}
&\sqrt{\frac{1}{n}\sum_{i=1}^n (\widehat B_i \widehat C_i - B_iC_i)^2}
\leq  \left(\frac{1}{n}\sum_{i=1}^n(\widehat B_i - B_i)^4\right)^{1/4} \left(\frac{1}{n}\sum_{i=1}^n(\widehat C_i - C_i)^4\right)^{1/4} \\
&\quad  +  \left(\frac{1}{n}\sum_{i=1}^n C_i^4\right)^{1/4}\left(\frac{1}{n}\sum_{i=1}^n(\widehat B_i - B_i)^4\right)^{1/4}  + \left(\frac{1}{n}\sum_{i=1}^n B_i^4\right)^{1/4} \left(\frac{1}{n}\sum_{i=1}^n(\widehat C_i - C_i)^4\right)^{1/4}.
\end{align*}
\end{lemma}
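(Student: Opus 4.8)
The plan is to reduce the inequality to the triangle inequality for the root-mean-square norm on $\mathbb{R}^n$ together with a term-by-term application of Cauchy--Schwarz, exactly in the spirit of the proof of Lemma~\ref{lem: simple square approximation}. First I would record the elementary identity
$$\widehat B_i\widehat C_i - B_iC_i = (\widehat B_i - B_i)(\widehat C_i - C_i) + (\widehat B_i - B_i)C_i + B_i(\widehat C_i - C_i),$$
which holds for every $i=1,\dots,n$ and splits the product error into one second-order term and two first-order terms.

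Second, I would note that $v = (v_1,\dots,v_n)' \mapsto (n^{-1}\sum_{i=1}^n v_i^2)^{1/2}$ is a norm on $\mathbb{R}^n$, so the triangle inequality applied to the three summands in the identity above yields
$$\sqrt{\frac{1}{n}\sum_{i=1}^n(\widehat B_i\widehat C_i - B_iC_i)^2} \le \sqrt{\frac{1}{n}\sum_{i=1}^n(\widehat B_i - B_i)^2(\widehat C_i - C_i)^2} + \sqrt{\frac{1}{n}\sum_{i=1}^n(\widehat B_i - B_i)^2 C_i^2} + \sqrt{\frac{1}{n}\sum_{i=1}^n B_i^2(\widehat C_i - C_i)^2}.$$
Third, to each of the three terms on the right I would apply the Cauchy--Schwarz inequality in the form $n^{-1}\sum_{i=1}^n a_i^2 b_i^2 \le (n^{-1}\sum_{i=1}^n a_i^4)^{1/2}(n^{-1}\sum_{i=1}^n b_i^4)^{1/2}$ and then take square roots, which converts each term into the corresponding product of fourth-power averages raised to the power $1/4$. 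Matching the three resulting terms with the right-hand side of the asserted inequality completes the argument.

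There is no genuine obstacle here: the only points requiring care are checking the signs in the algebraic identity and keeping track of the exponents, since the square root of a Cauchy--Schwarz bound involving fourth moments produces $1/4$ powers rather than $1/2$ powers. This auxiliary inequality, alongside Lemma~\ref{lem: simple square approximation}, is of the form needed to control the discrepancy between sample second moments of estimated and population quantities in the consistency arguments.
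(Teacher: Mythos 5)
Your proof is correct and matches the paper's argument, which is stated in one line as following ``from the Cauchy--Schwarz and triangle inequalities''; your write-up simply makes explicit the decomposition $\widehat B_i\widehat C_i - B_iC_i = (\widehat B_i - B_i)(\widehat C_i - C_i) + (\widehat B_i - B_i)C_i + B_i(\widehat C_i - C_i)$, the triangle inequality for the root-mean-square norm, and the Cauchy--Schwarz step that produces the $1/4$ powers.
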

\begin{proof}
The proof follows from the Cauchy-Schwarz and triangle inequalities.
\end{proof}

\begin{proof}[Proof of Lemma \ref{lem: consistent variance estimation}]
First, we prove that
\begin{equation}\label{eq: cons variance est 1}
\left| \frac{1}{n}\sum_{i=1}^n(H_{1i} + H_{2i} + H_{3i})^2 - \frac{1}{n}\sum_{i=1}^n (h_1(X_i,W_i,Y_i) + h_2(X_i,Y_i) + h_3(X_i))^2 \right| = o_P(1).
\end{equation}
To do so, observe that $E\left[( h_1(X,W,Y) + h_2(X,Y) + h_3(X) )^2\right] <\infty $ by Assumption \ref{as: vector w}. Hence, it follows from Lemma \ref{lem: simple square approximation} that \eqref{eq: cons variance est 1} holds if
\begin{equation}\label{eq: cons variance est 2}
R_n:= \frac{1}{n}\sum_{i=1}^n (H_{1i} + H_{2i} + H_{3i} - h_1(X_i,W_i,Y_i) - h_2(X_i,Y_i) - h_3(X_i))^2 = o_P(1).
\end{equation}
In turn, by the triangle inequality, $R_n \leq 9(R_{1n} + R_{2n} + R_{3n})$,
where
\begin{align*}
& R_{1n} := \frac{1}{n}\sum_{i=1}^n (H_{1i} - h_1(X_i,W_i,Y_i))^2,\\
& R_{2n} := \frac{1}{n}\sum_{i=1}^n (H_{2i} - h_2(X_i,Y_i))^2, \quad R_{3n}:= \frac{1}{n}\sum_{i=1}^n (H_{3i} - h_3(X_i))^2.
\end{align*}
We bound these three terms in turn.

Regarding $R_{1n}$, we have $E[|R_Y(Y) - \rho R_X(X) - W'\beta|^4]<\infty$ and $E[| R_X(X) - W'\gamma |^4]<\infty$ by Assumption \ref{as: vector w}. Also,
$$
\frac{1}{n}\sum_{i=1}^n \left| (R_i^Y - \widehat\rho R_i^X - W_i'\widehat\beta) - (R_Y(Y_i) - \rho R_X(X_i) - W_i'\beta) \right|^4 = o_P(1)
$$
by Lemma \ref{lem: glivenko cantelli}, Theorems \ref{thm: AN for rank rank reg} and \ref{thm: AN for rank rank reg all coeffs} and Assumption \ref{as: vector w}. In addition,
$$
\frac{1}{n}\sum_{i=1}^n \left| (R_i^X - W_i'\widehat\gamma) - (R_X(X_i) - W_i'\gamma) \right|^4 = o_P(1)
$$
by Lemmas \ref{lem: gamma asy normality} and \ref{lem: glivenko cantelli} and Assumption \ref{as: vector w}. Hence, $R_{1n} = o_P(1)$ by Lemma \ref{lem: simple square approximation 2}.

Regarding $R_{2n}$, we have by the triangle inequality that
\begin{align*} 
\sqrt{R_{2n}}& \leq \sup_{x,y\in\mathbb R}\left| \frac{1}{n}\sum_{j=1}^n (I(y,Y_j) - \widehat\rho I(x,X_j) - W_j'\widehat\beta)(R_j^X - W_j'\widehat \gamma) - h_2(x,y) \right| \\
& \leq \sup_{x,y\in\mathbb R}\left| \frac{1}{n}\sum_{j=1}^n (I(y,Y_j) - \widehat\rho I(x,X_j) - W_j'\widehat\beta)(R_j^X - W_j'\widehat \gamma) - \widehat h_2(x,y) \right| \\
& \quad + \sup_{x,y\in\mathbb R}\left|\widehat h_2(x,y) - h_2(x,y)\right|,
\end{align*}
where
$$
\widehat h_2(x,y) := \frac{1}{n}\sum_{j=1}^n (I(y,Y_j) - \rho I(x,X_j) - W_j'\beta)(R_X(X_j) - W_j' \gamma), \quad\text{for all }x,y\in\mathbb R. 
$$
Also, $\sup_{x,y\in\mathbb R}\left|\widehat h_2(x,y) - h_2(x,y)\right| = o_P(1)$ under Assumption \ref{as: vector w} by Lemma \ref{lem: glivenko cantelli 2} and the triangle inequality. In addition, for all $j=1,\dots,n$, denote
$B_j(x,y) := I(y,Y_j) - \rho I(x,X_j) - W_j'\beta$, $\widehat B_j(x,y) := I(y,Y_j) - \widehat\rho I(x,X_j) - W_j'\widehat\beta$, for all $x,y\in\mathbb R$, $C_j := R_X(X_j) - W_j' \gamma$, and $\widehat C_j := R_j^X - W_j'\widehat \gamma$. Then $\sup_{x,y\in\mathbb R} \frac{1}{n}\sum_{j=1}^n (\widehat B_j(x,y) - B_j(x,y))^4 = o_P(1)$ by Theorems \ref{thm: AN for rank rank reg} and \ref{thm: AN for rank rank reg all coeffs} and Assumption \ref{as: vector w}; $\frac{1}{n}\sum_{j=1}^n (\widehat C_j - C_j)^4 = o_P(1)$ by Lemmas \ref{lem: gamma asy normality} and \ref{lem: glivenko cantelli} and Assumption \ref{as: vector w}; $\sup_{x,y\in\mathbb R} \frac{1}{n}\sum_{j=1}^n B_j(x,y)^4 = O_P(1) $ and $\frac{1}{n}\sum_{j=1}^n C_j^4 = O_P(1)$ by Assumption \ref{as: vector w}. Hence, by Lemma \ref{lem: simple square approximation 2},
\begin{align*}
&\sup_{x,y\in\mathbb R}\left| \frac{1}{n}\sum_{j=1}^n (I(y,Y_j) - \widehat\rho I(x,X_j) - W_j'\widehat\beta)(R_j^X - W_j'\widehat \gamma) - \widehat h_2(x,y) \right| \\
&\quad = \sup_{x,y\in\mathbb R} \left| \frac{1}{n}\sum_{j=1}^n \widehat B_j(x,y) \widehat C_j - B_j(x,y)C_j\right| \\
&\quad \leq \sup_{x,y\in\mathbb R}\sqrt{\frac{1}{n}\sum_{j=1}^n (\widehat B_j(x,y) \widehat C_j - B_j(x,y)C_j)^2} =o_P(1).
\end{align*}
Combining the presented bounds gives $R_{2n} = o_P(1)$.

Regarding $R_{3n}$, we have by the triangle inequality that
\begin{align*}
& \sqrt{R_{3n}} \leq \sup_{x\in\mathbb R}\left| \frac{1}{n}\sum_{j=1}^n (R_j^Y - \widehat\rho R_j^X - W_j'\widehat\beta)(I(x,X_j) - W_j'\widehat\gamma) - h_3(x) \right| \\
& \quad \leq \sup_{x\in\mathbb R}\left| \frac{1}{n}\sum_{j=1}^n (R_j^Y - \widehat\rho R_j^X - W_j'\widehat\beta)(I(x,X_j) - W_j'\widehat\gamma) - \widehat h_3(x) \right| + \sup_{x\in\mathbb R}|\widehat h_3(x) - h_3(x)|,
\end{align*}
where
$$
\widehat h_3(x) := \frac{1}{n}\sum_{j=1}^n (R_Y(Y_j) - \rho R_X(X_j) - W_j'\beta)(I(x,X_j) - W_j'\gamma),\quad\text{for all }x\in\mathbb R.
$$
Also, $\sup_{x\in\mathbb R}|\widehat h_3(x) - h_3(x)| = o_P(1)$ under Assumption \ref{as: vector w} by Lemma \ref{lem: glivenko cantelli 2} and the triangle inequality. In addition, for all $j=1,\dots,n$, denote $B_j := R_Y(Y_j) - \rho R_X(X_j) - W_j'\beta, \ \widehat B_j := R_j^Y - \widehat \rho R_j^X - W_j'\widehat\beta$, $C_j(x) := I(x,X_j) - W_j' \gamma$, $\widehat C_j(x) := I(x,X_j) - W_j'\widehat \gamma$, for all $x\in\mathbb R$. Then $\frac{1}{n}\sum_{j=1}^n (\widehat B_j - B_j)^4 = o_P(1)$ by Theorems \ref{thm: AN for rank rank reg} and \ref{thm: AN for rank rank reg all coeffs}, Lemma \ref{lem: glivenko cantelli}, and Assumption \ref{as: vector w}; $\sup_{x\in\mathbb R}\frac{1}{n}\sum_{j=1}^n (\widehat C_j(x) - C_j(x))^4 = o_P(1)$ by Lemma \ref{lem: gamma asy normality} and Assumption \ref{as: vector w}; $\frac{1}{n}\sum_{j=1}^n B_j^4 = O_P(1)$ and $\sup_{x\in\mathbb R}\frac{1}{n}\sum_{j=1}^n C_j(x)^4 = O_P(1)$ by Assumption \ref{as: vector w}. Hence, by Lemma \ref{lem: simple square approximation 2},
\begin{align*}
&\sup_{x\in\mathbb R}\left| \frac{1}{n}\sum_{j=1}^n (R_j^Y - \widehat\rho R_j^X - W_j'\widehat\beta)(I(x,X_j) - W_j'\widehat\gamma) - \widehat h_3(x) \right| \\
&\qquad = \sup_{x\in\mathbb R} \left| \frac{1}{n}\sum_{j=1}^n \widehat B_j \widehat C_j(x) - B_jC_j(x)\right| \\
&\qquad\leq \sup_{x\in\mathbb R}\sqrt{\frac{1}{n}\sum_{i=1}^n (\widehat B_j \widehat C_j(x) - B_jC_j(x))^2} =o_P(1).
\end{align*}
Combining presented bounds gives $R_{3n} = o_P(1)$.

Thus, \eqref{eq: cons variance est 2} and hence \eqref{eq: cons variance est 1} are satisfied. In turn, combining \eqref{eq: cons variance est 1} with the law of large numbers yields
\begin{equation}\label{eq: cons variance est 3}
\frac{1}{n}\sum_{i=1}^n(H_{1i} + H_{2i} + H_{3i})^2 \to_P E[(h_1(X,W,Y) + h_2(X,Y) + h_3(X))^2].
\end{equation}
Also, $\hat\sigma_{\nu}^2 \to_P \sigma_{\nu}^2$ by Lemma \ref{lem: denominator limit}. The asserted claim now follows from combining these convergence results with the continuous mapping theorem.
\end{proof}

\subsection{Proofs for Section \ref{sec: other regressions}}

\subsubsection{Proofs for Section \ref{sec: rank rank reg with subpops}}

\begin{proof}[Proof of Theorem \ref{thm: AN for rank rank reg with subpops}]
    Fix $g \in \{1,\dots,n_G\}$ and let $\tilde{R}_{i,g}^X := \ind\{G_i=g\}R_i^X$, $\tilde{R}_{i,g}^Y := \ind\{G_i=g\}R_i^Y$, and $\tilde{W}_{i,g} := \ind\{G_i=g\}W_i$. Then, $\hat{\rho}_g$ is the OLS estimator from a regression of $\tilde{R}_{i,g}^Y$ on $\tilde{R}_{i,g}^X$ and $\tilde{W}_{i,g}$ using all observations $i=1,\ldots,n$. Therefore, as in the proof of Theorem~\ref{thm: AN for rank rank reg},
    \begin{align*}
        \sqrt n(\hat\rho_g - \rho_g) &= \frac{\frac{1}{\sqrt n}\sum_{i=1}^n \left(\tilde{R}_{i,g}^Y - \rho_g\tilde{R}^X_{i,g}\right)\left(\tilde{R}_{i,g}^X - \tilde{W}_{i,g}'\hat\gamma_g\right)}{\frac{1}{n}\sum_{i=1}^n \left(\tilde{R}_{i,g}^X - \tilde{W}_{i,g}'\hat\gamma_g\right)^2}\\
            &= \frac{\frac{1}{\sqrt n}\sum_{i=1}^n \ind\{G_i=g\}\left(\hat{R}_Y(Y_i) - \rho_g\hat{R}_X(X_i) \right)\left(\hat{R}_X(X_i) - W_i'\hat\gamma_g\right)}{\frac{1}{n}\sum_{i=1}^n \ind\{G_i=g\}\left(\hat{R}_X(X_i) - W_i'\hat\gamma_g\right)^2},
    \end{align*}
see \eqref{eq: fwl to be referred in another thm}.
    By Lemmas \ref{lem: denominator limit with subpops} and \ref{lem: three i's bound with subpops}, we have
    \begin{multline*}
        \sqrt n(\hat\rho_g - \rho_g) = \frac{\frac{1}{\sqrt n}\sum_{i=1}^n\ind\{G_i=g\}\left(\hat R_Y(Y_i) - \rho_g\hat R_X(X_i) - W_i'\beta_g\right)\left(\hat R_X(X_i) - W_i'\gamma_g\right)}{\frac{1}{n}\sum_{i=1}^n \ind\{G_i=g\}\left(\hat{R}_X(X_i) - W_i'\hat\gamma_g\right)^2},
    \end{multline*}
    up to an additive $o_P(1)$ term. Define $Z_i := (G_i,Y_i,X_i,W_i')'$ for all $i=1,\dots,n$ and
    $$f_g(Z_i,Z_j,Z_k):=\ind\{G_i=g\}\left(I(Y_j, Y_i) - \rho I(X_j, X_i) - W_i'\beta\right)(I(X_k, X_i) - W_i'\gamma)$$ 
    for all $i,j,k=1,\dots,n$. Also, define
    $h_g(Z_i,Z_j,Z_k) := 6^{-1}\sum_{i_1,i_2,i_3}f_g(Z_{i_1},Z_{i_2},Z_{i_3}),$
    where the sum is over all six permutations $(i_1,i_2,i_3)$ of the triplet $(i,j,k)$. Note that $h_g$ is a symmetric function satisfying $E[h_g(Z_i,Z_j,Z_k)]=0$ whenever $1\leq i<j<k\leq n$. Also, Assumption~\ref{as: vector w with subpops} implies that $E(h_g(Z_i,Z_j,Z_k)^2)<\infty$. Then, letting $z:=(\bar g,y,x,w')'$,
    \begin{align*} 
        &E[f_g(Z_1,Z_2,Z_3)\mid Z_1=z] = E[f_g(Z_1,Z_3,Z_2) | Z_1=z]\\
                &\quad= \ind\{\bar{g}=g\}(R_Y(y) -\rho_g R_X(x)-w'\beta_g)(R_X(x)-w'\gamma_g) = h_{1,g}(\bar{g},x,w,y),\\
        &E[f_g(Z_2,Z_1,Z_3)\mid Z_1=z] = E[f_g(Z_3,Z_1,Z_2)\mid Z_1=z]\\
                &\quad= E\left[\ind\{G=g\}(I(y,Y) -\rho_g I(x,X)-W'\beta_g)(R_X(X)-W'\gamma_g)\right] = h_{2,g}(x,y),\\
        &E[f_g(Z_2,Z_3,Z_1)\mid Z_i=z] = E[f_g(Z_3,Z_2,Z_1)\mid Z_1=z]\\
                &\quad= E\left[\ind\{G=g\}(R_Y(Y) -\rho_g R_X(X)-W'\beta_g)(I(x,X)-W'\gamma_g)\right] = h_{3,g}(x),
    \end{align*}
    we can argue as in the proof of Theorem~\ref{thm: AN for rank rank reg} that
    \begin{multline*}
        \frac{1}{n}\sum_{i=1}^n\ind\{G_i=g\}\left(\hat R_Y(Y_i) - \rho_g\hat R_X(X_i) - W_i'\beta_g\right)\left(\hat R_X(X_i) - W_i'\gamma_g\right)\\
            =\frac{1}{n}\sum_{i=1}^n \Big\{  h_{1,g}(X_i,W_i,Y_i) + h_{2,g}(X_i,Y_i) + h_{3,g}(X_i) \Big\} + o_P(n^{-1/2}).
    \end{multline*}
    The remainder of the proof then follows that of Theorem~\ref{thm: AN for rank rank reg}.
\end{proof}

\begin{lemma}\label{lem: denominator limit with subpops}
    Under Assumption \ref{as: vector w with subpops}, we have for any $g=1,\ldots,n_G$,
    $$n^{-1}\sum_{i=1}^n\ind\{G_i=g\}(\hat R_X(X_i)-W_i'\hat\gamma_g)^2\to_P E[\ind\{G_i=g\}(R_X(X)-W'\gamma_g)^2] = \sigma_{\nu,g}^2.$$
\end{lemma}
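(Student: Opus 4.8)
The plan is to adapt the proof of Lemma~\ref{lem: denominator limit} essentially verbatim, carrying the factor $\ind\{G_i=g\}$ through every step. Fix $g$, set $A_i := \hat R_X(X_i) - W_i'\hat\gamma_g$ and $B_i := R_X(X_i) - W_i'\gamma_g$, and note $A_i - B_i = (\hat R_X(X_i) - R_X(X_i)) - W_i'(\hat\gamma_g - \gamma_g)$. Applying the identity $a^2 - b^2 = (a-b)^2 + 2(a-b)b$ to $A_i,B_i$, multiplying by $\ind\{G_i=g\}$, and averaging gives
\[
\frac1n\sum_{i=1}^n \ind\{G_i=g\}A_i^2 = \frac1n\sum_{i=1}^n \ind\{G_i=g\}B_i^2 + \frac1n\sum_{i=1}^n \ind\{G_i=g\}(A_i - B_i)^2 + \frac2n\sum_{i=1}^n \ind\{G_i=g\}(A_i - B_i)B_i,
\]
so it suffices to show the first term on the right converges in probability to $\sigma_{\nu,g}^2$ and the other two are $o_P(1)$.

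For the first term, since $R_X$ is bounded by $1$, since $\gamma_g$ is pinned down by $E\|W\|^2<\infty$ and non-singularity of $E[\ind\{G=g\}WW']$ (Assumption~\ref{as: vector w with subpops}), and since $\ind\{G=g\}(R_X(X)-W'\gamma_g)^2 = \ind\{G=g\}\nu^2$ by \eqref{eq: first stage with subpops app 1}, the law of large numbers (Assumption~\ref{as: random sample with subpops}) gives $n^{-1}\sum_i \ind\{G_i=g\}B_i^2 \to_P E[\ind\{G=g\}(R_X(X)-W'\gamma_g)^2] = \sigma_{\nu,g}^2$. Next I would record the subpopulation analog of Lemma~\ref{lem: gamma asy normality}, namely $\sqrt n(\hat\gamma_g - \gamma_g) = O_P(1)$: this follows from the identical argument with all sums restricted to $\{i:G_i=g\}$, the inverse design matrix being $O_P(1)$ by Assumption~\ref{as: vector w with subpops}, $n^{-1/2}\sum_i \ind\{G_i=g\}W_i\nu_i = O_P(1)$ by the central limit theorem (mean zero from \eqref{eq: first stage with subpops app 1}, finite variance from $E\|W\|^4<\infty$), and $n^{-1/2}\sum_i \ind\{G_i=g\}W_i(\hat R_X(X_i) - R_X(X_i)) = O_P(1)$ by the U-statistic results on p.~183 of \cite{Serfling:2002re}. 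Then $|A_i - B_i| \le \sup_x|\hat R_X(x) - R_X(x)| + \|W_i\|\,\|\hat\gamma_g - \gamma_g\|$, and squaring, averaging, and using $\sup_x|\hat R_X(x) - R_X(x)| = o_P(1)$ (Lemma~\ref{lem: glivenko cantelli}), $n^{-1}\sum_i\|W_i\|^2 = O_P(1)$, and $\|\hat\gamma_g - \gamma_g\| = O_P(n^{-1/2})$ yields $n^{-1}\sum_i \ind\{G_i=g\}(A_i - B_i)^2 = o_P(1)$.

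For the cross term, Cauchy--Schwarz bounds $\bigl|2n^{-1}\sum_i \ind\{G_i=g\}(A_i - B_i)B_i\bigr|$ by $2\bigl(n^{-1}\sum_i\ind\{G_i=g\}(A_i-B_i)^2\bigr)^{1/2}\bigl(n^{-1}\sum_i\ind\{G_i=g\}B_i^2\bigr)^{1/2}$, whose first factor is $o_P(1)$ by the previous paragraph and second factor is $O_P(1)$ by the convergence already shown; hence it is $o_P(1)$. Combining the three pieces gives the claim.

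I do not expect a genuine obstacle: the only step that is not a mechanical copy of Lemma~\ref{lem: denominator limit} is the $\sqrt n$-consistency of $\hat\gamma_g$, and that is itself just Lemma~\ref{lem: gamma asy normality} with sums restricted to subpopulation $g$, so the ``hard part'' is really only bookkeeping with the indicator $\ind\{G_i=g\}$.
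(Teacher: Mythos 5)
Your proposal is correct and follows essentially the same route as the paper: the paper's proof of this lemma is simply ``similar to Lemma~\ref{lem: denominator limit},'' i.e.\ the identity $a^2-b^2=(a-b)^2+2(a-b)b$, the law of large numbers, the uniform convergence of $\hat R_X$ (Lemma~\ref{lem: glivenko cantelli}), and the subpopulation analog of Lemma~\ref{lem: gamma asy normality}, all of which you carry out explicitly with the indicator $\ind\{G_i=g\}$. The only addition you make is spelling out the $\sqrt n$-consistency of $\hat\gamma_g$ and the Cauchy--Schwarz bound on the cross term, which is exactly the bookkeeping the paper leaves implicit.
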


\begin{proof}
    Similar to that of Lemma \ref{lem: denominator limit}.
\end{proof}

\begin{lemma}\label{lem: three i's bound with subpops}
    Under Assumption \ref{as: vector w with subpops}, we have for any $g=1,\ldots,n_G$,
    \begin{align*}
    & \frac{1}{\sqrt n}\sum_{i=1}^n\ind\{G_i=g\}(\hat R_Y(Y_i) - \rho_g\hat R_X(X_i))(\hat R_X(X_i) - W_i'\hat\gamma_g) \\
    &\qquad = \frac{1}{\sqrt n}\sum_{i=1}^n\ind\{G_i=g\}(\hat R_Y(Y_i) - \rho_g\hat R_X(X_i) - W_i'\beta_g)(\hat R_X(X_i) - W_i'\gamma_g) + o_P(1).
    \end{align*}
\end{lemma}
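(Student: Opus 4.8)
The plan is to follow the proof of Lemma~\ref{lem: three i's bound} essentially verbatim, now carrying the indicator $\ind\{G_i=g\}$ through every step. Fix $g\in\{1,\dots,n_G\}$ and abbreviate $A_i := \hat R_Y(Y_i) - \rho_g\hat R_X(X_i)$, $B_i := \hat R_X(X_i) - W_i'\hat\gamma_g$, and $\tilde B_i := \hat R_X(X_i) - W_i'\gamma_g$, so that $B_i - \tilde B_i = -W_i'(\hat\gamma_g - \gamma_g)$. The claim is that $n^{-1/2}\sum_i\ind\{G_i=g\}A_iB_i$ and $n^{-1/2}\sum_i\ind\{G_i=g\}(A_i - W_i'\beta_g)\tilde B_i$ differ by $o_P(1)$, and I would obtain this in two moves: first subtract $W_i'\beta_g$ from the first factor at no cost, then replace $B_i$ by $\tilde B_i$ in the second factor.

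For the first move, $\hat\gamma_g$ solves the within-subpopulation normal equations $\sum_i\ind\{G_i=g\}W_i(\hat R_X(X_i) - W_i'\hat\gamma_g) = 0$, i.e.\ $\sum_i\ind\{G_i=g\}W_iB_i = 0$; hence $n^{-1/2}\sum_i\ind\{G_i=g\}W_i'\beta_gB_i = \beta_g'\big(n^{-1/2}\sum_i\ind\{G_i=g\}W_iB_i\big) = 0$, so $n^{-1/2}\sum_i\ind\{G_i=g\}A_iB_i = n^{-1/2}\sum_i\ind\{G_i=g\}(A_i - W_i'\beta_g)B_i$ exactly. For the second move, the remaining difference equals $-\big(n^{-1}\sum_i\ind\{G_i=g\}(A_i - W_i'\beta_g)W_i'\big)\cdot\sqrt n(\hat\gamma_g - \gamma_g)$. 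The factor $\sqrt n(\hat\gamma_g - \gamma_g)$ is $O_P(1)$ by the argument of Lemma~\ref{lem: gamma asy normality} applied within subpopulation $g$: using \eqref{eq: first stage with subpops} one writes $\sqrt n(\hat\gamma_g - \gamma_g) = \big(n^{-1}\sum_i\ind\{G_i=g\}W_iW_i'\big)^{-1}\big(n^{-1/2}\sum_i\ind\{G_i=g\}W_i(\nu_i + \hat R_X(X_i) - R_X(X_i))\big)$, where the inverse converges to $(E[\ind\{G=g\}WW'])^{-1}$ by Assumption~\ref{as: vector w with subpops} and the law of large numbers, the term in $\nu_i$ is $O_P(1)$ by the central limit theorem (the summands are i.i.d.\ and mean zero since $E[\nu W\mid G]=0$ a.s., with finite variance by Assumption~\ref{as: vector w with subpops}), and the term in $\hat R_X - R_X$ is $O_P(1)$ by the results on p.~183 of \cite{Serfling:2002re}.

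It then remains to show $n^{-1}\sum_i\ind\{G_i=g\}(A_i - W_i'\beta_g)W_i' \to_P 0$. By Lemma~\ref{lem: glivenko cantelli}, $\sup_x|\hat R_X(x) - R_X(x)|$ and $\sup_y|\hat R_Y(y)-R_Y(y)|$ are $o_P(1)$, so replacing $\hat R_X,\hat R_Y$ by $R_X,R_Y$ inside $A_i - W_i'\beta_g$ changes this average by at most $(1+|\rho_g|)\big(\sup_x|\hat R_X(x)-R_X(x)| + \sup_y|\hat R_Y(y)-R_Y(y)|\big)\cdot n^{-1}\sum_i\|W_i\| = o_P(1)O_P(1) = o_P(1)$, using $E[\|W\|^4]<\infty$. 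On the event $\{G_i=g\}$, model \eqref{eq: model with subpops} gives $R_Y(Y_i)-\rho_gR_X(X_i)-W_i'\beta_g = \varepsilon_i$, so the leading term is $n^{-1}\sum_i\ind\{G_i=g\}\varepsilon_iW_i'$, which by the law of large numbers (or Chebyshev's inequality) converges in probability to $E[\ind\{G=g\}\varepsilon W'] = E[\ind\{G=g\}E[\varepsilon W'\mid G]] = 0$ by the conditional orthogonality in \eqref{eq: model with subpops}. Multiplying this $o_P(1)$ factor by the $O_P(1)$ factor $\sqrt n(\hat\gamma_g-\gamma_g)$ yields $o_P(1)$, which is the assertion. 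I expect the only mildly delicate point — the analogue of the role played by Chebyshev's inequality in the proof of Lemma~\ref{lem: three i's bound} — to be this last convergence, where one must invoke the \emph{conditional} moment restriction $E[\varepsilon(R_X(X),W')'\mid G]=0$ a.s.\ rather than just an unconditional one, precisely in order to accommodate the subpopulation weighting $\ind\{G_i=g\}$.
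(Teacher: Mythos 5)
Your proposal is correct and follows essentially the same route as the paper, whose proof of this lemma is simply the subpopulation analogue of Lemma~\ref{lem: three i's bound}: the within-group normal equations identity $\sum_{i}\ind\{G_i=g\}W_i(\hat R_X(X_i)-W_i'\hat\gamma_g)=0$, the $\sqrt n$-boundedness of $\hat\gamma_g-\gamma_g$ (the group-level version of Lemma~\ref{lem: gamma asy normality}), the uniform consistency of the estimated ranks from Lemma~\ref{lem: glivenko cantelli}, and Chebyshev's inequality (or the law of large numbers) applied to $n^{-1}\sum_i\ind\{G_i=g\}\varepsilon_iW_i'$, whose mean vanishes by the conditional orthogonality in \eqref{eq: model with subpops}. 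The details you fill in, including carrying the indicator through each step, match the intended argument.
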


\begin{proof}
    Similar to that of Lemma \ref{lem: three i's bound}.
\end{proof}

\subsubsection{Proofs for Section \ref{sec: outcome on rank}}

\begin{proof}[Proof of Theorem \ref{thm: AN for outcome on rank reg}] 
    As in the proof of Theorem \ref{thm: AN for rank rank reg}, 
    $$
    \sqrt n(\hat\rho - \rho) = \frac{\frac{1}{\sqrt n}\sum_{i=1}^n (Y_i - \rho\hat R_X(X_i))(\hat R_X(X_i) - W_i'\hat\gamma)}{\frac{1}{n}\sum_{i=1}^n (\hat R_X(X_i) - W_i'\hat\gamma)^2}.
    $$
    By Lemmas \ref{lem: denominator limit} and \ref{lem: three i's bound other reg}, we have
    \begin{equation}\label{eq: rho hat ratio other reg}
        \sqrt n(\hat\rho - \rho) = \frac{\frac{1}{\sqrt n}\sum_{i=1}^n (Y_i - \rho\hat R_X(X_i) - W_i'\beta)(\hat R_X(X_i) - W_i'\gamma)}{\frac{1}{n}\sum_{i=1}^n (\hat R_X(X_i) - W_i'\hat\gamma)^2} + o_P(1).
    \end{equation}
    Define $Z_i := (Y_i,X_i,W_i')'$ for all $i=1,\dots,n$ and
    $$f(Z_i,Z_j,Z_k):=\left( Y_i - \rho I(X_j, X_i) - W_i'\beta\right)(I(X_k, X_i) - W_i'\gamma)$$ 
    for all $i,j,k=1,\dots,n$.
    Also, define
    $h(Z_i,Z_j,Z_k) := 6^{-1}\sum_{i_1,i_2,i_3}f(Z_{i_1},Z_{i_2},Z_{i_3}),$
    where the sum is taken over all six permutations $(i_1,i_2,i_3)$ of the triplet $(i,j,k)$. Since $E[\varepsilon^4]<\infty$ and Assumption \ref{as: vector w} imply $E[h(Z_i,Z_j,Z_k)^2]<\infty$, we can argue as in the proof of Theorem \ref{thm: AN for rank rank reg} to show that
    \begin{multline}\label{eq: proj other reg}
        \frac{1}{n}\sum_{i=1}^n (Y_i - \rho\hat R_X(X_i) - W_i'\beta)(\hat R_X(X_i) - W_i'\gamma)\\
         = \frac{1}{n}\sum_{i=1}^n\Big\{ h_1(X_i,W_i,Y_i) + h_2(X_i) + h_3(X_i)\Big\} + o_P(n^{-1/2})
    \end{multline}
    where for $z=(x,y,w')'$, we define
    \begin{align*} 
        E[f(Z_i,Z_j,Z_k)\mid Z_i=z] &= E[f(Z_i,Z_k,Z_j) | Z_i=z]\\
                &= (y -\rho R_X(x)-w'\beta)(R_X(x)-w'\gamma) = h_1(x,w,y),\\
        E[f(Z_j,Z_i,Z_k)\mid Z_i=z] &= E[f(Z_k,Z_i,Z_j)\mid Z_i=z]\\
                &= E\left[(Y -\rho I(x,X)-W'\beta)(R_X(X)-W'\gamma)\right] = h_2(x),\\
        E[f(Z_j,Z_k,Z_i)\mid Z_i=z] &= E[f(Z_k,Z_j,Z_i)\mid Z_i=z]\\
                &= E\left[(Y -\rho R_X(X)-W'\beta)(I(x,X)-W'\gamma)\right] = h_3(x)
    \end{align*}
    Combining \eqref{eq: rho hat ratio other reg} with \eqref{eq: proj other reg} then implies the desired result by the same argument as that in the proof of Theorem~\ref{thm: AN for rank rank reg}.
\end{proof}

\begin{lemma}\label{lem: three i's bound other reg}
    Under Assumption \ref{as: vector w}, we have
    \begin{align*}
    & \frac{1}{\sqrt n}\sum_{i=1}^n( Y_i - \rho\hat R_X(X_i))(\hat R_X(X_i) - W_i'\hat\gamma) \\
    &\qquad = \frac{1}{\sqrt n}\sum_{i=1}^n(Y_i - \rho\hat R_X(X_i) - W_i'\beta)(\hat R_X(X_i) - W_i'\gamma) + o_P(1).
    \end{align*}
\end{lemma}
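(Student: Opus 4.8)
The plan is to follow the proof of Lemma \ref{lem: three i's bound} almost verbatim, with $\hat R_Y(Y_i) - \rho\hat R_X(X_i)$ replaced by $Y_i - \rho\hat R_X(X_i)$. Write $D_n$ for the left-hand side minus the right-hand side. First I would add and subtract $W_i'\beta$ in the first sum and use $\hat R_X(X_i) - W_i'\gamma = (\hat R_X(X_i) - W_i'\hat\gamma) + W_i'(\hat\gamma - \gamma)$ in the second, which yields
\begin{equation*}
D_n = \frac{1}{\sqrt n}\sum_{i=1}^n W_i'\beta\,(\hat R_X(X_i) - W_i'\hat\gamma) - \left(\frac{1}{n}\sum_{i=1}^n (Y_i - \rho\hat R_X(X_i) - W_i'\beta)W_i'\right)\sqrt n(\hat\gamma - \gamma).
\end{equation*}
The first term equals $\beta'\bigl(\frac{1}{\sqrt n}\sum_{i=1}^n W_i(\hat R_X(X_i) - W_i'\hat\gamma)\bigr)$, which is exactly zero by the OLS normal equation $\sum_{i=1}^n W_i(\hat R_X(X_i) - W_i'\hat\gamma) = 0$ defining $\hat\gamma$. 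Hence it suffices to show that the second term is $o_P(1)$.

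For that, I would combine $\sqrt n(\hat\gamma - \gamma) = O_P(1)$ from Lemma \ref{lem: gamma asy normality} with the claim $\frac{1}{n}\sum_{i=1}^n (Y_i - \rho\hat R_X(X_i) - W_i'\beta)W_i' = o_P(1)$. To prove the claim, substitute the model \eqref{eq: model of outcome on rank} to write $Y_i - \rho\hat R_X(X_i) - W_i'\beta = \varepsilon_i + \rho(R_X(X_i) - \hat R_X(X_i))$. Then $\frac{1}{n}\sum_{i=1}^n \varepsilon_i W_i' \to_P E[\varepsilon W'] = 0$ by the law of large numbers, using $E[\varepsilon(R_X(X),W')'] = 0$ together with $E[\|W\|^4]<\infty$ and $E[\varepsilon^4]<\infty$; and $\bigl\|\frac{1}{n}\sum_{i=1}^n (R_X(X_i) - \hat R_X(X_i))W_i'\bigr\| \le \bigl(\sup_{x\in\mathbb R}|R_X(x) - \hat R_X(x)|\bigr)\frac{1}{n}\sum_{i=1}^n\|W_i\| = o_P(1)\cdot O_P(1)$ by Lemma \ref{lem: glivenko cantelli} and the law of large numbers. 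Slutsky's lemma then gives $D_n = o_P(1)$, which is the assertion.

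I do not expect a genuine obstacle here. The only subtlety, as in Lemma \ref{lem: three i's bound}, is that neither of the two pieces of $D_n$ is $o_P(1)$ on its own — each is merely $O_P(1)$ — so the argument must exploit the exact cancellation from the normal equation rather than any crude bound, and one has to keep careful track of where $\gamma$ versus $\hat\gamma$ appears. The rest is routine and mirrors the proofs of Lemmas \ref{lem: three i's bound} and \ref{lem: three i's bound with subpops}.
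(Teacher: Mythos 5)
Your proposal is correct and follows essentially the same route as the paper, whose proof of this lemma simply refers back to Lemma \ref{lem: three i's bound} (the normal-equation identity $\sum_{i=1}^n W_i(\hat R_X(X_i)-W_i'\hat\gamma)=0$, Lemma \ref{lem: gamma asy normality}, Lemma \ref{lem: glivenko cantelli}, and a law-of-large-numbers/Chebyshev step), with the only cosmetic difference that you invoke the LLN where the paper cites Chebyshev's inequality.
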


\begin{proof}
Similar to that of Lemma \ref{lem: three i's bound}.
\end{proof}

\subsubsection{Proofs for Section \ref{sec: rank on regressor}}

Denote by $W_l$ the $l$-th element of $W:=(W_1,\ldots,W_p)'$ and by $W_{-l}$ the vector of all elements of $W$ except the $l$-th. The projection of $W_l$ onto $W_{-l}$ for any $l=1,\ldots,p$ now takes the following form:
\begin{equation}\label{eq: first stage general regressor}
    W_l = W_{-l}'\gamma_l + \nu_l,\quad E[\nu_l W_{-l}] = 0,
\end{equation}
where $\gamma_1,\ldots,\gamma_p$ are $(p-1)$-dimensional vectors of parameters.

\begin{proof}[Proof of Theorem \ref{thm: AN for rank on regressor reg}]
Here, we prove that
    $$
    \sqrt n( \hat\beta - \beta) = \frac{1}{\sqrt n}\sum_{i=1}^n \psi_i + o_P(1)\to_D N(0,\Sigma),
    $$
    where $\Sigma  := E[\psi_i\psi_i']$,    $\psi_i := (\phi_1(W_i,Y_i), \ldots, \phi_p(W_i,Y_i)'$ for all $i=1,\dots,n$, 
    $\phi_l(w,y) := \sigma_{\nu_l}^{-2} \left[ h_{l,1}(w,y) + h_{l,2}(y)\right], $
    $\sigma_{\nu_l}^2 := E[\nu_l^2]$, and 
    \begin{align*}
        h_{l,1}(w,y) &:= (R_Y(y) - w'\beta)(w_l - w_{-l}'\gamma_l),\\
        h_{l,2}(y) & := E[(I(y,Y) - W'\beta)(W_l - W_{-l}'\gamma_l)],
    \end{align*}
    for all $w\in\mathbb R^p$ and $y\in\mathbb R$.
These formulas not only show the asymptotic normality of $\widehat\beta$ but also give an explicit expression for the asymptotic variance $\Sigma$.

    As in the proof of Theorem \ref{thm: AN for rank rank reg}, 
    $$
    \sqrt n(\hat\beta_l - \beta_l) = \frac{\frac{1}{\sqrt n}\sum_{i=1}^n (\hat{R}_Y(Y_i) - \beta_l W_{l,i})( W_{l,i} - W_{-l,i}'\hat\gamma_l)}{\frac{1}{n}\sum_{i=1}^n ( W_{l,i} - W_{-l,i}'\hat\gamma_l)^2}
    $$
    with
$
        \hat\gamma = (\sum_{i=1}^n W_{-l,i} W_{-l,i}')^{-1}(\sum_{i=1}^n W_{-l,i} W_{l,i}).
$
    By Lemmas \ref{lem: denominator limit other reg 2} and \ref{lem: three i's bound other reg 2}, we have
    \begin{equation}\label{eq: rho hat ratio other reg 2}
        \sqrt n(\hat\beta_l - \beta_l) = \frac{\frac{1}{\sqrt n}\sum_{i=1}^n (\hat{R}_Y(Y_i) - W_i'\beta)(W_{l,i} - W_{-l,i}'\gamma_l)}{\frac{1}{n}\sum_{i=1}^n (W_{l,i} - W_{-l,i}'\hat\gamma_l)^2} + o_P(1).
    \end{equation}
    Define $Z_i := (Y_i,W_i')'$ for all $i=1,\dots,n$,
    $$f(Z_i,Z_j):=\left( I(Y_j,Y_i) - W_i'\beta\right)(W_{l,i} - W_{-l,i}'\gamma_l)$$ 
    for all $i,j=1,\dots,n$, and $h(Z_i,Z_j) := 2^{-1}(f(Z_i,Z_j)+f(Z_j,Z_i))$ for all $i,j=1,\dots,n$. Since Assumption \ref{as: vector w} implies $E[h(Z_i,Z_j)^2]<\infty$, we can argue as in the proof of Theorem~\ref{thm: AN for rank rank reg} that
    \begin{equation}\label{eq: proj other reg 2}
        \frac{1}{n}\sum_{i=1}^n (\hat{R}_Y(Y_i) - W_i'\beta)(W_{l,i} - W_{-l,i}'\gamma_l) = \frac{1}{n}\sum_{i=1}^n\Big\{ h_{l,1}(W_i,Y_i) + h_{l,2}(Y_i) \Big\} + o_P(n^{-1/2})
    \end{equation}
    where for $z=(y,w')'$, we define
    \begin{align*} 
        E[f(Z_i,Z_j)\mid Z_i=z] &= (R_Y(y) -w'\beta)(w_l-w_{-l}'\gamma_l) = h_{l,1}(w,y),\\
        E[f(Z_j,Z_i)\mid Z_i=z] &= E\left[(I(y,Y) -W'\beta)(W_l-W_{-l}'\gamma_l)\right] = h_{l,2}(y).
    \end{align*}
    Combining \eqref{eq: rho hat ratio other reg 2} with \eqref{eq: proj other reg 2} then implies the desired result by the same argument as that in the proof of Theorem~\ref{thm: AN for rank rank reg}.
\end{proof}

\begin{lemma}\label{lem: denominator limit other reg 2}
    Under Assumption \ref{as: vector w}, for each $l=1,\ldots,p$, we have $\sigma_{\nu_l}^2>0$ and
    $$\frac{1}{n}\sum_{i=1}^n( W_{l,i}-W_{-l,i}'\hat\gamma_l)^2\to_P E[(W_l-W_{-l}'\gamma_l)^2] = \sigma_{\nu_l}^2.$$
\end{lemma}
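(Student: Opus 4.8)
The plan is to first establish consistency of the OLS coefficient $\hat\gamma_l$ and then expand the sum of squares around the population projection residual. First I would verify that $\sigma_{\nu_l}^2>0$: if $E[\nu_l^2]$ were zero, then $\nu_l=0$ almost surely, so by \eqref{eq: first stage general regressor} we would have $W_l=W_{-l}'\gamma_l$ almost surely, i.e. a nontrivial affine combination of the components of $W$ vanishes almost surely, which contradicts the non-singularity of $E[WW']$ in Assumption~\ref{as: vector w}.

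Next I would show $\hat\gamma_l\to_P\gamma_l$. Since $E[\|W\|^4]<\infty$ in particular $E[\|W\|^2]<\infty$, so the weak law of large numbers gives $n^{-1}\sum_{i=1}^n W_{-l,i}W_{-l,i}'\to_P E[W_{-l}W_{-l}']$ and $n^{-1}\sum_{i=1}^n W_{-l,i}W_{l,i}\to_P E[W_{-l}W_l]$. The matrix $E[W_{-l}W_{-l}']$ is a principal submatrix of the positive-definite matrix $E[WW']$, hence positive-definite and invertible, so by the continuous mapping theorem
$$
\hat\gamma_l=\Big(\tfrac{1}{n}\sum_{i=1}^n W_{-l,i}W_{-l,i}'\Big)^{-1}\Big(\tfrac{1}{n}\sum_{i=1}^n W_{-l,i}W_{l,i}\Big)\to_P \big(E[W_{-l}W_{-l}']\big)^{-1}E[W_{-l}W_l]=\gamma_l ,
$$
where the last equality is the definition of $\gamma_l$ via the orthogonality condition in \eqref{eq: first stage general regressor}.

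Then I would write $W_{l,i}-W_{-l,i}'\hat\gamma_l=\nu_{l,i}-W_{-l,i}'(\hat\gamma_l-\gamma_l)$ with $\nu_{l,i}:=W_{l,i}-W_{-l,i}'\gamma_l$, square, and expand:
$$
\frac{1}{n}\sum_{i=1}^n(W_{l,i}-W_{-l,i}'\hat\gamma_l)^2=\frac{1}{n}\sum_{i=1}^n\nu_{l,i}^2-2(\hat\gamma_l-\gamma_l)'\frac{1}{n}\sum_{i=1}^n W_{-l,i}\nu_{l,i}+(\hat\gamma_l-\gamma_l)'\Big(\frac{1}{n}\sum_{i=1}^n W_{-l,i}W_{-l,i}'\Big)(\hat\gamma_l-\gamma_l).
$$
By the law of large numbers the first term converges in probability to $E[\nu_l^2]=\sigma_{\nu_l}^2$. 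In the second term $n^{-1}\sum_i W_{-l,i}\nu_{l,i}\to_P E[W_{-l}\nu_l]=0$ by \eqref{eq: first stage general regressor} and $\hat\gamma_l-\gamma_l=o_P(1)$, so that term is $o_P(1)$; in the third term the central factor is $O_P(1)$ and $\hat\gamma_l-\gamma_l=o_P(1)$, so it is $o_P(1)$ as well. Combining the three pieces yields the claim.

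I do not anticipate a genuine obstacle here; the only points that require a moment's care are the strict positivity $\sigma_{\nu_l}^2>0$ and the invertibility of $E[W_{-l}W_{-l}']$, both of which follow immediately from the non-singularity of $E[WW']$ assumed in Assumption~\ref{as: vector w}.
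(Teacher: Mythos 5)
Your proposal is correct and follows essentially the same route as the paper, which simply notes that $\sigma_{\nu_l}^2>0$ follows from Assumption~\ref{as: vector w} and that the rest is ``similar to Lemma~\ref{lem: denominator limit}'', i.e.\ the law of large numbers, consistency of the projection coefficient, and the square-expansion identity $a^2-b^2=(a-b)^2+2(a-b)b$ — exactly the decomposition you carry out explicitly. Your added details (positivity of $E[\nu_l^2]$ via non-singularity of $E[WW']$, invertibility of the principal submatrix $E[W_{-l}W_{-l}']$) are the same facts the paper leaves implicit, and are argued correctly.
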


\begin{proof}
    Since it follows from Assumption~\ref{as: vector w} that $\sigma_{\nu_l}^2>0$, the proof is similar to that of Lemma~\ref{lem: denominator limit}.
\end{proof}

\begin{lemma}\label{lem: three i's bound other reg 2}
    Under Assumption \ref{as: vector w}, for each $l=1,\ldots,p$, we have
    \begin{align*}
    & \frac{1}{\sqrt n}\sum_{i=1}^n( \hat{R}_Y(Y_i) - \beta_l W_{l,i})( W_{l,i} - W_{-l,i}'\hat\gamma_l) \\
    &\qquad  = \frac{1}{\sqrt n}\sum_{i=1}^n(\hat{R}_Y(Y_i) - W_i'\beta)( W_{l,i} - W_{-l,i}'\gamma_l) + o_P(1).
    \end{align*}
\end{lemma}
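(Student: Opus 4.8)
The plan is to follow the template of the proof of Lemma~\ref{lem: three i's bound}, the only two differences being that now the ``regressors'' $W_{l,i}$ and $W_{-l,i}$ are raw data (so no Glivenko--Cantelli control is needed for them) while only $\hat R_Y(Y_i)$ carries estimation error. Write $W_i'\beta = \beta_l W_{l,i} + W_{-l,i}'\beta_{-l}$, where $\beta_{-l}$ is the subvector of $\beta$ multiplying $W_{-l}$, so that $\hat R_Y(Y_i) - \beta_l W_{l,i} = (\hat R_Y(Y_i) - W_i'\beta) + W_{-l,i}'\beta_{-l}$. Substituting this into the left-hand side of the claimed identity and using the OLS normal equation $\sum_{i=1}^n W_{-l,i}(W_{l,i} - W_{-l,i}'\hat\gamma_l) = 0$, which annihilates the $W_{-l,i}'\beta_{-l}$ contribution, one obtains exactly
$$
\frac{1}{\sqrt n}\sum_{i=1}^n (\hat R_Y(Y_i) - \beta_l W_{l,i})(W_{l,i} - W_{-l,i}'\hat\gamma_l) = \frac{1}{\sqrt n}\sum_{i=1}^n (\hat R_Y(Y_i) - W_i'\beta)(W_{l,i} - W_{-l,i}'\hat\gamma_l),
$$
so the first substitution is free.

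It then remains to replace $\hat\gamma_l$ by $\gamma_l$ in the residualized regressor. The difference between the right-hand side above and the target equals $-\big(\tfrac1n\sum_{i=1}^n (\hat R_Y(Y_i) - W_i'\beta)W_{-l,i}'\big)\,\sqrt n(\hat\gamma_l - \gamma_l)$. Since $\hat\gamma_l$ is an ordinary least-squares coefficient from regressing $W_{l,i}$ on $W_{-l,i}$ --- no estimated ranks are involved --- Assumption~\ref{as: vector w} and the central limit theorem give $\sqrt n(\hat\gamma_l - \gamma_l) = O_P(1)$. Hence it suffices to show $\tfrac1n\sum_{i=1}^n (\hat R_Y(Y_i) - W_i'\beta)W_{-l,i}' = o_P(1)$. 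Decomposing $\hat R_Y(Y_i) - W_i'\beta = (R_Y(Y_i) - W_i'\beta) + (\hat R_Y(Y_i) - R_Y(Y_i))$, the first piece yields $\tfrac1n\sum_{i=1}^n (R_Y(Y_i) - W_i'\beta)W_{-l,i}' \to_P E[(R_Y(Y) - W'\beta)W_{-l}'] = E[\varepsilon W_{-l}'] = 0$ by the law of large numbers and the moment condition in \eqref{eq: model of rank on regressor}, while the second piece is bounded in norm by $\sup_{y\in\mathbb R}|\hat R_Y(y) - R_Y(y)|\cdot \tfrac1n\sum_{i=1}^n \|W_{-l,i}\| = o_P(1)\cdot O_P(1)$ by Lemma~\ref{lem: glivenko cantelli} and Assumption~\ref{as: vector w}. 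Combining the two pieces finishes the argument.

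The genuinely substantive point --- as opposed to the routine bookkeeping --- is that the leading term $\tfrac1n\sum_{i=1}^n (R_Y(Y_i) - W_i'\beta)W_{-l,i}'$ must converge to \emph{zero}, not merely be $O_P(1)$, because it is multiplied by the $O_P(1)$ quantity $\sqrt n(\hat\gamma_l - \gamma_l)$; this is exactly where the orthogonality $E[\varepsilon W]=0$ (hence $E[\varepsilon W_{-l}]=0$) is used. Everything else reduces to the same combination of the Glivenko--Cantelli theorem, the law of large numbers, and Chebyshev's inequality already invoked for Lemma~\ref{lem: three i's bound}, so I would simply state the intermediate $o_P$ and $O_P$ claims and refer back to those arguments rather than repeating them.
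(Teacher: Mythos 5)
Your proposal is correct and follows essentially the same route as the paper, which proves this lemma by mimicking the proof of Lemma~\ref{lem: three i's bound}: use the OLS normal equation $\sum_{i=1}^n W_{-l,i}(W_{l,i}-W_{-l,i}'\hat\gamma_l)=0$ to pass from $\beta_l W_{l,i}$ to $W_i'\beta$, then control the $\hat\gamma_l$-versus-$\gamma_l$ substitution via $\sqrt n(\hat\gamma_l-\gamma_l)=O_P(1)$ multiplied by an average that vanishes by the orthogonality $E[\varepsilon W_{-l}]=0$, the law of large numbers (Chebyshev), and the Glivenko--Cantelli bound on $\hat R_Y-R_Y$. Your observation that the first-stage here involves no estimated ranks, so only $\hat R_Y$ needs uniform control, is exactly the simplification implicit in the paper's "similar to Lemma~\ref{lem: three i's bound}" remark.
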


\begin{proof}
    Similar to that of Lemma \ref{lem: three i's bound}.
\end{proof}

\subsection{Proofs for Appendix \ref{sec: all coefficients}}
\begin{proof}[Proof of Theorem \ref{thm: AN for rank rank reg all coeffs}]
Fix $l=1,\dots,p$. As in the proof of Theorem \ref{thm: AN for rank rank reg},
$$
\sqrt n(\hat\beta_l - \beta_l) = \frac{\frac{1}{\sqrt n}\sum_{i=1}^n (\hat R_Y(Y_i) - \beta_l W_{l,i})(W_{l,i} - \hat\tau_l\hat R_X(X_i) - W_{-l,i}'\hat\delta_l)}{\frac{1}{n}\sum_{i=1}^n (W_{l,i} - \hat\tau_l\hat R_X(X_i) - W_{-l,i}'\hat\delta_l)^2},
$$
where
$$
\begin{pmatrix}
\hat{\tau_l}\\
\hat{\delta_l}
\end{pmatrix}=\left(\sum_{i=1}^{n}\begin{pmatrix}
\hat R_X(X_i)\\
W_{-l,i}
\end{pmatrix}\begin{pmatrix}
\hat R_X(X_i) & W_{-l,i}'\end{pmatrix}\right)^{-1}\sum_{i=1}^{n}\begin{pmatrix}
\hat R_X(X_i)\\
W_{-l,i}
\end{pmatrix}W_{l,i}.
$$
Further, again by the same arguments as those in the proof of Theorem \ref{thm: AN for rank rank reg},
\begin{align*}
& \frac{1}{\sqrt n}\sum_{i=1}^n (\hat R_Y(Y_i) - \beta_l W_{l,i})(W_{l,i} - \hat\tau_l\hat R_X(X_i) - W_{-l,i}'\hat\delta_l)\\
&\qquad = \frac{1}{\sqrt n}\sum_{i=1}^n (\hat R_Y(Y_i) - \rho\hat R_X(X_i) - W_{i}'\beta)(W_{l,i} - \tau_l\hat R_X(X_i) - W_{-l,i}'\delta_l) + o_P(1)\\
&\qquad = \frac{1}{\sqrt n}\sum_{i=1}^n \Big\{ h_{l,1}(X_i,W_i,Y_i) + h_{l,2}(X_i,Y_i) + h_{l,3}(X_i) \Big\} + o_P(1)
\end{align*}
and
$
n^{-1}\sum_{i=1}^n (W_{l,i} - \hat\tau_l\hat R_X(X_i) - W_{-l,i}'\hat\delta_l)^2 \to_P \sigma_{v_l}^2.
$
Combining these results with the central limit theorem and Slutsky's lemma gives the asserted claim.
\end{proof}

\begin{proof}[Proof of Theorem \ref{thm: AN for rank rank reg all coeffs subpops}]
Similar to that of Theorem \ref{thm: AN for rank rank reg with subpops}, based on the same relationship as that between Theorem \ref{thm: AN for rank rank reg all coeffs} and Theorem \ref{thm: AN for rank rank reg}.
\end{proof}

\begin{proof}[Proof of Theorem \ref{thm: AN for rank rank reg all coeffs gen outcome on rank}]
Similar to that of Theorem \ref{thm: AN for outcome on rank reg}, based on the same relationship as that between Theorem \ref{thm: AN for rank rank reg all coeffs} and Theorem \ref{thm: AN for rank rank reg}.
\end{proof}

\section{Bootstrap Validity}\label{app: bootstrap}
In this section, we discuss consistency of the nonparametric bootstrap
for estimating the asymptotic distribution of $\sqrt{n}(\widehat{\rho}-\rho)$
as mentioned in Remark \ref{rem: bootstrap}. To this end, let $\widehat\rho^*$ be the nonparametric bootstrap version of $\widehat\rho$ appearing in Section \ref{sec: asymptotic normality}, and let $\sigma^2$ be the same as in Theorem \ref{thm: AN for rank rank reg}. We have the following result.
\begin{theorem}\label{thm: bootstrap}
Suppose that \eqref{eq: model with controls}--\eqref{eq: first stage} hold and that Assumptions \ref{as: random sample}--\ref{as: variable nu} are satisfied. Then the sequence $\sqrt n(\widehat \rho^* - \widehat \rho)$ converges conditionally in distribution to $N(0,\sigma^2)$, given $\{(X_i,W_i,Y_i)\}_{i=1}^n$ in probability.
\end{theorem}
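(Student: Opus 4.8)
The plan is to exhibit $\rho$ as the value of one Hadamard-differentiable functional $\phi$ applied to the law of the data, so that $\widehat\rho=\phi(\mathbb P_n)$ and $\widehat\rho^*=\phi(\mathbb P_n^*)$ up to asymptotically negligible terms, and then to quote the functional delta method for the bootstrap (Theorem~23.9 in \cite{vaart}). Concretely, I would first fix a class $\mathcal F$ of functions of $Z:=(X,W',Y)'$ rich enough to recover everything entering $\widehat\rho$: the half-line indicators $\ind\{X\le t\}$, $\ind\{X<t\}$, $\ind\{Y\le t\}$, $\ind\{Y<t\}$, $t\in\mathbb R$ (these generate $\widehat F_X,\widehat F_X^-,\widehat F_Y,\widehat F_Y^-$ and hence $\widehat R_X,\widehat R_Y$), together with the finitely many functions $z\mapsto W_l$, $z\mapsto W_lW_m$, $z\mapsto W_l\ind\{X\le t\}$, $z\mapsto W_l\ind\{Y\le t\}$, and a constant. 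Each half-line family is VC with envelope $1$ or $|W_l|$, and under Assumption~\ref{as: vector w} these envelopes lie in $L^2(P)$, so $\mathcal F$ is $P$-Donsker. Hence $\sqrt n(\mathbb P_n-P)\rightsquigarrow\mathbb G_P$ in $\ell^\infty(\mathcal F)$ for a tight Gaussian process $\mathbb G_P$, and, by the bootstrap central limit theorem for Donsker classes (Chapter~23 of \cite{vaart}), the bootstrap empirical process $\sqrt n(\mathbb P_n^*-\mathbb P_n)$ converges conditionally in probability to the same limit $\mathbb G_P$.

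Next I would argue that $\widehat\rho=\phi(\mathbb P_n)$ for a fixed map $\phi$ not depending on $n$, up to an $O_P(n^{-1})$ error coming from the $(1-\omega)/n$ summand in \eqref{eq: def rank}; since $\sqrt n\,O_P(n^{-1})=o_P(1)$, and the analogous statement holds for $\widehat\rho^*$ relative to $\phi(\mathbb P_n^*)$ conditionally, these errors are harmless. Using the Frisch--Waugh--Lovell representation \eqref{eq: hat rho alternative}, $\phi$ decomposes as a composition of: (i) the linear (hence trivially Hadamard-differentiable) map sending the $\mathcal F$-indexed process to the pair of cadlag functions $(\omega\widehat F_X+(1-\omega)\widehat F_X^-,\ \omega\widehat F_Y+(1-\omega)\widehat F_Y^-)$ and to the moment vector $(\En[WW'],\En[W\,\widehat R_X(X)],\dots)$; (ii) the substitution maps $Q\mapsto\int(\omega F_Y+\cdots)(\omega F_X+\cdots)\,dQ$, $Q\mapsto\int W(\omega F_X+\cdots)\,dQ$, and so on, in which one argument is a cadlag function of bounded variation and the other a measure — these are Hadamard-differentiable at $P$ under the integrability in Assumption~\ref{as: vector w}, the derivative in the bounded-variation slot being defined through integration by parts; and (iii) the smooth finite-dimensional operations: inverting $\En[WW']$ (nonsingular by Assumption~\ref{as: vector w}), matrix multiplication, and the final ratio, whose denominator tends to $\sigma_\nu^2>0$ by Assumption~\ref{as: variable nu} and Lemma~\ref{lem: denominator limit}. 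The chain rule for Hadamard differentiability then gives differentiability of $\phi$ at $P$ tangentially to the closed linear span of the range of $\mathbb G_P$.

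With these ingredients, the functional delta method for the bootstrap yields $\sqrt n(\widehat\rho^*-\widehat\rho)=\sqrt n(\phi(\mathbb P_n^*)-\phi(\mathbb P_n))+o_P(1)$, which converges conditionally in probability to $\phi_P'(\mathbb G_P)$; the ordinary delta method gives $\sqrt n(\widehat\rho-\rho)\rightsquigarrow\phi_P'(\mathbb G_P)$ as well, so the two limits coincide. To finish, I would identify $\phi_P'(\mathbb G_P)$ with the influence-function form from the proof of Theorem~\ref{thm: AN for rank rank reg}, namely $\sigma_\nu^{-2}\,\mathbb G_P(h_1+h_2+h_3)$, which is centered Gaussian with variance $\sigma^2$ as in \eqref{eq: our general variance formula}; alternatively one may simply note that the conditional limit, being Gaussian, is pinned down by its variance, which must match that of the unconditional limit $N(0,\sigma^2)$.

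I expect the main obstacle to be step (ii): making rigorous the Hadamard differentiability of the substitution maps in which the random empirical cdf appears \emph{both} as an integrand and implicitly through the empirical measure it is integrated against — the distributional analogue of ``plugging estimated ranks into a sample average.'' The bounded-variation structure of cdfs together with the fourth-moment bound on $W$ are precisely what make the standard Hadamard-differentiability lemmas for integration and composition applicable here; the remaining pieces — Donskerness of $\mathcal F$, conditional consistency of the bootstrap empirical process, and the delta method itself — are standard and require no new work.
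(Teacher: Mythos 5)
Your overall architecture is the same as the paper's: view $\rho$ as a Hadamard-differentiable functional of the joint law, establish a Donsker property for a suitable index class, and invoke the functional delta method for the bootstrap (Theorems 23.7/23.9 in \cite{vaart}), identifying the limit with $N(0,\sigma^2)$ via Theorem~\ref{thm: AN for rank rank reg}. However, there is a genuine gap at exactly the step you flag as ``the main obstacle,'' and your proposal does not close it. First, the index class $\mathcal F$ you choose (marginal half-line indicators and their $W_l$-weighted versions) is too small: the values $\{\mathbb P_n f: f\in\mathcal F\}$ determine only the marginal empirical cdfs and some cross-moments with $W$, not the joint pairing of $X_i$ with $Y_i$, so the key ingredient of $\hat\rho$, namely $n^{-1}\sum_i \hat R_Y(Y_i)\hat R_X(X_i)$, is not a functional of the $\ell^\infty(\mathcal F)$-valued process at all. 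The delta method requires the map $\phi$ to be defined and differentiable on the space in which the empirical process converges, so the index class must be rich enough to evaluate integrals of the form $\int R_X^Q(x)R_Y^Q(y)\,dH(x,y)$ for perturbations $H$ in the tangent space, with $R_X^Q,R_Y^Q$ ranging over a neighborhood of the true ranks. This is precisely why the paper enlarges the index set to $\mathcal K_3$, the class of all products $g_1(x)g_2(y)$ of nondecreasing $[0,1]$-valued functions, and proves its Donsker property by a bracketing-entropy argument (Lemma~\ref{lem: donsker}, using Example 19.11 in \cite{vaart}); this class is not VC, so your VC-based Donsker argument does not extend to what is actually needed.

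Second, your step (ii) asserts that ``standard Hadamard-differentiability lemmas for integration and composition'' handle the map in which the empirical measure appears both as the measure of integration and, through its marginals, inside the integrand. The standard one-dimensional lemma for $(F,G)\mapsto\int F\,dG$ (with one bounded-variation slot and integration by parts) does not directly cover the bivariate product structure $Q\mapsto\int R_X^Q(x)R_Y^Q(y)\,dQ(x,y)$; making this rigorous is the substantive content of the paper's Lemma~\ref{lem: hadamard differentiability}, which is proved by hand on the domain $\mathbb D_\phi\subset\ell^\infty(\mathcal K)$ after the index-class enlargement, with the perturbation in the ``measure slot'' evaluated simply as an element of $\ell^\infty(\mathcal K_3)$ rather than via integration by parts. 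Your proposal defers exactly this construction, so as written it is a plan that identifies the right strategy but leaves its hardest component unproved; the remaining elements you list (negligibility of the $(1-\omega)/n$ term in \eqref{eq: def rank}, the FWL reduction, conditional convergence of the bootstrap empirical process, and matching the Gaussian limit's variance to $\sigma^2$) do agree with the paper's treatment. Note also that the paper carries out the argument only for $W=1$, whereas handling covariates as you propose would require further enlarging the index class and the differentiability lemma accordingly.
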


\begin{proof}
For simplicity, we focus on the case where $W=1$. The case with covariates is conceptually
similar but requires more notation. 

Fix $\omega\in[0,1]$. For all $t\in\mathbb R$, let $f_{t,1}\colon\mathbb R^2\to\mathbb R$ and $f_{t,2}\colon\mathbb R^2\to\mathbb R$ be the functions defined by 
$$
f_{t,1}(x,y):= \omega\mathds{1}\{x\leq t\} + (1-\omega)\mathds{1}\{x<t\},\quad (x,y)\in\mathbb R^2,
$$
$$
f_{t,2}(x,y):= \omega\mathds{1}\{y\leq t\} + (1-\omega)\mathds{1}\{y<t\},\quad (x,y)\in\mathbb R^2.
$$
Also, let $\mathcal K_1 := \{f_{t,1}\colon t\in\mathbb R\}$, $\mathcal K_2 := \{f_{t,2}\colon t\in\mathbb R\}$ and
$$
\mathcal K_3 := \{f\in\ell^{\infty}(\mathbb R^2)\colon f(x,y) = g_1(x)g_2(y)\text{ for all }(x,y)\in\mathbb R^2\text{ and some }(g_1,g_2)\in\mathcal M^2\},
$$
where $\mathcal M$ is the class of all non-decreasing functions mapping $\mathbb R$ to $[0,1]$. In addition, let $P$ denote the distribution of the pair $(X,Y)$. 

In the next lemma,  whose proof can be found at the end of this section, we show that the function class $\mathcal K := \mathcal K_1 \cup \mathcal K_2 \cup \mathcal K_3$ is $P$-Donsker; see Chapter 19 in \cite{vaart} for relevant definitions.
\begin{lemma}\label{lem: donsker}
The function class $\mathcal K$ is $P$-Donsker.
\end{lemma}
Next, let
$$
\mathcal F_1 := \{h_1 \in \ell^{\infty}(\mathcal K_1)\colon h_1(f_{t,1}) = g(t) \text{ for all }t\in\mathbb R\text{ and some }g\in\mathcal M\},
$$
$$
\mathcal F_2 := \{h_2 \in \ell^{\infty}(\mathcal K_2)\colon h_2(f_{t,2}) = g(t) \text{ for all }t\in\mathbb R\text{ and some }g\in\mathcal M\},
$$
$$
\mathcal F_3 := \left\{h_3\in \ell^{\infty}(\mathcal K_3)\colon h_3(f) = \int_{\mathbb R^2}f(x,y)dG(x,y)\text{ for all }f\in\mathcal K_3\text{ and some }G\in\mathcal G\right\},
$$
where $\mathcal G$ is the set of all cumulative distribution functions (cdfs) on $\mathbb R^2$ and the integral is understood in the Lebesgue sense. Also, let
$$
\mathcal D_1 := \{h_1\in\ell^{\infty}(\mathcal K_1)\colon h_1(f_{t,1})=g(t)\text{ for all }t\in\mathbb R\text{ and some }g\in\mathcal D\},
$$
$$
\mathcal D_2 := \{h_2\in\ell^{\infty}(\mathcal K_2)\colon h_2(f_{t,2})=g(t)\text{ for all }t\in\mathbb R\text{ and some }g\in\mathcal D\},
$$
where $\mathcal D$ is the set of cadlag functions mapping $\mathbb R$ to $\mathbb R$. In addition, let $\psi\colon \mathcal F_1\times \mathcal F_2 \to \mathcal K_3$ be the function defined by
$$
\psi(h_1,h_2)(x,y) := h_1(f_{x,1})h_2(f_{y,2}),\quad (x,y)\in\mathbb R^2,\quad (h_1,h_2)\in\mathcal F_1\times\mathcal F_2.
$$
Now, observe that since $\mathcal K = \mathcal K_1 \cup \mathcal K_2 \cup \mathcal K_3$, any function $h\in\ell^{\infty}(\mathcal K)$ can be decomposed into $(h_1,h_2,h_3)$, where $h_1\in\ell^{\infty}(\mathcal K_1)$, $h_2\in\ell^{\infty}(\mathcal K_2)$, and $h_3\in\ell^{\infty}(\mathcal K_3)$. Using this decomposition, define
$$
\mathbb D_{\phi} := \{h=(h_1,h_2,h_3)\in\ell^{\infty}(\mathcal K)\colon h_1\in\mathcal F_1,h_2\in\mathcal F_2,h_3\in\mathcal F_3\},
$$
$$
\mathbb D_0 := \{h = (h_1,h_2,h_3)\in\mathcal \ell^{\infty}(\mathcal K)\colon h_1\in\mathcal D_1,h_2\in\mathcal D_2,h_3\in \ell^{\infty}(\mathcal K_3)\},
$$
and let $\phi\colon\mathbb D_{\phi}\to\mathbb R$ be the function defined by
$$
\phi(h) := \frac{h_{3}(\psi(h_{1},h_{2})) - h_3(\psi(h_1,1))h_3(\psi(1,h_2))}{h_{3}(\psi(h_{1}^2,1)) - h_3(\psi(h_1,1))^2},\quad h=(h_{1},h_{2},h_{3})\in\mathbb{D}_{\phi}.
$$
Also, let $\theta := (\theta_1,\theta_2,\theta_3)\in\mathbb D_\phi$ be defined by $\theta_1(f_{t,1}) := E[f_{t,1}(X,Y)] = R_X(t)$ for all $t\in\mathbb R$, $\theta_2(f_{t,2}) := E[f_{t,2}(X,Y)] = R_Y(t)$ for all $t\in\mathbb R$, and 
$$
\theta_3(f) := \int_{\mathbb R^2} f(x,y)dF(x,y) = E[f(X,Y)]
$$ 
for all $f\in \mathcal K_3$, where $F\in\mathcal G$ is the cdf of the pair $(X,Y)$. Observe here that $\theta_3(\psi(\theta_1,\theta_2)) = E[R_X(X)R_Y(Y)]$, $\theta_3(\psi(\theta_1,1))=E[R_X(X)]$, $\theta_3(\psi(1,\theta_2))=E[R_Y(Y)]$, and $\theta_3(\psi(\theta_1^2,1))=E[R_X(X)^2]$, so that $\rho = \phi(\theta)$.

In the next lemma, whose proof can be found at the end of this section,  we show that the function $\phi$ is Hadamard differentiable at $\theta\in\mathbb D_{\phi}$ tangentially to $\mathbb D_0$; see Chapter 20 in \cite{vaart} for relevant definitions.
\begin{lemma}\label{lem: hadamard differentiability}
Suppose that Assumption \ref{as: variable nu} is satisfied. Then the function $\phi$ is Hadamard differentiable at $\theta\in\mathbb{D}_{\phi}\subset \ell^{\infty}(\mathcal K)$
tangentially to $\mathbb{D}_{0}\subset \ell^{\infty}(\mathcal K)$ with derivative $\phi_{\theta}'\colon\mathbb{D}_{0}\to\mathbb{R}$
given by
\begin{align*}
\phi_{\theta}'(h) & :=\frac{Cov(h_1(f_{X,1}),R_Y(Y)) + Cov(R_X(X),h_2(f_{Y,2})) - 2\rho Cov(h_1(f_{X,1}),R_X(X))}{Var(R_X(X))} \\
& \quad + \frac{h_3(\psi(\theta_1,\theta_2))(1-\rho + (2\rho - 1)E[R_X(X)]-E[R_Y(Y)])}{Var(R_X(X))},\  h=(h_{1},h_{2},h_{3})\in\mathbb{D}_0.
\end{align*}
\end{lemma}
\begin{remark}[Existence of the Derivative]
Observe that the functions $x\mapsto h_1(f_{x,1})$ and $y\mapsto h_2(f_{y,2})$ are bounded and, being cadlag, measurable. Hence, the terms $Cov(h_1(f_{X,1}),R_Y(Y))$, $Cov(R_X(X),h_2(f_{Y,2}))$, and $Cov(h_1(f_{X,1}),R_X(X))$ are well-defined. Also, $Var(R_X(X))>0$ under Assumption \ref{as: variable nu}. Thus, the derivative $\phi_\theta'$ is well-defined.
\QEDA
\end{remark}

Next, let $\hat\theta := (\hat\theta_1,\hat\theta_2,\hat\theta_3)\in\mathbb D_{\phi}$ be defined by 
$$
\hat\theta_1(f_{t,1}) := \int_{\mathbb R^2} f_{t,1}(x,y)\hat F(x,y) = \frac{1}{n}\sum_{i=1}^n f_{t,1}(X_i,Y_i) = \hat R_X(t) - \frac{1-\omega}{n},\quad t\in\mathbb R,
$$
$$
\hat\theta_2(f_{t,2}) := \int_{\mathbb R^2} f_{t,2}(x,y)\hat F(x,y) = \frac{1}{n}\sum_{i=1}^n f_{t,2}(X_i,Y_i) = \hat R_Y(t) - \frac{1-\omega}{n},\quad t\in\mathbb R,
$$
$$
\hat\theta_3(f) := \int_{\mathbb R^2} f(x,y)\hat F(x,y) = \frac{1}{n}\sum_{i=1}^n f(X_i,Y_i),\quad f\in\mathcal K_3,
$$
where $\hat F\in\mathcal G$ is the empirical cdf of the pair $(X,Y)$. Also, let $\hat\theta^* := (\hat\theta^*_1,\hat\theta^*_2,\hat\theta^*_3)\in\mathbb D_{\phi}$ be defined as $\hat\theta = (\hat\theta_1,\hat\theta_2,\hat\theta_3)$ with the empirical bootstrap cdf $\hat F^*\in\mathcal G$ replacing the empirical cdf $\hat F$.\footnote{To formally define the empirical bootstrap cdf, let $\{(X_i^*,Y_i^*)\}_{i=1}^n$ be a bootstrap sample obtained from drawing $n$ pairs from $\{(X_i,Y_i)\}_{i=1}^n$ at random with replacement. Then the empirical bootstrap cdf is defined by $\hat F^*(x,y) = n^{-1}\sum_{i=1}^n \mathds 1\{X_i^*\leq x,Y_i^*\leq y\}$, $(x,y)\in\mathbb R^2$.} 

Then $\hat\rho = \phi(\hat\theta)$ and $\hat\rho^* = \phi(\hat\theta^*)$. Also, it follows from Lemma \ref{lem: donsker} that $\sqrt n(\hat \theta - \theta)\to_D \mathbb G_P$ in $\ell^{\infty}(\mathcal K)$, where $\mathbb G_P$ is a tight Gaussian process on $\mathcal K$; see Chapter 19.2 in \cite{vaart}. In addition, $\mathbb G_P$ takes values in $\mathbb D_0$ by Theorem 19.3 in \cite{vaart}. Moreover, since all functions in $\mathcal K$ take values in $[0,1]$, the function class $\mathcal K$ has a bounded envelope. Thus, applying Theorem 23.9 in \cite{vaart} with $\mathbb D = \ell^{\infty}(\mathcal K)$ and $T = \mathbb G_P$ and using Theorem 23.7 in \cite{vaart} and Lemma \ref{lem: hadamard differentiability} above, it follows that the sequence $\sqrt n(\hat\rho^* - \hat\rho)$ converges in distribution to $\phi_\theta'(\mathbb G_P)$, given $\{(X_i,W_i,Y_i)\}_{i=1}^n$ in probability. On the other hand, it follows from Theorem 20.8 in \cite{vaart} that $\sqrt n(\hat\rho - \rho)\to_D \phi_\theta'(\mathbb G_P)$, and so the distribution of $ \phi_\theta'(\mathbb G_P)$ is equal to $N(0,\sigma^2)$ by Theorem \ref{thm: AN for rank rank reg}. Combining these results yields the asserted claim and completes the proof of the theorem.
\end{proof}

\begin{proof}[Proof of Lemma \ref{lem: donsker}]
Consider the following function classes:
$$
\mathcal K_{1,1} := \{f\in\ell^{\infty}(\mathbb R^2)\colon f(x,y) = \mathds 1\{x\leq t\}\text{ for all }(x,y)\in\mathbb R^2\text{ and some }t\in\mathbb R\},
$$
$$
\mathcal K_{1,2} := \{f\in\ell^{\infty}(\mathbb R^2)\colon f(x,y) = \mathds 1\{x < t\}\text{ for all }(x,y)\in\mathbb R^2\text{ and some }t\in\mathbb R\},
$$
$$
\mathcal K_{2,1} := \{f\in\ell^{\infty}(\mathbb R^2)\colon f(x,y) = \mathds 1\{y\leq t\}\text{ for all }(x,y)\in\mathbb R^2\text{ and some }t\in\mathbb R\},
$$
$$
\mathcal K_{2,2} := \{f\in\ell^{\infty}(\mathbb R^2)\colon f(x,y) = \mathds 1\{y < t\}\text{ for all }(x,y)\in\mathbb R^2\text{ and some }t\in\mathbb R\},
$$
$$
\mathcal K_{3,1} := \{f\in\ell^{\infty}(\mathbb R^2)\colon f(x,y) = g(x)\text{ for all }(x,y)\in\mathbb R^2\text{ and some }g\in\mathcal M\},
$$
$$
\mathcal K_{3,2} := \{f\in\ell^{\infty}(\mathbb R^2)\colon f(x,y) = g(y)\text{ for all }(x,y)\in\mathbb R^2\text{ and some }g\in\mathcal M\}.
$$
Observe that the function classes $\mathcal{K}_{1,1}$ and
$\mathcal{K}_{2,1}$ have finite bracketing integrals by Example 19.6
in Van der Vaart. Also, $\mathcal{K}_{1,2}$ and $\mathcal{K}_{2,2}$
have finite bracketing integrals by the same argument. In addition,
$\mathcal{K}_{3,1}$ and $\mathcal{K}_{3,2}$ have finite bracketing integrals
by Example 19.11 in Van der Vaart.

Now, to show that $\mathcal{K}_{1}$ has a finite bracketing integral,
fix $\varepsilon>0$ and consider $\varepsilon$-brackets $(l_{1,1},u_{1,1}),\dots,(l_{1,N(\varepsilon,1)},u_{1,N(\varepsilon,1)})$
 in $L_{2}(P)$ for the function class $\mathcal{K}_{1,1}$,
where we denoted $N(\varepsilon,1)=N_{[]}(\varepsilon,\mathcal{K}_{1,1},L_{2}(P))$.
Similarly, let $(l_{2,1},u_{2,1}),\dots,(l_{2,N(\varepsilon,2)},u_{2,N(\varepsilon,2)})$
be $\varepsilon$-brackets in $L_{2}(P)$ for the function class $\mathcal{K}_{1,2}$,
where $N(\varepsilon,2)=N_{[]}(\varepsilon,\mathcal{K}_{1,2},L_{2}(P))$.
Then for any $f\in\mathcal{K}_{1}$, there exist $g_{1}\in\mathcal{K}_{1,1}$
and $g_{2}\in\mathcal{K}_{1,2}$ such that $f=\omega g_{1}+(1-\omega)g_{2}$
and, correspondingly, $j(1)\in\{1,\dots,N(\varepsilon,1)\}$ and $j(2)\in\{1,\dots,N(\varepsilon,2)\}$
such that $l_{1,j(1)}(x,y)\leq g_{1}(x,y)\leq u_{1,j(1)}(x,y)$ and
$l_{2,j(2)}(x,y)\leq g_{2}(x,y)\leq u_{2,j(2)}(x,y)$ for all $x,y\in\mathbb{R}$.
On the other hand,
\begin{align*}
 & \|\omega(u_{1,j(1)}-l_{1,j(1)})+(1-\omega)(u_{2,j(2)}-l_{2,j(2)})\|_{P,2}\\
 & \quad\leq\omega\|(u_{1,j(1)}-l_{1,j(1)})\|_{P,2}+(1-\omega)\|(u_{2,j(2)}-l_{2,j(2)})\|_{P,2}\leq\varepsilon.
\end{align*}
Therefore, pairs of functions of the form $(\omega l_{1,j(1)}+(1-\omega)l_{2,j(2)},\omega u_{1,j(1)}+(1-\omega)u_{2,j(2)})$
with $j(1)\in\{1,\dots,N(\varepsilon,1)\}$ and $j(2)\in\{1,\dots,N(\varepsilon,2)\}$
form $\varepsilon$-brackets for $\mathcal{K}_{1}$ in $L_{2}(P)$.
Thus, 
\[
N_{[]}(\varepsilon,\mathcal{K}_{1},L_{2}(P))\leq N_{[]}(\varepsilon,\mathcal{K}_{1,1},L_{2}(P))\times N_{[]}(\varepsilon,\mathcal{K}_{1,2},L_{2}(P)),
\]
and so $\mathcal K_{1}$ has a finite bracketing integral:
\begin{align*}
 & \int_{0}^{1}\sqrt{\log N_{[]}(\varepsilon,\mathcal{K}_{1},L_{2}(P))}d\varepsilon\\
 & \quad\leq\int_{0}^{1}\sqrt{\log N_{[]}(\varepsilon,\mathcal{K}_{1,1},L_{2}(P))}d\varepsilon+\int_{0}^{1}\sqrt{\log N_{[]}(\varepsilon,\mathcal{K}_{1,2},L_{2}(P))}d\varepsilon<\infty.
\end{align*}
Moreover, $\mathcal{K}_{2}$ has a finite bracketing integral by the
same argument.

Further, to show that $\mathcal{K}_{3}$ has a finite bracketing integral
as well, fix $\varepsilon>0$ and consider $\varepsilon$-brackets $(l_{1,1},u_{1,1}),\dots,(l_{1,N(\varepsilon,1)},u_{1,N(\varepsilon,1)})$  in $L_{2}(P)$ for the function class $\mathcal{K}_{3,1}$,
where $N(\varepsilon,1)=N_{[]}(\varepsilon,\mathcal{K}_{3,1},L_{2}(P))$.
Similarly, let $(l_{2,1},u_{2,1}),\dots,(l_{2,N(\varepsilon,2)},u_{2,N(\varepsilon,2)})$
be $\varepsilon$-brackets in $L_{2}(P)$ for the function class $\mathcal{K}_{3,2}$,
where $N(\varepsilon,2)=N_{[]}(\varepsilon,\mathcal{K}_{3,2},L_{2}(P))$.
Then for any $f\in\mathcal{K}_{3}$, there exist $g_{1}\in\mathcal{K}_{3,1}$
and $g_{2}\in\mathcal{K}_{3,2}$ such that $f=g_{1}g_{2}$ and, correspondingly,
$j(1)\in\{1,\dots,N(\varepsilon,1)\}$ and $j(2)\in\{1,\dots,N(\varepsilon,2)\}$
such that $l_{1,j(1)}(x,y)\leq g_{1}(x,y)\leq u_{1,j(1)}(x,y)$ and
$l_{2,j(2)}(x,y)\leq g_{2}(x,y)\leq u_{2,j(2)}(x,y)$ for all $x,y\in\mathbb{R}$.
On the other hand,
\begin{align*}
 & \|u_{1,j(1)}u_{2,j(2)}-l_{1,j(1)}l_{2,j(2)}\|_{P,2}\\
 & \quad\leq\|(u_{1,j(1)}-l_{1,j(1)})u_{2,j(2)}\|_{P,2}+\|l_{1,j(1)}(u_{2,j(2)}-l_{2,j(2)})\|_{P,2}\\
 & \quad\leq\|u_{1,j(1)}-l_{1,j(1)}\|_{P,2}+\|u_{2,j(2)}-l_{2,j(2)}\|_{P,2}\leq2\varepsilon.
\end{align*}
Therefore, pairs of functions of the form $(l_{1,j(1)}l_{2,j(2)},u_{1,j(1)}u_{2,j(2)})$
with $j(1)\in\{1,\dots,N(\varepsilon,1)\}$ and $j(2)\in\{1,\dots,N(\varepsilon,2)\}$
form $(2\varepsilon)$-brackets for $\mathcal{K}_{3}$ in $L_{2}(P)$.
Thus, 
\[
N_{[]}(\varepsilon,\mathcal{K}_{3},L_{2}(P))\leq N_{[]}(\varepsilon/2,\mathcal{K}_{3,1},L_{2}(P))\times N_{[]}(\varepsilon/2,\mathcal{K}_{3,2},L_{2}(P)),
\]
and so $\mathcal K_3$ has a finite bracketing integral:
\begin{align*}
 & \int_{0}^{1}\sqrt{\log N_{[]}(\varepsilon,\mathcal{K}_{3},L_{2}(P))}d\varepsilon\\
 & \quad\leq2\int_{0}^{1}\sqrt{\log N_{[]}(\varepsilon,\mathcal{K}_{3,1},L_{2}(P))}d\varepsilon+2\int_{0}^{1}\sqrt{\log N_{[]}(\varepsilon,\mathcal{K}_{3,2},L_{2}(P))}d\varepsilon<\infty.
\end{align*}
Finally, since the bracketing integral of the union of function classes
does not exceed the sum of bracketing integrals for individual function
classes, it follows that $\mathcal{K}=\mathcal{K}_{1}\cup\mathcal{K}_{2}\cup\mathcal{K}_{3}$
has a finite bracketing integral as well. The asserted claim now follows
from Theorem 19.5 in \cite{vaart}.
\end{proof}

\begin{proof}[Proof of Lemma \ref{lem: hadamard differentiability}]
Let $\phi_{1}\colon\mathbb{D}_{\phi}\to\mathbb{R}$, $\phi_{2}\colon\mathbb{D}_{\phi}\to\mathbb{R}$, $\phi_{3}\colon\mathbb{D}_{\phi}\to\mathbb{R}$, and $\phi_{4}\colon\mathbb{D}_{\phi}\to\mathbb{R}$
be the functions defined by
$$
\phi_{1}(h):=h_{3}(\psi(h_{1},h_{2})),\quad \phi_{2}(h):=h_{3}(\psi(h_{1}^2,1)),\quad \phi_{3}(h):=h_{3}(\psi(h_{1},1)),\quad \phi_{4}(h):=h_{3}(\psi(1,h_{2})),
$$
so that 
$$
\phi(h)=\frac{\phi_{1}(h) - \phi_3(h)\phi_4(h)}{\phi_{2}(h) - \phi_3(h)^2},\quad h = (h_1,h_2,h_3)\in\mathbb D_{\phi}.
$$
We proceed in three steps.

{\bf Step 1.} Here, we show that $\phi_{1}$ is Hadamard differentiable
at $\theta\in\mathbb{D}_{\phi}$ tangentially to $\mathbb{D}_0$ with
derivative $\phi_{\theta,1}'\colon\mathbb{D}_{0}\to\mathbb{R}$ given
by
\[
\phi_{\theta,1}'(h):=E[h_1(f_{X,1})R_Y(Y) + R_X(X)h_2(f_{Y,2})]+h_{3}(\psi(\theta_1,\theta_2)),\ h=(h_{1},h_{2},h_{3})\in\mathbb{D}_{0}.
\]
To do so, note
that $\phi_{\theta,1}'$ is linear by standard properties of Lebesgue
integrals (the integrals exist because the functions $x\mapsto h_1(f_{x,1})$ and $y\mapsto h_2(f_{y,2})$ are bounded and, being cadlag, measurable) and also bounded: for any $h=(h_{1},h_{2},h_{3})\in\mathbb{D}_{0}$,
\[
|\phi_{\theta,1}'(h)|\leq\|h_{1}\|_{\infty}+\|h_{2}\|_{\infty}+\|h_{3}\|_{\infty}\leq 3\|h\|_{\infty}.
\]
Next, consider any sequences $\{t_{n}\}_{n\geq1}\subset\mathbb{R}$ and $\{h_{n}\}_{n\geq1}\subset\ell^{\infty}(\mathcal K)$ such that $t_{n}\to0$
in $\mathbb{R}$ and $h_{n}\to h\in\mathbb{D}_{0}$ in $\ell^{\infty}(\mathcal K)$
as $n\to\infty$ and $\theta+t_{n}h_{n}\in\mathbb{D}_{\phi}$ for
all $n\geq1$. For convenience, denote $h_{n}=(h_{n,1},h_{n,2},h_{n,3})$
for all $n\geq1$ and $h=(h_{1},h_{2},h_{3})$. Decompose
\[
\phi_{1}(\theta+t_{n}h_{n})-\phi_{1}(\theta)=I_{n,1}+I_{n,2},
\]
where
\[
I_{n,1}:=(\theta_3+t_{n}h_{n,3})(\psi(\theta_1+t_{n}h_{n,1},\theta_2+t_{n}h_{n,2}))-(\theta_3+t_{n}h_{n,3})(\psi(\theta_1,\theta_2)),
\]
\[
I_{n,2}:=(\theta_3+t_{n}h_{n,3})(\psi(\theta_1,\theta_2))-\theta_3(\psi(\theta_1,\theta_2)).
\]
Further, observe that since $\theta+t_{n}h_{n}\in\mathbb{D}_{\phi}$
implies that $\theta_3+t_{n}h_{n,3}\in\mathcal{F}_3$, it follows that for
each $n\geq1$, there exists $F_{n}\in\mathcal{G}$ such that
\[
(\theta_3+t_{n}h_{n,3})(\psi(h_{1},h_{2}))=\int_{\mathbb{R}^2}h_{1}(x)h_{2}(y)dF_{n}(x, y),\quad h_{1},h_{2}\in\mathcal{M}.
\]
Therefore, $I_{n,1}=I_{n,1,1}+I_{n,1,2}+I_{n,1,3}$, where
\begin{align*}
I_{n,1,1} & :=t_{n}\int_{\mathbb{R}^2}h_{n,1}(f_{x,1})R_Y(y)dF_{n}(x,y),\\
I_{n,1,2} & :=t_{n}\int_{\mathbb{R}^2}R_X(x)h_{n,2}(f_{y,2})dF_{n}(x,y),\\
I_{n,1,3} & :=t_{n}^{2}\int_{\mathbb{R}^2}h_{n,1}(f_{x,1})h_{n,2}(f_{y,2})dF_{n}(x,y).
\end{align*}
Here,
\[
|I_{n,1,3}|\leq t_{n}^{2}\|h_{n,1}\|_{\infty}\|h_{n,2}\|_{\infty}=o(t_{n})
\]
since $\|h_{n,1}-h_{1}\|_{\infty}\to0$, $\|h_{n,2}-h_{2}\|_{\infty}\to0$,
$\|h_{1}\|_{\infty}<\infty$, and $\|h_{2}\|_{\infty}<\infty$. Also,
\begin{align*}
 & \left|I_{n,1,1}-t_{n}\int_{\mathbb{R}^2}h_{n,1}(f_{x,1})R_{Y}(y)dF(x,y)\right|\\
 & \quad=t_{n}\left|(\theta_3+t_{n}h_{n,3})(\psi(h_{n,1},\theta_2))-\theta_3(\psi(h_{n,1},\theta_2))\right|=t_{n}^{2}|h_{n,3}(\psi(h_{n,1},\theta_2))|=o(t_{n})
\end{align*}
since $\|h_{n,3}-h_{3}\|_{\infty}\to0$ and $\|h_{3}\|_{\infty}<0$.
In addition,
$$
t_n\left| \int_{\mathbb{R}^2}h_{n,1}(f_{x,1})R_{Y}(y)dF(x,y) - \int_{\mathbb{R}^2}h_{1}(f_{x,1})R_{Y}(y)dF(x,y) \right| \leq t_n \|h_{n,1} - h_1\|_{\infty} = o(t_n)
$$
since $\|h_{n,1} - h_1\|_{\infty} \to 0$, so that
$$
\left|I_{n,1,1}-t_{n}\int_{\mathbb{R}^2}h_{1}(f_{x,1})R_{Y}(y)dF(x,y)\right| = o(t_n).
$$
Similarly,
\[
\left|I_{n,1,2}-t_{n}\int_{\mathbb{R}^2}R_{X}(x)h_{2}(f_{y,2})dF(x,y)\right|=o(t_{n}).
\]
Moreover,
\[
|I_{n,2}-t_{n}h_{3}(\psi(\theta_1,\theta_2))|=t_{n}|h_{n,3}(\psi(\theta_1,\theta_2))-h_{3}(\psi(\theta_1,\theta_2))|=o(t_{n}),
\]
since $\|h_{n,3}-h_{3}\|_{\infty}\to0$. Thus, it follows that
\[
\left|\frac{\phi_{1}(\theta+t_{n}h_{n})-\phi_{1}(\theta)}{t_{n}}-\phi_{\theta,1}'(h)\right|\to0,\quad\text{as }n\to\infty,
\]
which implies that $\phi_{1}$ is Hadamard differentiable at $\theta\in\mathbb{D}_{\phi}$
tangentially to $\mathbb{D}_{0}$ with derivative $\phi_{\theta,1}'$.

{\bf Step 2.} Observe that it follows from the same argument as that in
Step 1 that $\phi_{2}$ is Hadamard differentiable at $\theta\in\mathbb{D}_{\phi}$
tangentially to $\mathbb{D}_{0}$ with derivative $\phi_{\theta,2}'\colon\mathbb{D}_{0}\to\mathbb{R}$
given by
\[
\phi_{\theta,2}'(h):=2E[h_1(f_{X,1})R_X(X)]+h_{3}(\psi(\theta_1,\theta_2)),\quad h=(h_{1},h_{2},h_{3})\in\mathbb{D}_0.
\]
Similarly, $\phi_3$ is Hadamard differentiable at $\theta\in\mathbb{D}_{\phi}$
tangentially to $\mathbb{D}_{0}$ with derivative $\phi_{\theta,3}'\colon\mathbb{D}_{0}\to\mathbb{R}$
given by
\[
\phi_{\theta,3}'(h):=E[h_1(f_{X,1})]+h_{3}(\psi(\theta_1,\theta_2)),\quad h=(h_{1},h_{2},h_{3})\in\mathbb{D}_0
\]
and $\phi_4$ is Hadamard differentiable at $\theta\in\mathbb{D}_{\phi}$
tangentially to $\mathbb{D}_{0}$ with derivative $\phi_{\theta,4}'\colon\mathbb{D}_{0}\to\mathbb{R}$
given by
\[
\phi_{\theta,4}'(h):=E[h_2(f_{Y,2})]+h_{3}(\psi(\theta_1,\theta_2)),\quad h=(h_{1},h_{2},h_{3})\in\mathbb{D}_0.
\]

{\bf Step 3.} Here, we complete the proof. Denote $\mathcal C := \{(v,x,y,z)\in\mathbb R^4\colon x>y^2\}$
and let $c\colon\mathcal C\to\mathbb{R}$ be the function defined by $c(v,x,y,z) := (v - yz)/(x - y^2)$ for all $(v,x,y,z)\in\mathcal C$, with $c_1'(v,x,y,z)$, $c_2'(v,x,y,z)$, $c_3'(v,x,y,z)$, and $c_4'(v,x,y,z)$ denoting the corresponding partial derivatives. Then observe that Assumption \ref{as: variable nu} implies that $\phi_{2}(\theta) - \phi_3(\theta)^2 = E[R_X(X)^2] - (E[R_X(X)])^2>0$. Thus, for
the same sequences $\{t_{n}\}_{n\geq1}\subset\mathbb{R}$ and $\{h_{n}\}_{n\geq1}\subset\ell^{\infty}(\mathcal K)$
as those in Step 1, $\phi_{2}(\theta+t_{n}h_{n}) - \phi_{3}(\theta+t_{n}h_{n})^2>0$ for sufficiently
large $n$ by Step 2, and so 
\begin{align*}
\phi(\theta+t_{n}h_{n}) & =c(\phi_{1}(\theta+t_{n}h_{n}),\dots,\phi_{4}(\theta+t_{n}h_{n}))\\
 & =c(\phi_{1}(\theta)+t_{n}\phi_{\theta,1}'(h)+o(t_{n}),\dots,\phi_{4}(\theta)+t_{n}\phi_{\theta,4}'(h)+o(t_{n}))\\
 & = \phi(\theta) + c'_{1}(\phi_{1}(\theta),\dots,\phi_{4}(\theta))t_{n}\phi_{\theta,1}'(h)+\dots+c_{4}'(\phi_{1}(\theta),\dots,\phi_{4}(\theta))t_{n}\phi_{\theta,4}'(h)+o(t_{n}),
\end{align*}
where the second line follows from Steps 1 and 2 and the third from
the Taylor theorem. Substituting here the derivatives of the function $c$, we obtain
\[
\left|\frac{\phi(\theta+t_{n}h_{n})-\phi(\theta)}{t_{n}}-\phi_{\theta}'(h)\right|\to0,\quad\text{as }n\to\infty.
\]
In addition, $\phi_{\theta}'$ is linear by the properties of Lebesgue
integrals and also bounded by the same arguments as those used above. Thus, $\phi$ is Hadamard differentiable at $\theta\in\mathbb{D}_{\phi}$ tangentially to $\mathbb{D}_{0}$ with derivative $\phi_{\theta}'$. The asserted claim follows.
\end{proof}

\end{appendix}

\end{document}